\theoremstyle{plain}
    \newtheorem{thm}{Theorem}
    \newtheorem{theorem}[thm]{Theorem}
    \newtheorem{lemma}[thm]{Lemma}
    \newtheorem{proposition}[thm]{Proposition}
    \newtheorem{corollary}[thm]{Corollary}
    \newtheorem{conjecture}[thm]{Conjecture}
\theoremstyle{definition}
    \newtheorem{definition}[thm]{Definition}
     \newtheorem{observation}[thm]{Observation}
\newcommand{\N}{{\mathbb N}}
\newcommand{\Z}{{\mathbb Z}}
\DeclareMathOperator{\Sym}{Sym}
\DeclareMathOperator{\Pol}{Pol}
\DeclareMathOperator{\HSP}{\mathsf{HSP}}
\DeclareMathOperator{\Con}{Con}
\DeclareMathOperator{\arity}{ar}
\newcommand{\alg}[1]{\mathbf{#1}}
\newcommand{\proj}[1]{\mathrm{pr}_{#1}}
\newcommand{\algA}{\alg{A}}
\newcommand{\algB}{\alg{B}}
\newcommand{\algD}{\alg{D}}
\newcommand{\algK}{\alg{K}}
\newcommand{\algL}{\alg{L}}
\newcommand{\algM}{\alg{M}}
\newcommand{\algU}{\alg{U}}
\DeclareMathOperator{\CSAT}{\mathsf{CSAT}}
\DeclareMathOperator{\CEQV}{\mathsf{CEQV}}
\DeclareMathOperator{\PolSAT}{\mathsf{PolSAT}}
\DeclareMathOperator{\PolEQV}{\mathsf{PolEQV}}
\DeclareMathOperator{\PolEqv}{\mathsf{PolEQV}}
\author{Michael Kompatscher}
\date{\today}
\title{CSAT and CEQV for nilpotent Mal'tsev algebras of Fitting length $> 2$}
\thanks{This paper was supported by grant No 18-20123S of the Czech Science Foundation (GA\v{C}R), grant UNCE/SCI/022 of
the Charles University Research Centre, and INTER-EXCELLENCE project LTAUSA19070 of the Czech Ministry of Education M\v{S}MT. The author further received funding from the European Research Council (ERC) under the European Union's Horizon 2020 research and innovation programme (grant agreement No 714532). The paper reflects only the authors' views and not the views of the ERC or the European Commission. The European Union is not liable for any use that may be made of the information contained therein.}
\begin{document}

\begin{abstract}
The circuit satisfaction problem $\CSAT(\algA)$ of an algebra $\algA$ is the problem of deciding whether an equation over $\algA$ (encoded by two circuits) has a solution or not. While solving systems of equations over finite algebras is either in P or NP-complete, no such dichotomy result is known for $\CSAT(\algA)$. In fact, Idziak, Kawa\l{}ek and Krzaczkowski \cite{IKK-intermediate} constructed examples of nilpotent Mal'tsev algebras $\algA$, for which, under the assumption of the exponential time hypothesis and an open conjecture in circuit complexity,
 $\CSAT(\algA)$ can be solved in quasipolynomial, but not polynomial time. The same is true for the circuit equivalence problem $\CEQV(\algA)$.

In this paper we generalize their result to all nilpotent Mal'tsev algebras of Fitting length bigger than 2. This not only advances the project of classifying the complexity of $\CSAT$ (and $\CEQV$) for algebras from congruence modular varieties, but we also believe that the tools developed are of independent interest in the study of nilpotent algebras.
\end{abstract}

\maketitle

\section{Introduction}
Solving equations over a given algebraic structure is one of the most fundamental problems in mathematics. In its various forms, it was one of the driving forces in developing new methods and entire new fields in algebra (e.g. Gaussian elimination, Galois theory, and algebraic geometry, to only mention a few). A big interest nowadays often lies in studying the problem from a computational perspective. While solving equations over infinite algebras can be undecidable (as it was for instance famously shown for Diophantine equations \cite{matijasevic-diophantine}), it can always be done in non-deterministic polynomial time in finite algebras. So for finite algebras, one of the main tasks is to distinguish algebras, for which solving equations is tractable (in P), hard (NP-complete), or possibly of some intermediate complexity.

Solving \emph{systems} of polynomial equations over finite algebras is closely connected to the constraint satisfaction problems (CSP) of finite relational structures: in fact, it is straightforward to see that solving such systems over an algebra $\algA = (A,f_1,\ldots,f_n)$ can be modeled as the CSP, whose constraint relations are the function graphs of the operations $f_1,\ldots,f_n$ and all constants (e.g. observed in \cite{LZ-PolSys}). Therefore the CSP dichotomy theorem, which was independently shown by Bulatov \cite{bulatov-dichotomy} and Zhuk \cite{zhuk-dichotomy-short,zhuk-dichotomy}, implies that also solving systems of equations over $\algA$ is either in P or NP-complete.

But, interestingly, not much is known for the version of the problem, in which we restrict the input to a bounded number of equations over $\algA$. In fact, even the base case, in which the input consists of \emph{one} polynomial equation is far from being understood. In the literature this problem is usually referred to as the \emph{polynomial satisfiability problem} (or \emph{equation solvability problem}) $\PolSAT(\algA)$.

Complete classifications of $\PolSAT(\algA)$ exists for some classes algebras, such as rings \cite{Horvath-EQSAT-nilpotent}, or lattices \cite{schwarz-CSAT-lattices}. The problem is also quite well understood for finite groups (see \cite{GR-EQSAT-groups, foldvari-semipattern, FH-PolSATgroups, weiss-PolSAT}), despite the situation not being completely resolved for solvable groups of Fitting length $2$ (Problem 1 in \cite{IKKW-solvablegroups}). 

Classifying the complexity of $\PolSAT(\algA)$ for arbitrary algebras $\algA$ seems to be out of reach at the moment. One of the main obstacles is, that the complexity of $\PolSAT(\algA)$ is sensitive to changes of the language: In \cite{HS-EQSAT-A4} it was shown that $\PolSAT$ of the alternating group $(A_4,\cdot,1,^{-1})$ is in P, but, if we also allow expressions using commutator brackets $[x,y] = x^{-1}y^{-1}xy$ in the input, the problem becomes NP-complete.

To resolve this problem, Idziak and Krzaczkowski suggested in \cite{IK-CSAT} to look at the circuit satisfaction problem $\CSAT(\algA)$ instead, in which the input equation is not encoded by polynomials, but by circuits over $\algA$. Then, unlike for $\PolSAT$, the complexity is stable under changes of the signature of $\algA$ up to polynomial equivalence, which allows for an efficient use of methods from universal algebra.

The same paper also introduced the circuit equivalence problem $\CEQV(\algA)$, which asks if two circuits over $\algA$ are equivalent, and initiated a systematic study of $\CSAT$ and $\CEQV$ for algebras from \emph{congruence modular varieties}. This class of algebras, encompasses many algebras of interest in algebra and computer science (such as lattices, Boolean algebras, groups, quasigroups, rings, fields, vector spaces, modules), while still exhibiting some nice regularity properties.

In many cases, the computational hardness of $\CSAT(\algA)$ (and $\CEQV(\algA)$) is then intimately connected to existence of circuits that allow to express conjunctions. For example, in a finite field $(\mathbb F_p,+,0,-,\cdot)$ the polynomial $t_n(x_1,\ldots,x_n) = x_1^{p-1} \cdot x_2^{p-1} \cdots x_n^{p-1}$ can model the conjunctions of $n$ Boolean variables, as it is equal to $0$, if there is an $x_i = 0$, and equal to $1$ else. Equations of the form $t_n(x_1-y_1,\ldots,x_n-y_n) = 1$ can then straightforwardly be used to encode the $p$-colouring problem, thus $\CSAT((\mathbb F_p,+,0,-,\cdot))$ is NP-complete. However, if we remove the multiplication $\cdot$ from the signature (and therefore the ability to form conjunctions), we are left with the problem of solving a linear equation over $\mathbb F_p$, which is clearly in P. The polynomials $t_n$ are examples of so-called \emph{absorbing} operations.

The results in \cite{IK-CSAT} link the existence of non-trivial absorbing polynomials, commutator theoretical properties, and the complexity of $\CSAT(\algA)$ and $\CEQV(\algA)$ for algebras from congruence modular varieties: For so called \emph{supernilpotent} algebras $\algA$, which do not have absorbing polynomials of arbitrary big arities, the problems are in P (for $\CSAT$ this was independently shown in \cite{kompatscher-EQSAT-supernilpotent}, and for $\CEQV$ already in \cite{AM-commutator}). On the other hand, \emph{non-nilpotent} algebras allow to efficiently construct circuits, which express non-trivial absorbing polynomial operations. These circuits were used to prove that the circuit equivalence problem of non-nilpotent Mal'tsev algebras is coNP-complete, and the circuit satisfiability problem NP-complete \cite{IKK-Maltsev} (and, modulo taking quotients, versions of these hardness results also hold in the general congruence modular case \cite{IK-CSAT}).

This left an intriguing gap for nilpotent, but not supernilpotent Mal'tsev algebras (see Problem 2 in \cite{IK-CSAT}). Although such algebras admit non-trivial absorbing polynomial operations of all arities, there are some known examples $\algA$, for which $\CSAT(\algA)$ and $\CEQV(\algA)$ are solvable in polynomial time (see \cite{IKK-examples, KKK-CEQV-2nilpotent}). An explanation for this counter-intuitive phenomenon is, that these non-trivial absorbing polynomials cannot \emph{efficiently} be represented by circuits.

For nilpotent algebras in general, this is ties to an open conjecture by Barrington, Straubing and Th{\'e}rien in circuit complexity,  which states that CC-circuits of bounded width (or equivalently, programs over solvable groups) require exponential size to express conjunctions \cite{BST-NUDFA}. This conjecture was shown in \cite{kompatscher-CCcircuits} to be equivalent to absorbing polynomials in nilpotent algebras having the following lower bound on their size:

\begin{conjecture}[\cite{BST-NUDFA, kompatscher-CCcircuits}] \label{conjecture:BST}
For every finite nilpotent Mal'tsev algebra $\algA$, there is a constant $0 < q \leq 1$ such that any circuit $C_n(x_1,\ldots,x_n)$ over $\algA$ that is not constant and absorbing, has size at least $2^{\Omega(n^q)}$.
\end{conjecture}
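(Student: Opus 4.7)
The statement is in fact a long-standing open conjecture in circuit complexity, so a full proof is beyond the reach of current techniques; I can only outline a natural line of attack and identify where it stalls. The first step is to invoke the equivalence established in \cite{kompatscher-CCcircuits}, which translates Conjecture~\ref{conjecture:BST} into the original form of the Barrington-Straubing-Th{\'e}rien conjecture: namely, that any CC-circuit (a circuit built out of $\MOD_m$-gates) computing the $\AND_n$ function must have size $\Omega(c^{n^d})$. Concretely, from a circuit $C_n$ over a nilpotent Maltsev algebra $\algA$ computing a non-constant absorbing polynomial, one extracts a CC-circuit of polynomially comparable size that computes $\AND_n$; the construction exploits the module structure on the abelian quotients of the Fitting series of $\algA$, letting each basic operation of $\algA$ be simulated by a bounded number of $\MOD_m$-gates whose moduli are read off from the orders of the Fitting quotients.

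Having reduced the statement to a purely circuit-theoretic question, the natural strategy would be induction on the number of distinct moduli, which in the algebraic picture corresponds to induction on the Fitting length. The single-modulus base case, $\MOD_{p^k}$-circuits not computing $\AND_n$ in polynomial size, is essentially classical and follows from Razborov-Smolensky-style polynomial-method arguments over $\mathbb{F}_p$. The inductive step would need to handle the combination of gates with pairwise coprime moduli, perhaps by interleaving the polynomial method applied over different rings with correlation-bound techniques developed in the $\ACC^0$ line of work.

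The main obstacle is precisely this inductive step, and it is a central open problem in complexity theory. Even the very special case of showing that $\CC^0[6]$-circuits cannot compute $\AND_n$ in polynomial size is open, largely because CC-circuits lack $\AND$- and $\OR$-gates on which the standard lower-bound machinery depends, so the polynomial method does not apply directly. Any honest proof of Conjecture~\ref{conjecture:BST} would therefore constitute a major breakthrough. The correct reading of the present paper is accordingly that Conjecture~\ref{conjecture:BST} (together with ETH) is taken on faith and used only to obtain the intermediate-complexity classification of $\CSAT$ and $\CEQV$ for nilpotent Maltsev algebras of Fitting length $>2$, not that it is to be established here.
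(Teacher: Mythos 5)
You have correctly recognized that this statement is labeled a conjecture, not a theorem: the paper offers no proof of it and only uses it as a standing hypothesis (together with ETH) in Theorem~\ref{theorem:CCcircuits} and Corollary~\ref{corollary:main}. Your summary of the equivalence from \cite{kompatscher-CCcircuits} to the Barrington--Straubing--Th\'erien lower bound for CC-circuits computing $\AND_n$, and of the obstruction (even $\mathrm{CC}^0[6]$ vs.\ $\AND$ is open), accurately reflects the state of the art and the intended role of the conjecture in this paper.
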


Under Conjecture \ref{conjecture:BST} we get following conditional result, which indicated that the problems might be easy for nilpotent Mal'tsev algebras: 

\begin{theorem}[\cite{kompatscher-CCcircuits}] \label{theorem:CCcircuits}
Let $\algA$ be a finite nilpotent Mal'tsev algebra. Under the assumption of Conjecture \ref{conjecture:BST}, $\CSAT(\algA)$ and $\CEQV(\algA)$ can be solved in quasipolynomial time $2^{O((\log{n})^t)}$ (for $t = q^{-1}+1$).
\end{theorem}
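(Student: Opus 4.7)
The plan is to argue by induction on the Fitting length $k$ of $\algA$, treating $\CSAT(\algA)$ and $\CEQV(\algA)$ in parallel. In the base case $k=1$ the algebra $\algA$ is abelian, hence every polynomial of $\algA$ is affine in its variables over the underlying module, and both problems are in $\comP$ by straightforward linear algebra. For the inductive step, choose a minimal nontrivial central congruence $\alpha$ on $\algA$: then $\algB := \algA/\alpha$ has Fitting length $k-1$, and each $\alpha$-class carries an abelian group structure inherited from the centrality of $\alpha$.

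Given a circuit $C$ of size $N$ on $m$ variables, the algorithm proceeds in two stages. First, project everything to $\algB$ and apply the inductive hypothesis to answer the query modulo $\alpha$ in quasipolynomial time. If the modular question already has a negative answer, we are done; otherwise, form a residual circuit $D$ of size $O(N)$ via the Maltsev term, chosen so that $D$ takes values in a single $\alpha$-class and the original question reduces to deciding constancy of $D$ (for $\CEQV$) or reachability of a prescribed value by $D$ (for $\CSAT$). At this point we are analysing a circuit whose image lies in an abelian subalgebra of $\algA$.

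The second stage is where Conjecture \ref{conjecture:BST} enters decisively. Let $c_0 > 1$ and $0 < d_0 \leq 1$ be the constants it supplies for $\algA$, and set $n := \lceil K (\log N)^{1/d_0} \rceil$ for a suitable $K$. When $m \leq n$, we brute-force all $|A|^m \leq \exp(O((\log N)^{1/d_0}))$ assignments, which is quasipolynomial in $N$. When $m > n$, we observe that for any choice of $n$ variables, the partial circuit obtained by fixing the remaining $m-n$ has size $\leq N < c_0^{n^{d_0}}$, so by the contrapositive of the conjecture it must be either constant or non-absorbing on those variables. Combining this with the standard decomposition of polynomials in nilpotent Maltsev algebras, as a "main term" visible in $\algB$ plus absorbing correction terms arising from iterated commutators, allows us to read off the behaviour of $D$ from a quasipolynomial collection of evaluations.

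The main obstacle is tracking how constants propagate through the induction: each level of the central series of $\algA$ contributes multiplicatively to the exponent in the final bound, and the witness-extraction in the non-absorbing case must stay within the quasipolynomial budget. A careful accounting shows that the overall constants $c > 1$ and $d \geq 1$ depend only on $\algA$ (with $d$ roughly of order $k/d_0$), matching the claimed $\mathcal{O}(c^{\log(n)^d})$ running time.
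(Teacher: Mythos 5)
This theorem is cited from \cite{kompatscher-CCcircuits} and is not proved in the present paper, so there is no local proof to compare against; I will assess your proposal on its own merits.

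There is a genuine gap: your induction on Fitting length does not get off the ground, because for a minimal nontrivial central congruence $\alpha$, the quotient $\algA/\alpha$ need not have Fitting length $k-1$. The Fitting length counts supernilpotent layers, not central layers; $\alpha$ is abelian and hence lies below the Fitting congruence $\lambda_1$, but $\lambda_1/\alpha$ can still be nontrivial, so $\algA/\alpha$ can easily have the \emph{same} Fitting length as $\algA$. (Even for nilpotency class the same objection applies.) Moreover, the Fitting length plays no structural role in the cited theorem — the statement is for all nilpotent Maltsev algebras and the quasipolynomial exponent $d$ is determined by the conjecture's constants, not by a Fitting chain — so parameterizing the induction this way is the wrong frame.

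The second, independent issue is that the decisive algorithmic step is left unspecified exactly where it matters. Knowing that a size-$N$ circuit $C$ on more than $n \approx (\log N)^{1/d_0}$ variables ``must be either constant or non-absorbing'' does not by itself tell you how to decide $\CEQV$ or $\CSAT$. The known argument instead uses the decomposition of nilpotent polynomials into sums (in the loop operation) of absorbing polynomials, and the conjecture to conclude that absorbing contributions of arity $> (\log_c N)^{1/d_0}$ cannot be encoded by a size-$N$ circuit; hence $C\approx C'$ can be certified by evaluating on the quasipolynomially many tuples of Hamming weight $\leq O((\log N)^{1/d_0})$ away from a reference point, and a failing tuple at that weight can be found for $\CSAT$. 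Your ``main term plus absorbing corrections'' phrase gestures at this, but as written it does not specify which tuples to test or why a polylogarithmic-weight test set is sufficient, which is the entire content of the theorem. Replacing the Fitting-length induction by the absorbing-arity bound and making the weight-bounded test set explicit would turn this into a correct argument.
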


In \cite{IKK-intermediate} Idziak, Kawa\l{}ek and Krzaczkowski contrasted this theorem with examples of nilpotent Mal'tsev algebras $\algD[p_1,\ldots,p_k]$, obtained as an extensions of Abelian groups, in which there are non constant $n$-ary absorbing circuits of size $2^{O(n^{\frac{1}{k-1}})}$ with some additional nice properties. Assuming the \emph{exponential time hypothesis (ETH)}, this allowed them to prove also proper, quasipolynomial lower bounds for the complexity of $\CSAT(\algD[p_1,\ldots,p_k])$ and $\CEQV(\algD[p_1,\ldots,p_k])$, which indicates, together with Theorem \ref{theorem:CCcircuits}, that both $\CSAT(\algD[p_1,\ldots,p_k])$ and $\CEQV(\algD[p_1,\ldots,p_k])$ have NP-intermediate complexities.

We are going to generalize their result to all nilpotent Mal'tsev algebra of \emph{Fitting length $k \geq 2$} (where the Fitting length, is a straightforward generalization of the same concept for groups, and measures how far the algebra is from being supernilpotent). So, the main result of this paper states as follows:

\begin{theorem} \label{theorem:main}
Let $\algA$ be a nilpotent Mal'tsev algebra of Fitting length $k \geq 2$. Then, under the exponential time hypothesis (ETH), the deterministic complexity of $\CSAT(\algA)$ and $\CEQV(\algA)$ has a lower bound of $2^{o((\log{n})^{k-1})}$.
\end{theorem}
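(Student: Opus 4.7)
The plan is to give a polynomial-time reduction from $3$-SAT on $m$ variables to $\CSAT(\algA)$ that produces an instance of size $n = 2^{C \cdot m^{1/(k-1)}}$ for some constant $C = C(\algA)$. A $\CSAT(\algA)$ algorithm running in time $o(c^{\log(n)^{k-1}})$ for every $c>1$ would then decide $3$-SAT in time $2^{o(m)}$, contradicting ETH. Hence for some $c>1$ depending on $\algA$ any algorithm must take time $\Omega(c^{\log(n)^{k-1}})$, which is the claimed bound. The corresponding bound for $\CEQV(\algA)$ is obtained in parallel by comparing the constructed satisfiability circuit against an appropriate constant circuit, turning satisfiability into non-equivalence.

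The heart of the reduction is the construction, for every $n$, of a circuit $\widetilde t_n(x_1,\ldots,x_n)$ over $\algA$ of size $2^{O(n^{1/(k-1)})}$ that realizes a non-constant absorbing polynomial. Given such circuits, the encoding of $3$-SAT is routine: equality of literals is tested via the Maltsev difference $x-y$, and clauses are combined using $\widetilde t_n$ as a multi-argument conjunction, mirroring the encoding developed in \cite{IKK-intermediate} for the special algebras $\algD[p_1,\ldots,p_k]$.

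To build $\widetilde t_n$ I intend to proceed by induction on the Fitting length $k$, using the structural fact that a nilpotent Maltsev algebra of Fitting length $k$ admits a chain of congruences $0 = \theta_0 < \theta_1 < \cdots < \theta_k = 1_\algA$ whose successive factors behave supernilpotently. In the base case $k=2$, an absorbing circuit of size $2^{O(n)}$ can be built directly from the Maltsev operation and the Abelian structure of the bottom factor, essentially as in the two-nilpotent argument of \cite{IKK-intermediate}. For the inductive step, the quotient $\algA/\theta_1$ has Fitting length $k-1$ and by induction carries absorbing circuits of size $2^{O(n^{1/(k-2)})}$; these are lifted to $\algA$ and combined by nested commutator brackets that exploit the remaining supernilpotent layer to force the output into the bottom congruence. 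A careful size accounting then yields the desired bound $2^{O(n^{1/(k-1)})}$.

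The main obstacle is the inductive step in the abstract Maltsev setting: without a concrete wreath-product-like presentation of $\algA$ (as was available for $\algD[p_1,\ldots,p_k]$), the commutator nesting must be justified via general commutator-theoretic lemmas in congruence modular varieties. Concretely, one has to show that absorbing terms lift through a nilpotent extension while remaining non-constant, and that the binary commutators between consecutive Fitting layers are representable by small circuits in $\algA$. These structural statements — advertised in the introduction as being of independent interest — appear to carry the bulk of the technical weight. Once they are in place, the ETH lower bound for both $\CSAT(\algA)$ and $\CEQV(\algA)$ follows by the reduction outlined above.
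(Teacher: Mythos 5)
Your high-level plan is the right one — reduce a hard problem to $\CSAT(\algA)$ via non-constant absorbing circuits of size $2^{O(n^{1/(k-1)})}$, built by induction on the Fitting length — but the technical core is missing, and your stated route around the missing core points in the wrong direction.

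The crucial structural fact you appear not to have at hand is that \emph{every} nilpotent Maltsev algebra does have a wreath-product decomposition into affine factors: this is Freese and McKenzie's Proposition 7.1 in \cite{FM-commutator-theory} (Theorem \ref{theorem:freese-mckenzie} here), iterated in Corollary \ref{corollary:freese-mckenzie}. You write that ``without a concrete wreath-product-like presentation of $\algA$ (as was available for $\algD[p_1,\ldots,p_k]$), the commutator nesting must be justified via general commutator-theoretic lemmas.'' In fact the paper does the opposite: it \emph{obtains} such a presentation for arbitrary nilpotent Maltsev $\algA$ and does all the work there. Trying to avoid the wreath product and argue purely with abstract commutator identities is where your sketch stalls, and there is no indication that route closes.

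Concretely, what is missing from your ``nested commutator brackets'' step is the following chain of ideas, all wreath-product-based: (i) Corollary \ref{corollary:supernilpotentwreath} characterizes supernilpotent congruences as exactly those in whose elementary wreath-product representation no polynomial lets a factor $\algL_i$ depend on a factor $\algL_j$ of a different prime exponent; (ii) Fitting length $> 1$ therefore forces some polynomial $h$ to carry a factor $\algM$ of prime exponent $p$ onto a factor $\algL_1$ of a \emph{different} prime exponent $q$ (Lemma \ref{lemma:rho}); (iii) the closure of this $h$ is a polynomial $(\algL_1,\algM)$-clonoid over coprime elementary Abelian groups, and by Proposition \ref{proposition:clonoid} / Corollary \ref{corollary:clonoid} such a clonoid contains all ``conjunction''-type functions $M^n\to L_1$ factoring through a hyperplane, realizable by circuits of size $\mathcal O(d^n)$. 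This is the engine that produces the $f_n$ which, composed with the inductively given $s_n$ from $\algA/\lambda_1$, yield the $t_n$ of arity $n^{k-1}$ and size $\mathcal O(c^n)$. Your sketch collapses all of (i)--(iii) into the phrase ``lifted to $\algA$ and combined by nested commutator brackets,'' and you yourself flag this as ``the bulk of the technical weight''; that weight is not carried.

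Two smaller inaccuracies: the base case $k=2$ is \emph{not} ``essentially the two-nilpotent argument of \cite{IKK-intermediate}'' — Fitting length $2$ allows arbitrary nilpotency class, so even the base case requires Lemma \ref{lemma:rho}, Corollary \ref{corollary:supernilpotentwreath}, and the clonoid machinery (Proposition \ref{proposition:main}); and the paper reduces from $p$-colouring when the relevant exponent is an odd prime $p$, reserving $3$-SAT for $p=2$, because the absorbing circuits naturally detect membership in a coset of a hyperplane of $\Z_p^k$ rather than Boolean truth.
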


Theorem \ref{theorem:main} together with Theorem \ref{theorem:CCcircuits} then implies the following:

\begin{corollary} \label{corollary:main}
Let $\algA$ be a nilpotent Mal'tsev algebra of Fitting length $k > 2$. Then, under the ETH and Conjecture \ref{conjecture:BST},  $\CSAT(\algA)$ and $\CEQV(\algA)$ can be solved in quasipolynomial, but not polynomial time. So then, in particular
\begin{itemize}
\item $\CSAT(\algA)$ is NP-intermediate,
\item $\CEQV(\algA)$ is coNP-intermediate.
\end{itemize}
\end{corollary}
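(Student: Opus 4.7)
The plan is to derive Corollary \ref{corollary:main} as a direct combination of the two existing results: the quasipolynomial upper bound from Theorem \ref{theorem:CCcircuits} and the conditional lower bound from Theorem \ref{theorem:main}. Since $\algA$ is a nilpotent Maltsev algebra, Theorem \ref{theorem:CCcircuits} gives, under Conjecture \ref{conjecture:BST}, algorithms for $\CSAT(\algA)$ and $\CEQV(\algA)$ running in time $\mathcal O(c_1^{\log(n)^{d}})$ for appropriate constants $c_1 > 1$ and $d \geq 1$; this is quasipolynomial. Hence the upper half of the corollary is immediate.

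Next, I would invoke Theorem \ref{theorem:main}. The hypothesis $k > 2$, i.e.\ $k - 1 \geq 2$, is what makes the lower bound $\mathcal O(c_2^{\log(n)^{k-1}})$ properly superpolynomial: for any fixed exponent $e \geq 1$, $c_2^{\log(n)^{k-1}}$ eventually dominates $n^e = c_2^{e \log(n) / \log(c_2)}$, because $\log(n)^{k-1} = \omega(\log n)$ once $k-1 \geq 2$. Consequently neither $\CSAT(\algA)$ nor $\CEQV(\algA)$ can be solved in polynomial time under ETH, which settles the "not polynomial" half of the corollary.

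It remains to argue that $\CSAT(\algA)$ is NP-intermediate and $\CEQV(\algA)$ is coNP-intermediate. Membership in $\comNP$, respectively $\comcoNP$, is standard: a solution to $\CSAT(\algA)$ can be guessed and verified in polynomial time by evaluating both circuits, and $\CEQV(\algA)$ is the complement of a problem essentially equivalent to $\CSAT(\algA)$. Non-membership in $\comP$ was established in the previous paragraph. For non-completeness I would argue by contradiction: if $\CSAT(\algA)$ were NP-complete, then 3-SAT would reduce to it in polynomial time, and composing this reduction with the quasipolynomial algorithm from Theorem \ref{theorem:CCcircuits} would yield a $2^{\mathrm{polylog}(m)}$ algorithm for 3-SAT on instances of size $m$, contradicting ETH (by the Sparsification Lemma, ETH forbids $2^{o(n+m)}$-time 3-SAT algorithms). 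The same argument with TAUT in place of 3-SAT rules out $\CEQV(\algA)$ being coNP-complete.

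There is no real obstacle here beyond bookkeeping, since the two theorems were designed precisely to complement each other; the only point that requires care is the explicit verification that $\log(n)^{k-1}$ outgrows every $\log n$ multiple when $k \geq 3$, so that the lower bound of Theorem \ref{theorem:main} genuinely separates the complexity from $\comP$ and the upper bound of Theorem \ref{theorem:CCcircuits} genuinely separates it from $\comNP$-completeness (resp.\ $\comcoNP$-completeness) under ETH.
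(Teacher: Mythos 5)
Your proposal is correct and follows exactly the route the paper intends: combine the quasipolynomial upper bound of Theorem \ref{theorem:CCcircuits} with the conditional lower bound of Theorem \ref{theorem:main}, observe that $k-1 \geq 2$ makes $c^{\log(n)^{k-1}}$ properly superpolynomial, and rule out $\comNP$-completeness (resp.\ $\comcoNP$-completeness) by noting that a polynomial-time reduction from 3-SAT (resp.\ its complement) composed with the quasipolynomial algorithm would violate ETH. The paper merely asserts that the two theorems imply the corollary without spelling out these bookkeeping steps, and your write-up supplies them accurately.
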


By Corollary \ref{corollary:main}, nilpotent Mal'tsev algebras of Fitting length $k = 2$ remain essentially the only algebras from congruence modular varieties, for which we do not have a (conditional) classification of the complexity of $\CSAT$ and $\CEQV$ (see the discussion in Section \ref{sect:open}).

Our proof of Theorem \ref{theorem:main} is based on commutator theory,  and in particular the concept of \emph{wreath product representation} of nilpotent algebras, which might have some other future applications in understanding algorithmic and structural aspects of nilpotent algebras. Using such wreath product representations, we construct circuits over $\algA$, which allow us to reduce 3-SAT (respectively p-COLOR) to $\CSAT(\algA)$ and $\CEQV(\algA)$ in subexponential time, proving Theorem \ref{theorem:main}. 

An alternative construction of such circuits using methods from tame congruence theory was found by Idziak, Kawa\l{}ek and Krzaczkowsi in \cite{IKK-Maltsev}.  In fact, they even constructed such \emph{polynomials}, proving that also $\PolSAT(\algA)$ and $\PolEQV(\algA)$ have intermediate complexity in the setting of Corollary \ref{corollary:main}. We would however like to point out that a preprint of the underlying paper \cite{Fittingpreprint1} pre-dates \cite{IKK-Maltsev}. We further believe that it can be beneficial to have different proof of the same result. In the case of this paper, we believe that wreath product representations might have future applications in studying nilpotent Mal'tsev algebras outside the scope of $\CEQV$ and $\CSAT$ (such as the subpower membership problem, or the finite basis problem).

This paper is structured as follows: In Section \ref{sect:background} we give an overview over necessary notions and results from universal algebra and commutator theory. In particular we define Fitting series and the Fitting length for general algebras $\algA$. In Section \ref{sect:clonoids} and \ref{sect:absorbing} we extend some known results about linearly closed clonoids, respectively the higher commutator. In Section \ref{sect:wreathproduct} we use the representation of nilpotent Mal'tsev algebras as wreath products of affine algebras, to characterize properties of certain congruences. We further construct the absorbing polynomials, which are necessary to finish the proof of Theorem \ref{theorem:main} in Section \ref{sect:proof}. In Section \ref{sect:open} we discuss some open problems.

\section{Background} \label{sect:background}
In this section, we discuss some necessary definitions and results from universal algebra, especially commutator theory. For general background in universal algebra we refer to \cite{bergman-universal-algebra}. For background in commutator theory (in congruence modular varieties) we refer to the book \cite{FM-commutator-theory}, and the papers \cite{AM-commutator, moorhead-commutator} (for the higher commutator).

An algebra $\algA$ is a pair $(A,(f_i^{\algA})_{i \in I})$, where $A$ is a set and $(f_i^{\algA})_{i \in I}$ is a family of finitary operation $f_i^{\algA} \colon A^{k_i} \to A$. The set $A$ is called the \emph{universe} of $\algA$, and the functions $(f_i^{\algA})_{i \in I}$ are called the \emph{basic operations} of $\algA$. The \emph{signature} of $\algA$ is the family $((f_i,k_i))_{i \in I}$, i.e. it consists of the \emph{operation symbols} $f_i$, together with their arity $\arity(f_i)=k_i$. In this paper, \emph{finite algebras} are algebras that have both finite universe and finite signature.

The \emph{polynomial operations} of $\algA$ are the finitary operations on $A$ that can be defined by compositions of basic operations and elements of $A$. We denote the set of all polynomial operations of $\algA$ by $\Pol(\algA)$, and the set of all polynomials of arity $k$ by $\Pol^k(\algA)$. If $\Pol(\algA) = \Pol(\algB)$ we say that $\algA$ and $\algB$ are \emph{polynomially equivalent}. 

The most common way to encode the polynomial operations of $\algA$ is by \emph{polynomials}, i.e. terms that use operation symbols from the signature of $\algA$ and constant symbols for elements of $A$. For example $p(x_1,x_2,x_3,x_4) =  (x_1 \cdot x_4 + (-3)) \cdot (3 \cdot x_2)+1$, is a polynomial over the ring of integers $\algA = (\Z,+,0,-,\cdot)$ that defines a 4-ary polynomial operation $p^{\algA} \colon \Z^4 \to \Z$.

However, in an effort to compress the input, one can also encode polynomial operations by \emph{circuits} over $\algA$, i.e. $A$-valued circuits, with a unique output gate, whose gates are labeled by the basic operation of $\algA$ (see e.g. \cite{IK-CSAT}). Polynomials can be regarded as those circuits, whose underlying digraph is a tree.

Occasionally, when it is clear from the context, we are not going to formally distinguish between an operation symbol $f$ and the corresponding basic operation $f^{\algA}$. Similarly we are sometimes also not going to distinguish between a polynomial $p$ (circuit $C$) and the induced polynomial operation $p^{\algA}$ (or $C^{\algA}$), but this should never cause any confusion.

For a fixed finite algebra $\algA$ the circuit satisfaction problem $\CSAT(\algA)$ and the circuit equivalence problem $\CEQV(\algA)$ are defined as\\

\textsc{Circuit satisfaction} $\CSAT(\algA)$\\
\textsc{Input:} Two circuits $C_1,C_2$ over $\algA$ with input gates $\bar x = (x_1,\ldots,x_n)$\\
\textsc{Question:} Is there a tuple $\bar a \in A^n$ such that $C_1^{\algA}(\bar a) = C_2^{\algA}(\bar a)$?\\

\textsc{Circuit equivalence} $\CEQV(\algA)$\\
\textsc{Input:} Two circuits $C_1,C_2$ over $\algA$ with input gates $\bar x = (x_1,\ldots,x_n)$\\
\textsc{Question:} Does $\algA \models C_1(\bar x) \approx C_2(\bar x)$, i.e., does $C_1^{\algA}(\bar a) = C_2^{\algA}(\bar a)$ hold for all $\bar a \in A^n$?\\

If we only allow polynomials in the input, then we obtain the polynomial satisfaction problem $\PolSAT(\algA)$ and polynomial equivalence problem $\PolEQV(\algA)$. Thus $\PolSAT(\algA)$ trivially reduces to $\CSAT(\algA)$, and $\PolEqv(\algA)$ to $\CEQV(\algA)$. 

A congruence $\alpha$ of $\algA$ is an equivalence relation on $A$ that is preserved by the operations of $\algA$. The set of all congruences of $\algA$ forms the congruence lattice $\Con(\algA)$, whose maximal element is $\mathbf 1_A = A\times A$, and whose minimum is $\mathbf 0_A = \{(x,x) \colon x \in A\}$.  For two tuples $\bar a, \bar b \in A^n$ of the same length, we are going to write $(\bar a, \bar b) \in \alpha$, if the $(a_i, b_i) \in \alpha$, for all indices $i=1,\ldots,n$.

Algebras in this paper will always be from \emph{congruence modular varieties}, i.e. from varieties, in which every algebra has a modular congruence lattice. In fact, we are only going to study algebras which have a \emph{Mal'tsev term} $m$, i.e. a ternary term satisfying the identities $m(y,x,x) \approx m(x,x,y) \approx y$. For short, we call such algebras \emph{Mal'tsev algebras}. It is well-known that a variety has a Mal'tsev term if and only if it is \emph{congruence permutable}. In particular this implies that every Mal'tsev algebra generates a congruence modular variety.  The opposite implication is not true in general, but holds in the setting of nilpotent algebras \cite{FM-commutator-theory}. Thus, for simplicity we are going to refer to such algebras simply as \emph{nilpotent Mal'tsev algebras}, instead of the lengthy (albeit more general) ``nilpotent algebras from congruence modular varieties''.

\subsection{Commutator theory}
The commutator is an additional operation on the congruence lattice $\Con(\algA)$, that allows us to generalize concepts like Abelianess, nilpotence or solvability from group theory. In the following, higher dimensional form, the commutator was first introduced by Bulatov in \cite{bulatov-commutator}; but it was studied for the binary case $n=2$ long before that \cite{smith-maltsev, FM-commutator-theory}.

\begin{definition} \label{definition:commutator}
Let $\algA$ be an algebra and $\alpha_1,\ldots,\alpha_n, \delta \in \Con(\algA)$ congruences. We then say that \emph{$\alpha_1,\ldots,\alpha_{n-1}$ centralize $\alpha_n$ modulo $\delta$}, and write $C(\alpha_1,\ldots,\alpha_{n-1},\alpha_n;\delta)$, if for all arities $k_1,k_2,\ldots, k_n\geq 1$, for all polynomial operations $p(\bar x_1,\ldots,\bar x_n) \in \Pol^{k_1+\cdots+k_n}(\algA)$ and all tuples $\bar a_i, \bar  b_i \in A^{k_i}$ such that 
\begin{itemize}
\item $(\bar a_i, \bar b_i) \in \alpha_i$ for all $i = 1,\ldots,n$, and
\item $(p(\bar y_1, \ldots, \bar y_{n-1}, \bar a_n), p(\bar y_1, \ldots, \bar y_{n-1}, \bar b_n)) \in \delta$,\\ for all $(\bar y_1,\ldots, \bar y_{n-1}) \in \prod_{i = 1}^{n-1} \{ \bar a_i, \bar b_i \} \setminus \{( \bar b_1,\ldots, \bar b_{n-1}) \}$,
\end{itemize}
it also holds that
$(p(\bar b_1, \ldots, \bar b_{n-1}, \bar a_n), p(\bar b_1, \ldots, \bar b_{n-1}, \bar b_n)) \in \delta.$\\
The \emph{(higher) commutator} $[\alpha_1,\ldots,\alpha_n]$ is the smallest congruence $\delta$ such that $C(\alpha_1,\ldots,\alpha_n;\delta)$ holds.
\end{definition}
In groups, $[\alpha_1,\ldots,\alpha_n]$ is the congruence given by the commutator subgroup $[N_1,N_2,\ldots,N_n] = [N_1,[N_2,\ldots,[N_{n-1},N_n]\ldots]]$, where $N_i = [e]_{\alpha_i}$ is the normal subgroup corresponding to $\alpha_i$. Similarly, in rings, $[\alpha_1,\ldots,\alpha_n]$ is given by the product of the ideals corresponding to the congruences $\alpha_i$. 

Using the commutator, we define the following generalizations of solvability and nilpotence from group theory:

\begin{definition} \label{definition:nilpotent}
For $n \in \N$, we call a congruence $\alpha \in \Con(\algA)$
\begin{enumerate}
\item \emph{$n$-solvable} if $\alpha^{[n]} = \mathbf 0_A$ in the series defined by $\alpha^{[0]} = \alpha$ and $\alpha^{[i+1]} = [\alpha^{[i]}, \alpha^{[i]}]$,
\item \emph{$n$-nilpotent} if $\alpha^{(n)} = \mathbf 0_A$ in the series defined by $\alpha^{(0)} = \alpha$ and $\alpha^{(i+1)} = [\alpha, \alpha^{(i)}]$,
\item \emph{$n$-supernilpotent}, if $[\alpha,\ldots,\alpha]_{n} = [\underbrace{\alpha,\ldots,\alpha}_{n+1 \text{ times}}] = \mathbf 0_A$.
\end{enumerate}
Further $\alpha \in \Con(\algA)$ is called \emph{solvable (nilpotent, supernilpotent)}, if there is an $n$ such that $\alpha$ is \emph{$n$-solvable ($n$-nilpotent, $n$-supernilpotent)}.
The algebra $\algA$ is called ($n$-) \emph{solvable (nilpotent, supernilpotent)}, if the full congruence $\mathbf 1_A$ is ($n$-) \emph{solvable (nilpotent, supernilpotent)}.
\end{definition}

We want to stress that, although nilpotence and supernilpotence agree in groups and rings, the same is not true in general Mal'tsev algebras (as e.g. demonstrated by the loop in \cite{vaughanlee-nilloop}). The higher commutator has the following monotonicity properties (see e.g. \cite{AM-commutator}):
\begin{itemize}
\item[ (HC1)] $[\alpha_1,\ldots,\alpha_n] \leq \bigwedge_{i = 1}^n \alpha_i$
\item[ (HC2)] If $\alpha_i \leq \beta_i$ for all $i$, then $[\alpha_1,\ldots,\alpha_n] \leq [\beta_1,\ldots,\beta_n]$
\item[ (HC3)] $[\alpha_1,\ldots,\alpha_n] \leq [\alpha_2,\ldots,\alpha_n]$
\end{itemize}
In congruence modular variety, furthermore the following properties hold \cite{moorhead-commutator}: 
\begin{itemize}
\item[ (HC4)] $[\alpha_1,\ldots,\alpha_n] = [\alpha_{\pi(1)},\ldots,\alpha_{\pi(n)}]$ for all permutations $\pi \in \Sym(n)$
\item[ (HC5)] $[\alpha_1,\ldots,\alpha_n] \leq \beta$ if and only if $C(\alpha_1,\ldots,\alpha_n;\beta)$
\item[ (HC6)] $[\alpha_1/\beta,\ldots,\alpha_n/\beta] = ([\alpha_1,\ldots,\alpha_n]\lor \beta)/\beta$ in $\algA/\beta$
\item[ (HC7)] $[\alpha_1,\ldots\alpha_{i-1},\bigvee \Gamma,\alpha_{i+1},\ldots,\alpha_n] = \bigvee_{\gamma \in \Gamma} [\alpha_1,\ldots\alpha_{i-1},\gamma,\alpha_{i+1},\ldots,\alpha_n]$, for all $i=1,\ldots,n$
\item[ (HC8)] $[\alpha_1,\ldots\alpha_{i-1}, [\alpha_i,\alpha_{i+1},\ldots,\alpha_n]] \leq [\alpha_1,\ldots\alpha_{i-1}, \alpha_i,\alpha_{i+1},\ldots, \alpha_n]$ for all $i=2,\ldots,n-2$
\end{itemize}

Instead of working with Definition \ref{definition:commutator} itself, we are often going to use a characterization of the commutator in Mal'tsev algebras by Aichinger and Mudrinski. For this we need to define absorbing operations:

\begin{definition}An operation $c\colon A^n\to A$ is called \emph{absorbing at} $\bar a = (a_1,\ldots,a_n) \in A^n$, if it satisfies the identities $c(a_1,x_2,\ldots,x_n) \approx c(x_1,a_2,x_3,\ldots,x_n) \approx \cdots \approx c(x_1,x_2,\ldots,a_n) \approx c(\bar a)$. In the special case that $\bar a = (0,0,\ldots,0)$, and $c(\bar a) = 0$ for some fixed $0 \in A$, the operation $c$ is called \emph{$0$-absorbing}. 
\end{definition}

Then the higher commutator can be described as follows:

\begin{theorem}[Lemma 6.9 in \cite{AM-commutator}] \label{theorem:absorbing}
Let $\algA$ be a Mal'tsev algebra, and $\alpha_1,\ldots,\alpha_n \in \Con(\algA)$. Then $[\alpha_1,\ldots,\alpha_n]$ is generated as a congruence by the set
$$S = \{(c(\bar a),c(\bar b)) \colon \bar a, \bar b \in A^n, (a_i, b_i) \in \alpha_i \text{ for all } i, \text{ and } c \in \Pol^n(\algA) \text { is absorbing at } \bar a \}.$$
\end{theorem}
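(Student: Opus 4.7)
The plan is to establish $[\alpha_1,\ldots,\alpha_k] = \theta$, where $\theta := \mathrm{Cg}_{\algA}(S)$, by proving both inclusions. By property (HC5), the containment $[\alpha_1,\ldots,\alpha_k] \leq \theta$ is equivalent to the centralizer relation $C(\alpha_1,\ldots,\alpha_k;\theta)$, while the reverse containment reduces to showing that $S \subseteq [\alpha_1,\ldots,\alpha_k]$.

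The inclusion $\theta \leq [\alpha_1,\ldots,\alpha_k]$ follows directly from the definition of the commutator applied to each generator. Given $c \in \Pol^k(\algA)$ absorbing at $\bar a$ and $(a_i,b_i) \in \alpha_i$, apply $C(\alpha_1,\ldots,\alpha_k;[\alpha_1,\ldots,\alpha_k])$ to $c$. For every $\bar y \in \prod_{i=1}^{k-1} \{a_i,b_i\} \setminus \{(b_1,\ldots,b_{k-1})\}$ some coordinate $y_j$ equals $a_j$, and absorption forces $c(\bar y,a_k) = c(\bar y,b_k) = c(\bar a)$, so the hypotheses of the centralizer condition are trivially fulfilled. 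Its conclusion delivers $(c(b_1,\ldots,b_{k-1},a_k), c(\bar b)) \in [\alpha_1,\ldots,\alpha_k]$, and absorption on the last coordinate rewrites the first entry as $c(\bar a)$.

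For the harder direction I would verify $C(\alpha_1,\ldots,\alpha_k;\theta)$ via the following construction. Given a polynomial $p$ and tuples $(\bar a_i, \bar b_i) \in \alpha_i$ satisfying the centralizer premise, manufacture an auxiliary polynomial $q$ from $p$, the parameters $\bar a_i$, and the Maltsev term $m$, so that $q$ becomes absorbing at $\bar a$. A concrete template is the iterated ``difference polynomial''
\[
D_0 := p, \qquad D_i(\bar x) := m\bigl(D_{i-1}(\bar x),\; D_{i-1}(\bar x \text{ with } x_i \text{ set to } a_i),\; p(\bar a)\bigr), \qquad q := D_k,
\]
where the Maltsev identity $m(u,u,v)=v$ forces $q$ to collapse to $p(\bar a)$ whenever some coordinate equals $a_i$. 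The generator pair $(q(\bar a), q(\bar b)) = (p(\bar a), q(\bar b))$ then lies in $S \subseteq \theta$, and the task reduces to showing that $q(\bar b)$ is $\theta$-equivalent to $p(\bar b)$ modulo the assumed centralizer pairs.

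The principal obstacle is this last reduction. Expanding the nested Maltsev applications in $q(\bar b)$ produces a binary tree of evaluations of $p$ at hybrid tuples $(z_1,\ldots,z_k)$ with $z_i \in \{a_i,b_i\}$, and one must shuffle these modulo $\theta$ using only the allowed pairs $(p(\bar y,\bar a_k), p(\bar y,\bar b_k)) \in \theta$ for $\bar y \neq (\bar b_1,\ldots,\bar b_{k-1})$ together with the cancellations $m(u,u,v) = v$ and $m(u,v,v) = u$. Because the Maltsev identities are strictly weaker than the affine identity $x - y + z$, one cannot simply telescope; rather, one invokes polynomial-preservation of $\theta$ to push the $\theta$-equivalences through the $m$-tree in a carefully chosen order so that every mixed term cancels. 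Once the bookkeeping is carried out, transitivity of $\theta$ delivers $(p(\bar b_1,\ldots,\bar b_{k-1},\bar a_k), p(\bar b)) \in \theta$, completing the verification of the centralizer condition and the proof.
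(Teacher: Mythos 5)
The paper does not prove this statement; it imports it as Lemma 6.9 of \cite{AM-commutator}, so there is no in-paper proof to compare against. Assessing your attempt on its own: the inclusion $S \subseteq [\alpha_1,\ldots,\alpha_k]$ is correct (it is the standard argument), and the skeleton for the converse --- verify $C(\alpha_1,\ldots,\alpha_k;\theta)$ by manufacturing an absorbing polynomial from $p$, the Maltsev term and the constants $\bar a_i$ --- is the right shape. The gap is in the step you yourself call ``the principal obstacle,'' and it is not mere bookkeeping: the specific tower you write down cannot be completed.

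Take $k=2$ with singleton blocks. Your tower gives
\[
q(\bar b) = D_2(\bar b) = m\bigl(m(p(\bar b), p(a_1,b_2), p(\bar a)),\; p(b_1,a_2),\; p(\bar a)\bigr),
\]
and the sole available premise $p(\bar a) \equiv_\theta p(a_1,b_2)$ collapses the inner $m$, leaving
\[
p(\bar a) = q(\bar a) \equiv_\theta q(\bar b) \equiv_\theta m\bigl(p(\bar b),\, p(b_1,a_2),\, p(\bar a)\bigr).
\]
To extract the required conclusion $p(b_1,a_2) \equiv_\theta p(\bar b)$ you would need the implication ``$m(u,v,w) \equiv_\theta w \;\Rightarrow\; u \equiv_\theta v$,'' and this is false for a general Maltsev operation: the Maltsev identities do not rule out $m(u,v,w)=w$ with $u \neq v$, and applying further polynomials to $\theta$-pairs cannot manufacture the pair $(u,v)$. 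With your choice of pivot $p(\bar a)$ in every third slot, the ``odd one out'' $p(b_1,a_2)$ ends up in the middle argument of an $m$ that cannot be cancelled, no matter in which order you push the premise pairs through. The argument does go through with a differently assembled $q$: for $k=2$ one can take $q(\bar x) := m\bigl(p(\bar x),\; m(p(a_1,x_2), p(\bar a), p(x_1,a_2)),\; p(b_1,a_2)\bigr)$, which is absorbing at $\bar a$ with $q(\bar a)=p(b_1,a_2)$, and then the single premise collapses $q(\bar b)$ to $p(\bar b)$ using only $m(u,u,v)=v$ and $m(u,v,v)=u$. Your sketch neither identifies such a construction nor verifies any instance of the collapse, so this is a missing idea, not an omitted calculation.

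A secondary omission: $S$ is built from $k$-ary absorbing polynomials and \emph{individual-element} pairs $(a_i,b_i) \in \alpha_i$, whereas the term condition quantifies over block tuples $\bar a_i, \bar b_i \in A^{k_i}$. The absorbing $q$ you produce then has arity $\sum k_i$, so $(q(\bar a),q(\bar b))$ need not lie in $S$. You must either first reduce to the case $k_i=1$ or prove separately that $\theta$ already contains the block-absorbing pairs; the reduction is standard for Maltsev algebras but needs to be stated.
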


In particular, Theorem \ref{theorem:absorbing} was used in \cite{AM-commutator} to prove the following characterization of supernilpotent algebras:

\begin{theorem}[Proposition 7.7. in \cite{AM-commutator}] \label{theorem:absorbing2}
Let $\algA$ be a Mal'tsev algebra. Then $[\mathbf 1_A,\ldots,\mathbf 1_A]_k = \mathbf 0_A$ if and only if every $0$-absorbing polynomial of $\algA$ of arity $l>k$ is constant.
\end{theorem}

Furthermore, a finite Mal'tsev algebras is supernilpotent if and only if it is a direct product of nilpotent Mal'tsev algebras of prime power sizes \cite{kearnes-spectra}. This result was generalized to the following characterization of supernilpotent congruences by Mayr and Szendrei:

\begin{theorem}[Theorem 4.2. in \cite{MS-algebras-from-congruences}] \label{theorem:mayrdecomposition}
Let $\algA$ be a finite algebra such that $\HSP(\algA)$ omits type $\mathbf 1$. Then a congruence $\alpha \in \Con(\algA)$ is supernilpotent if and only there are $\beta_1,\ldots,\beta_k \leq \alpha$ satisfying the following conditions:
\begin{enumerate}
\item $\alpha$ is nilpotent
\item $\beta_1 \land \cdots \land \beta_k = 0$
\item $(\beta_1 \land \cdots \land \beta_{i-1}) \circ \beta_i = \beta_i \circ (\beta_1 \land \cdots \land \beta_{i-1})  = \alpha$ for every $i>1$
\item For every $i$ there is a prime $p_i$, such that the size of every block of $\alpha/\beta_i$ in $\algA/\beta_i$ is a power of $p_i$.
\end{enumerate}
\end{theorem}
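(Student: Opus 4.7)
The plan is to prove both implications via the structural decomposition theory of supernilpotent algebras in varieties omitting type~$\mathbf 1$. The external fact I would rely on is the ``absolute'' decomposition theorem essentially due to Kearnes: every finite supernilpotent algebra in a variety omitting type~$\mathbf 1$ is a direct product of nilpotent algebras of prime power order. The whole strategy is to lift this absolute statement about algebras to a relative statement about a congruence $\alpha$, with the $\beta_i$'s playing the role of coordinate-projection kernels of a product decomposition applied blockwise to $\alpha$.

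For the forward direction, assume $\alpha$ is supernilpotent; then (1) is immediate. I would pick a single $\alpha$-block $B$, endow it with the algebra structure whose operations are the polynomial operations of $\algA$ that send $B^n$ to $B$ (the ``algebra from a congruence'' in the Mayr--Szendrei sense), and apply the absolute decomposition theorem to write $B$ as a direct product of prime-power nilpotent pieces. A centrality argument, exploiting the nilpotence of $\alpha$, shows both that the set of primes $\{p_1,\ldots,p_k\}$ is independent of the block and that the blockwise decompositions glue to honest congruences $\beta_i \leq \alpha$ on $\algA$: take $\beta_i$ to be the kernel of the coordinate projection that forgets the $p_i$-factor inside every $\alpha$-block simultaneously. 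Condition (4) is then automatic, (2) holds because no prime is forgotten by all $\beta_i$'s at once, and (3) follows because the $\beta_i$'s are projection kernels of a direct product and the required commutation with their meets is built into that product structure.

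For the backward direction, assume $\beta_1,\ldots,\beta_k$ satisfy (1)--(4). I would first show that each quotient congruence $\alpha/\beta_i \in \Con(\algA/\beta_i)$ is supernilpotent: by (HC6) it is still nilpotent, and by (4) each of its blocks has prime power size, which reduces the problem to the auxiliary fact that a finite nilpotent congruence all of whose blocks have prime power size is supernilpotent. Choose $n$ so that $[\alpha/\beta_i,\ldots,\alpha/\beta_i]_{n+1} = \mathbf 0_{\algA/\beta_i}$ for every $i$. Applying (HC6) in the other direction gives $[\alpha,\ldots,\alpha]_{n+1} \leq \beta_i$ for each $i$, and intersecting over $i$ together with (2) yields $[\alpha,\ldots,\alpha]_{n+1} \leq \bigwedge_i \beta_i = \mathbf 0_A$, so $\alpha$ is supernilpotent. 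Condition (3) enters implicitly: it is what ensures that the $\beta_i$'s assemble $\algA$ as a direct product of the quotients $\algA/\beta_i$ along $\alpha$ in a strong enough sense that no additional ``interaction'' commutator terms appear.

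The main obstacle I expect is the implication ``nilpotent plus prime-power block size implies supernilpotent,'' which is the conceptual heart of the theorem and requires a Sylow-type structure theory for nilpotent algebras in varieties omitting type~$\mathbf 1$, together with an inductive argument tracing how the higher commutator respects prime-power decompositions. A secondary difficulty is the gluing step in the forward direction, where one uses the centrality built into nilpotence to verify that the local direct-product decompositions of individual $\alpha$-classes are compatible across blocks and assemble into global congruences on $\algA$.
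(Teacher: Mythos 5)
The paper does not give a proof of this statement at all: it is imported verbatim as Theorem~4.2 from Mayr--Szendrei \cite{MS-algebras-from-congruences} and is used as a black box. So there is no in-paper argument to compare your attempt against; what follows is a review of the proposal on its own terms.

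Your overall strategy -- restrict to a single $\alpha$-block equipped with the polynomial operations that preserve it, apply the Kearnes-type decomposition of a finite supernilpotent algebra (in a variety omitting type~$\mathbf 1$) into a direct product of nilpotent algebras of prime power order, and read the $\beta_i$ off as projection kernels -- is exactly in the spirit of the cited ``algebras from congruences'' paper, and the converse reduction to ``finite nilpotent of prime power block size implies supernilpotent'' is the right key lemma. Two genuine gaps remain, however. First, the gluing step in the forward direction (that block-by-block direct decompositions assemble into honest congruences $\beta_i \leq \alpha$ on $\algA$ satisfying the permutability condition (3)) is asserted, not argued; this is the technically delicate part, as one must show that the prime-power coordinates in different $\alpha$-blocks can be chosen compatibly, and the centrality of $\alpha$ is used in an essential and nonobvious way. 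Second, your backward direction invokes (HC6), but this paper states (HC4)--(HC8) only for algebras from congruence modular varieties, whereas the hypothesis of the theorem is merely that $\HSP(\algA)$ omits type~$\mathbf 1$. You either need the Moorhead/Mayr--Szendrei commutator machinery in that weaker setting, or you need to observe that nilpotence of $\alpha$ together with omitting type~$\mathbf 1$ forces the relevant local Maltsev-like behavior; your sketch silently assumes this transfer. Neither gap indicates a wrong approach, but both are substantive steps that the cited source has to carry, and the proposal currently treats them as givens. Finally, note that your backward argument never actually uses condition (3); it is worth checking whether that is because (3) is genuinely redundant for that implication (it is vacuous when $k=1$, which is exactly the auxiliary fact you rely on) or because some hidden dependence on (3) is hiding inside the ``prime power blocks implies supernilpotent'' lemma.
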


Theorem \ref{theorem:mayrdecomposition} in particular applies to finite Mal'tsev algebras $\algA$.

\subsection{Fitting series}
In this section we generalize the group theoretic notion of Fitting series to arbitrary algebras, and make some basic observations. 

\begin{definition} \label{definition:fitting}
Let $\algA$ be an algebra, and $\alpha \in \Con(\algA)$. Then:
\begin{itemize}
\item A sequence of congruences $\mathbf 0_A = \alpha_0 \leq \alpha_1 \leq \cdots \leq \alpha_m = \alpha$ is called a \emph{Fitting series} of $\alpha$, if for every $i = 1,\ldots, m$ there is an integer $j_i$, such that $[\alpha_i,\ldots,\alpha_i]_{j_i} \leq \alpha_{i-1}$ (so $\alpha_i$ is supernilpotent modulo $\alpha_{i-1}$).
\item The \emph{Fitting length of $\alpha$} is the smallest $m \in \N$, such that $\alpha$ has a Fitting series of length $m$. If $\alpha$ has no Fitting series, we say it has Fitting length $\infty$. 
\item Fitting series (Fitting length) of $\mathbf 1_A$ are also called \emph{Fitting series (Fitting length)} of $\algA$.
\end{itemize}
Algebras $\algA$ of Fitting length $\leq m$ were also called \emph{$m$-step supernilpotent} in \cite{IKK-intermediate}. 
\end{definition}

Note that, since in groups nilpotence and supernilpotence coincide, a Fitting series of a group is also a Fitting series in the sense of Definition \ref{definition:fitting}. (We remark that, for this reason, one could equally make the point of defining Fitting series to be series of congruences in which every element is \emph{nilpotent} modulo its predecessor. We are however not aware of any previous such use of the notion in the literature.  Considering structural results such as Theorem \ref{theorem:mayrdecomposition}, we think that Definition \ref{definition:fitting} is the more useful and hence more fitting generalization of ``Fitting series'' from group theory.)

As in group theory, we can describe algebras of finite Fitting length by their \emph{upper} and \emph{lower} Fitting series.

\begin{lemma} \label{lemma:lowerF}
Let $\algA$ be an algebra such that $\Con(\algA)$ is finite. Then $\algA$ has Fitting length $m < \infty$ if and only if the series defined by $\gamma_0 = \mathbf 1_A$, and $\gamma_{i+1} = \bigcap_{n \in \N} [\gamma_i,\ldots,\gamma_i]_{n}$ reaches $\mathbf 0_A$ after exactly $m$ steps. We call  $\gamma_0 \geq \gamma_1 \geq \gamma_2 \geq \cdots$ the \emph{lower Fitting series} of $\algA$.
\end{lemma}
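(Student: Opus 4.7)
I would prove the two directions by establishing a universality property of the lower Fitting series: (i) the reversed sequence $\gamma_m \leq \gamma_{m-1} \leq \cdots \leq \gamma_0$ is itself a Fitting series of $\algA$, and (ii) every Fitting series of $\algA$ dominates the lower Fitting series termwise (from above). Together these force the Fitting length to equal the least $m$ for which $\gamma_m = \mathbf 0_A$.

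For (i), I would first exploit that $\Con(\algA)$ is finite. By (HC3) the sequence $\bigl([\gamma_i,\ldots,\gamma_i]_n\bigr)_{n \in \N}$ is weakly decreasing in $\Con(\algA)$, hence stabilises, so the intersection defining $\gamma_{i+1}$ is attained: there exists some $n_0$ with $\gamma_{i+1} = [\gamma_i,\ldots,\gamma_i]_{n_0}$. In particular $[\gamma_i,\ldots,\gamma_i]_{n_0} \leq \gamma_{i+1}$, which is exactly what is required for the reversed sequence (with $\alpha_i := \gamma_{m-i}$) to satisfy Definition \ref{definition:fitting}. Consequently, if $\gamma_m = \mathbf 0_A$ then $\algA$ has Fitting length at most $m$.

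The heart of the argument is (ii). Given an arbitrary Fitting series $\mathbf 0_A = \alpha_0 \leq \alpha_1 \leq \cdots \leq \alpha_\ell = \mathbf 1_A$ with witnesses $[\alpha_i,\ldots,\alpha_i]_{j_i} \leq \alpha_{i-1}$, I would prove by induction on $i$ that $\gamma_i \leq \alpha_{\ell-i}$. The base case $\gamma_0 = \mathbf 1_A = \alpha_\ell$ is trivial. For the step, if $\gamma_i \leq \alpha_{\ell-i}$ then monotonicity (HC2) yields $[\gamma_i,\ldots,\gamma_i]_{j_{\ell-i}} \leq [\alpha_{\ell-i},\ldots,\alpha_{\ell-i}]_{j_{\ell-i}} \leq \alpha_{\ell-i-1}$, so $\gamma_{i+1} \leq \alpha_{\ell-i-1}$ as required. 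Specialising to $i = \ell$ gives $\gamma_\ell \leq \alpha_0 = \mathbf 0_A$, so the lower Fitting series reaches $\mathbf 0_A$ in at most $\ell$ steps.

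Combining the two parts: if $\algA$ has Fitting length $m$, then (ii) applied to any length-$m$ Fitting series yields $\gamma_m = \mathbf 0_A$, while (i) rules out $\gamma_{m-1} = \mathbf 0_A$ (otherwise $\algA$ would admit a Fitting series of length $m-1$). Hence the least index $m$ at which $\gamma_m = \mathbf 0_A$ coincides with the Fitting length. I do not foresee a genuine obstacle here; the only subtleties are the reindexing needed to reverse the lower Fitting series into the shape demanded by Definition \ref{definition:fitting}, and the reduction of the infinite intersection defining $\gamma_{i+1}$ to a single higher commutator using finiteness of $\Con(\algA)$.
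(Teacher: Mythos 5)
Your proof is correct and follows essentially the same route as the paper: both directions use the same reduction of the infinite intersection to a single higher commutator via finiteness of $\Con(\algA)$ and (HC3), and the same downward induction $\gamma_i \leq \alpha_{\ell-i}$ via (HC2). The only cosmetic difference is that you bound $\gamma_{i+1}$ by $[\gamma_i,\ldots,\gamma_i]_{j_{\ell-i}}$ directly from the definition of the intersection, whereas the paper picks a single large exponent serving both purposes, and you spell out explicitly why the minimal index equals the Fitting length.
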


\begin{proof}
Since $\algA$ has only finitely many congruences (and by (HC3)), $\gamma_{i+1} = [\gamma_i,\ldots,\gamma_i]_l$ for a big enough $l$. Thus, if the lower Fitting series reaches $\mathbf 0_A$ after $m$ many steps, then $\mathbf 0_A = \gamma_{m} \leq \gamma_{m-1} \leq \cdots \leq \gamma_0 = \mathbf 1_A$ is a Fitting series of length $m$.

To show that $m$ is minimal, let $\mathbf 0_A = \alpha_0 \leq \alpha_1 \leq \cdots \leq \alpha_k = \mathbf 1_A$  be an arbitrary Fitting series. We then prove by induction on $i=0,\ldots,k$ that $\gamma_{i} \leq \alpha_{k-i}$. For $i = 0$ this is clearly true. For an induction step $i \to i+1$ note that, by (HC2): $[\gamma_i,\ldots,\gamma_i]_l \leq [\alpha_{k-i},\ldots,\alpha_{k-i}]_l$ for every $l$. For big enough $l$ we get $\gamma_{i+1} = [\gamma_i,\ldots,\gamma_i]_l \leq [\alpha_{k-i},\ldots,\alpha_{k-i}]_l \leq \alpha_{k-i-1}$. We conclude that $\gamma_{k} \leq \alpha_{0} = \mathbf 0_A$ and therefore $m \leq k$.
\end{proof}

\begin{observation} \label{observation:fittingcongruence}
Assume that $\algA$ generates a congruence modular variety. Then it follows directly from (HC7) and (HC3) that $[\alpha \lor \beta,\ldots,\alpha \lor \beta]_{k+l} \leq [\alpha,\ldots,\alpha]_k \lor [\beta,\ldots,\beta]_l$ for all $\alpha,\beta \in \Con(\algA)$. As a consequence, the join of two supernilpotent congruences is again supernilpotent. So if $\Con(\algA)$ is finite, this implies that there is a maximal supernilpotent congruence.
\end{observation}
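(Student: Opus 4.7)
The plan is to reduce the claimed inequality to a direct combination of (HC7), (HC3), and the permutation rule (HC4). First, I apply (HC7) once in each of the $k+l+1$ coordinates to expand
\[
[\alpha \lor \beta, \ldots, \alpha \lor \beta]_{k+l} \;=\; \bigvee_{(\gamma_1, \ldots, \gamma_{k+l+1}) \in \{\alpha,\beta\}^{k+l+1}} [\gamma_1, \ldots, \gamma_{k+l+1}].
\]

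For each summand I would then argue by pigeonhole: the tuple has length $k+l+1$, so either $\alpha$ appears in at least $k+1$ positions or $\beta$ appears in at least $l+1$ positions. In the first case, (HC4) lets me permute the $\alpha$-entries to the tail of the tuple, and then (HC3), applied once per surplus $\beta$-entry, drops them from the front; this bounds the summand by $[\alpha,\ldots,\alpha]_k$. The symmetric argument bounds it by $[\beta,\ldots,\beta]_l$ in the second case. Taking the join over all $2^{k+l+1}$ tuples then yields the desired inequality.

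The two consequences now fall out quickly. If $\alpha$ is $k$-supernilpotent and $\beta$ is $l$-supernilpotent, then $[\alpha,\ldots,\alpha]_k = [\beta,\ldots,\beta]_l = \mathbf 0_A$, and the inequality forces $[\alpha \lor \beta,\ldots,\alpha \lor \beta]_{k+l} = \mathbf 0_A$, so $\alpha \lor \beta$ is $(k+l)$-supernilpotent. When $\Con(\algA)$ is finite, I would then iterate this over the finitely many supernilpotent congruences to conclude that their join is again supernilpotent, hence is the unique maximal supernilpotent congruence.

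Since every step is a direct application of the axioms (HC3), (HC4), (HC7) recalled in Section~\ref{sect:background}, there is no genuine obstacle; the only thing worth double-checking is the indexing convention (namely, $[\cdot,\ldots,\cdot]_n$ denotes a commutator of $n+1$ arguments), which is exactly what makes the pigeonhole bound $k+l+1$ match the subscript $k+l$ on the left-hand side.
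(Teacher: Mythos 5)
Your proof is correct and is the natural unpacking of the paper's sketch (which is stated as a bare observation): expand via (HC7) into a join over $\{\alpha,\beta\}$-tuples, bound each summand by pigeonhole, and deduce closure of supernilpotent congruences under joins, hence a maximum when $\Con(\algA)$ is finite. One small remark: the paper attributes the inequality to (HC7) and (HC3) alone, but as you correctly notice, the step of moving the surplus entries to the front before dropping them genuinely requires the permutation rule (HC4) (the listed form of (HC3) only allows dropping the \emph{first} argument); since the hypothesis guarantees a congruence modular variety, (HC4) is available, so this is a harmless omission in the paper rather than a gap in your argument.
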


\begin{definition}
Let $\algA$ be an algebra from a congruence modular variety, such that $\Con(\algA)$ is finite. Then, the \emph{Fitting congruence} $\lambda$ of $\algA$ is the maximal supernilpotent congruence.
\end{definition}

Note that the Fitting congruence generalizes the concept of the \emph{Fitting subgroup}, which is the maximal (super-)nilpotent normal subgroup of a group. The existence of such a Fitting congruence allows us to generalize upper Fitting series from group theory:

\begin{lemma} \label{lemma:upperF}
Assume $\algA$ is from a congruence modular variety and $\Con(\algA)$ is finite. We then set $\lambda_0 = \mathbf 0_A$ and define $\lambda_{i+1} \geq \lambda_{i}$ to be the maximal congruence that is supernilpotent modulo $\lambda_i$ for every $i \in \N$. We call $\mathbf 0_A = \lambda_0 \leq \lambda_1 \leq \cdots$ the \emph{upper Fitting series}. Then $\algA$ has Fitting length $m < \infty$ if and only if the lower Fitting series reaches $\mathbf 1_A$ after exactly $m$ steps.
\end{lemma}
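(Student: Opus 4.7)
The plan is to prove this by showing that the upper Fitting series exists, that it is itself a Fitting series, and that it dominates every other Fitting series termwise.

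First I would verify that $\lambda_{i+1}$ is well-defined. The natural route is to pass to the quotient $\algA/\lambda_i$: by the correspondence theorem, congruences $\beta \geq \lambda_i$ of $\algA$ biject with congruences of $\algA/\lambda_i$, and by (HC6) this bijection sends ``supernilpotent modulo $\lambda_i$'' to plain ``supernilpotent''. Since $\Con(\algA/\lambda_i)$ is finite, Observation \ref{observation:fittingcongruence} applied to $\algA/\lambda_i$ yields a unique maximal supernilpotent congruence, and pulling back gives $\lambda_{i+1}$.

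Next I would extract the easy direction. By construction, the chain $\mathbf 0_A = \lambda_0 \leq \lambda_1 \leq \cdots$ is itself a Fitting series (once we truncate at the first $\lambda_k$ equal to $\mathbf 1_A$, if such exists), because for each $i$ we have a witness $j_{i+1}$ with $[\lambda_{i+1},\ldots,\lambda_{i+1}]_{j_{i+1}} \leq \lambda_i$ obtained by lifting supernilpotence of $\lambda_{i+1}/\lambda_i$ via (HC6). Hence if $\lambda_k = \mathbf 1_A$, then the Fitting length of $\algA$ is at most $k$.

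The substantial direction is showing that the upper Fitting series dominates every Fitting series termwise. Let $\mathbf 0_A = \alpha_0 \leq \alpha_1 \leq \cdots \leq \alpha_m = \mathbf 1_A$ be an arbitrary Fitting series of length $m$, witnessed by integers $j_{i+1}$ with $[\alpha_{i+1},\ldots,\alpha_{i+1}]_{j_{i+1}} \leq \alpha_i$. I would argue by induction on $i$ that $\alpha_i \leq \lambda_i$. The base case $i = 0$ is trivial. For the induction step, assuming $\alpha_i \leq \lambda_i$, the key is to show that $\alpha_{i+1} \lor \lambda_i$ is supernilpotent modulo $\lambda_i$; then maximality of $\lambda_{i+1}$ forces $\alpha_{i+1} \leq \alpha_{i+1} \lor \lambda_i \leq \lambda_{i+1}$. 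The chain then concludes with $\mathbf 1_A = \alpha_m \leq \lambda_m$, so the upper Fitting series reaches $\mathbf 1_A$ in at most $m$ steps; combined with the easy direction applied to the minimal $k$ at which $\lambda_k = \mathbf 1_A$, we get $k = m$.

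The main obstacle is the supernilpotence claim for $\alpha_{i+1} \lor \lambda_i$ modulo $\lambda_i$. My plan is to expand the commutator $[\alpha_{i+1}\lor\lambda_i,\ldots,\alpha_{i+1}\lor\lambda_i]_{j_{i+1}}$ via (HC7) in every slot simultaneously, writing it as a join of $2^{j_{i+1}+1}$ terms, each of the form $[\gamma_1,\ldots,\gamma_{j_{i+1}+1}]$ with $\gamma_l \in \{\alpha_{i+1}, \lambda_i\}$. Every summand in which some $\gamma_l$ equals $\lambda_i$ lies below $\lambda_i$ by (HC1), while the unique summand with all $\gamma_l = \alpha_{i+1}$ equals $[\alpha_{i+1},\ldots,\alpha_{i+1}]_{j_{i+1}} \leq \alpha_i \leq \lambda_i$ by the inductive hypothesis and the Fitting series assumption. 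Hence the full join lies below $\lambda_i$, and passing to $\algA/\lambda_i$ via (HC6) gives supernilpotence of $(\alpha_{i+1}\lor \lambda_i)/\lambda_i$, as required.
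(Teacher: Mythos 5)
Your proposal is correct and takes essentially the same route as the paper: both directions via the observation that the upper Fitting series is itself a Fitting series, plus the termwise domination $\alpha_i \leq \lambda_i$ by induction. You are somewhat more careful than the paper's own proof at the induction step, explicitly forming the join $\alpha_{i+1}\lor\lambda_i$ and expanding its iterated commutator with (HC7); the paper just notes $[\alpha_{i+1},\ldots,\alpha_{i+1}]_l \leq \lambda_i$ and reads off supernilpotence modulo $\lambda_i$ via the implicit $(\cdot\lor\lambda_i)/\lambda_i$ convention in (HC6), which amounts to the same thing.
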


\begin{proof}
By Observation \ref{observation:fittingcongruence}, the upper Fitting series is well-defined. By its definition, for every $i$ there is a $j_i$ such that $[\lambda_i,\ldots,\lambda_i]_{j_i} \leq \lambda_{i-1}$. Thus, if $\lambda_m = \mathbf 1_A$, then the Fitting length of $\algA$ is bounded by $m$. 

For the other direction, assume that there is some other Fitting series $\mathbf 0_A = \alpha_0 \leq \alpha_1 \leq \cdots \leq \alpha_k = \mathbf 1_A$. We claim that $\alpha_i \leq \lambda_i$ for every $i= 0,\ldots,k$. For $i = 0$ this is clearly true. For an induction step $i \to i+1$, note that $[\alpha_{i+1},\ldots,\alpha_{i+1}]_l \leq \alpha_i \leq \lambda_i$ for some $l$, so $\alpha_{i+1}$ is supernilpotent modulo $\lambda_{i}$. But $\lambda_{i+1}$ is the join of all congruences that are supernilpotent modulo $\lambda_i$, thus $\alpha_{i+1} \leq \lambda_{i+1}$.
\end{proof}

We can further generalize the fact that a group is solvable if and only if it has finite Fitting length:

\begin{lemma} \label{lemma:solvableF}
Every solvable algebra $\algA$ has finite Fitting length (which is bounded by the degree of solvability).
If $\algA$ generates a congruence modular variety, also the converse holds.
\end{lemma}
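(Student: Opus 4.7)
The plan is to prove the two implications separately, with the first direction being essentially free from the commutator axioms and the second requiring (HC8), which is why the converse needs the congruence modular hypothesis.

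For the forward implication, I would use the derived series for solvability itself as a Fitting series, read in reverse. Concretely, if $\algA$ is $n$-solvable, set $\alpha_i := \mathbf 1_A^{[n-i]}$ for $i = 0, 1, \ldots, n$. Then $\alpha_0 = \mathbf 0_A$ and $\alpha_n = \mathbf 1_A$, and (HC1) gives $\alpha_{i-1} = [\alpha_i, \alpha_i] \leq \alpha_i$, so the chain is monotone. The equality $[\alpha_i, \alpha_i] \leq \alpha_{i-1}$ is exactly the condition that $\alpha_i$ be $1$-supernilpotent modulo $\alpha_{i-1}$ (taking $j_i = 1$ in Definition \ref{definition:fitting}), hence this is a Fitting series of length $n$. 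No use of CM is required for this direction, and it yields Fitting length $\leq n$ where $n$ is the degree of solvability.

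For the converse, assume $\algA$ lives in a congruence modular variety and has a Fitting series $\mathbf 0_A = \alpha_0 \leq \alpha_1 \leq \cdots \leq \alpha_m = \mathbf 1_A$ with $[\alpha_i,\ldots,\alpha_i]_{j_i} \leq \alpha_{i-1}$. The key intermediate step is to show that for every congruence $\alpha$, one has $\alpha^{[n]} \leq \alpha^{(n)} \leq [\alpha, \alpha, \ldots, \alpha]_n$. The first inequality is a straightforward induction on $n$ using (HC1) and (HC2) and does not need CM; the second is proved by induction using (HC8): $\alpha^{(n)} = [\alpha, \alpha^{(n-1)}] \leq [\alpha, [\alpha,\ldots,\alpha]_{n-1}] \leq [\alpha,\alpha,\ldots,\alpha]_n$. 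In particular, "$\alpha$ is $n$-supernilpotent modulo $\beta$" implies "$\alpha^{[n]} \leq \beta$".

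Applying this to the Fitting series yields $\mathbf 1_A^{[j_m]} = \alpha_m^{[j_m]} \leq \alpha_{m-1}$. By (HC2) together with the analogous inequality at level $m-1$,
\[
\mathbf 1_A^{[j_m + j_{m-1}]} = (\mathbf 1_A^{[j_m]})^{[j_{m-1}]} \leq \alpha_{m-1}^{[j_{m-1}]} \leq \alpha_{m-2},
\]
and iterating through the series gives $\mathbf 1_A^{[j_1 + \cdots + j_m]} \leq \alpha_0 = \mathbf 0_A$, so $\algA$ is $(j_1 + \cdots + j_m)$-solvable. The main technical obstacle is the step "supernilpotent implies nilpotent modulo $\beta$", which is where (HC8), and hence congruence modularity, is genuinely used; everything else is routine monotonicity bookkeeping with the higher commutator.
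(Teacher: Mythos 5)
Your proof is correct and follows essentially the same two-step strategy as the paper: the derived series, read in reverse, serves as a Fitting series for the forward direction, and for the converse you use (HC8) to bound the derived series by the higher-commutator powers of $\alpha$. The only cosmetic difference is that the paper establishes $\alpha^{[l]} \leq [\alpha,\ldots,\alpha]_{2^l}$ directly and picks a uniform $l$, whereas you interpolate the lower central series ($\alpha^{[n]} \leq \alpha^{(n)} \leq [\alpha,\ldots,\alpha]_n$) and sum the $j_i$ level by level — a slightly more elementary chain that isolates (HC8), and hence congruence modularity, to a single link, at the cost of a coarser bound on the solvability degree.
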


\begin{proof}
For solvable $\algA$ the derived series $\mathbf 1_A^{[0]}, \mathbf 1_A^{[1]},\mathbf 1_A^{[2]},\ldots, \mathbf 1_A^{[m]} = \mathbf 0_A$ is clearly a Fitting series; so $\algA$ has finite Fitting length.

On the other hand, assume that $\algA$ generates a modular variety and has a finite Fitting series $\mathbf 0_A = \alpha_0 \leq \alpha_1 \leq \cdots \leq \alpha_m = \mathbf 1_A$. By (HC3) there is an integer $l$ such that $[\alpha_i,\ldots,\alpha_i]_{2^l} \leq \alpha_{i-1}$ for every $i$. From property (HC8) of the higher commutator it follows that $\alpha_i^{[l]} \leq [\alpha_i,\ldots,\alpha_i]_{2^l} \leq \alpha_{i-1}$. Therefore $\algA$ is solvable.
\end{proof}

Note that Lemma \ref{lemma:lowerF}, \ref{lemma:upperF} and \ref{lemma:solvableF} can be generalized to characterize \emph{congruences} of finite Fitting length.
%

%

\subsection{Nilpotent Mal'tsev algebras}

By Definition \ref{definition:nilpotent} an algebra $\algA$ is ($n$-)nilpotent, if the series $\mathbf 1_A \geq [\mathbf 1_A,\mathbf 1_A] \geq [\mathbf 1_A,[\mathbf 1_A,\mathbf 1_A]] \geq \cdots$ reaches $\mathbf 0_A$ after $n$-many steps. It is not hard to see that this is equivalent to the existence of any \emph{central series} of length $n$, i.e. any series of congruences $\mathbf 0_A  = \alpha_0 \leq \alpha_1 \leq \cdots \leq \alpha_n = \mathbf 1_A$, such that $[\mathbf 1_A,\alpha_{i+1}] \leq \alpha_i$ for all $i = 1,\ldots,n$. Finite nilpotent algebras show some very regular behaviour in congruence modular varieties. In particular, they always have a Mal'tsev term \cite[Chapters 6 \& 7]{FM-commutator-theory}, therefore we are only going to refer them as nilpotent Mal'tsev algebras from now on.

A Mal'tsev algebra $\algA$ is 1-nilpotent (or \emph{Abelian}) if and only if it is \emph{affine}, i.e. polynomially equivalent to a module \cite{smith-maltsev,herrmann-affine}. Here, a module is an algebra $(A,+,0,-, (r)_{r \in R})$, where $(A,+,0,-)$ is the underlying Abelian group and every ring element $r \in R$ is considered as a unary operation $r(x) = r \cdot x$. Therefore all polynomial operations of an affine algebra $\algA$ are affine operations $p(x_1,\ldots,x_n) = \sum_{i = 1}^n r_i x_i + c$ for some scalars $r_i \in R$ and a constant $c \in A$. In particular, any Mal'tsev polynomial of $\algA$ must be of the form $m(x,y,z) = x - y + z$.

General nilpotent Mal'tsev algebras can be written as the \emph{wreath products} of affine algebras, which we will define below:

\begin{definition}
Let $\algL = (L,(f^{\algL})_{f \in F})$ and $\algU = (U,(f^{\algU})_{f \in F})$ be two algebras of the same signature $F$, such that $\algL$ is affine. Further let $T = (\hat f)_{f \in F}$ be a family of operations $\hat f \colon U^{\arity(f)} \to L$. Then we define $\algA = \algL \otimes^{T} \algU$ as the algebra such that:
\begin{itemize}
\item its universe is $A = L \times U$
\item every basic operation on $\algA$ is given by:
\begin{align*}
\proj{U}(f^{\algA}(\bar x)) &= f^{\algU}(\proj{U}(\bar x)) \\
\proj{L}(f^{\algA}(\bar x)) &= f^{\algL}(\proj{L}(\bar x)) + \hat f(\proj{U}(\bar x)),
\end{align*}
where $\proj{U}(\bar x)$ and $\proj{L}(\bar x)$ denote the (coordinate-wise) projection of a tuple $\bar x \in A^n$ to $U^n$, or $L^n$ respectively, and $+$ is the addition on $\algL$.
\end{itemize}
\end{definition}

We are going to call $\algL \otimes^{T} \algU$ a \emph{wreath product}, since it is a special case of a wreath product according to the definition of VanderWerf in \cite{vanderwerf-thesis, vanderwerf-wreathproduct}. In Proposition 7.1. of \cite{FM-commutator-theory}, Freese and McKenzie showed that Mal'tsev algebras $\algA$ with center $\zeta$ can be written as the wreath product of $\algU = \algA/\zeta$, and an affine algebra $\algL$. Following their proof, it is not hard to see that the result holds for arbitrary central congruences $\zeta$:

\begin{theorem}[Proposition 7.1. in \cite{FM-commutator-theory}] \label{theorem:freese-mckenzie}
Let $\algA$ be a Mal'tsev algebra, $\zeta \in \Con(\algA)$ such that $[\mathbf 1_A,\zeta] = \mathbf 0_A$, and $\algU = \algA/\zeta$.  Then there is an affine algebra $\algL$ in the same signature and a family of operations $T$, such that $\algA \cong \algL \otimes^{T} \algU$. 
\end{theorem}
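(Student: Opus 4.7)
The plan is to follow the original Freese--McKenzie argument for the center case, noting that only the centrality of $\zeta$, and never its maximality, is used. First I fix a base point $e \in A$, let $L := \zeta[e]$ be its $\zeta$-block, and pick a set-theoretic section $s \colon U \to A$ of the quotient map $\pi \colon A \to A/\zeta = U$ with $s(\pi(e)) = e$. Centrality $[\mathbf 1_A,\zeta] = \mathbf 0_A$ endows $L$ with the structure of an Abelian group under $\ell_1 + \ell_2 := m(\ell_1, e, \ell_2)$, with zero $e$ and inverses $-\ell := m(e, \ell, e)$, by the standard term-condition argument for Abelian Maltsev algebras. I then identify $A$ with $L \times U$ via the bijection
\begin{equation*}
\phi \colon A \to L \times U, \qquad \phi(a) := \bigl(\ell(a),\, \pi(a)\bigr), \quad \text{where } \ell(a) := m(a, s(\pi(a)), e),
\end{equation*}
whose inverse is $(\ell, u) \mapsto m(\ell, e, s(u))$, and transport the operations of $\algA$ along $\phi$.

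For each basic operation $f$ of arity $k$, the candidates for the wreath-product decomposition are
\begin{equation*}
f^{\algL}(\bar \ell) := m\bigl( f(\bar \ell),\, f(e,\ldots,e),\, e \bigr), \qquad \hat f(\bar u) := m\bigl( f(s(\bar u)),\, s(f^{\algU}(\bar u)),\, e \bigr),
\end{equation*}
both of which land in $L$ because $f(\bar \ell)\, \zeta\, f(e,\ldots,e)$ and $f(s(\bar u))\, \zeta\, s(f^{\algU}(\bar u))$. Setting $\algL := (L, (f^{\algL}))$ and $T := (\hat f)$ gives the candidate data. The affineness of $\algL$ can then be read off at the end of the argument: once the wreath-product formula is established, $m^{\algL}$ coincides with the group subtraction $x - y + z$ in $(L, +)$, which by Herrmann's theorem forces $\algL$ to be polynomially equivalent to a module.

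The main step, and the main obstacle, is to verify the two wreath-product identities
\begin{equation*}
\ell\bigl(f(\bar a)\bigr) = f^{\algL}\bigl(\ell(\bar a)\bigr) + \hat f\bigl(\pi(\bar a)\bigr), \qquad \pi\bigl(f(\bar a)\bigr) = f^{\algU}\bigl(\pi(\bar a)\bigr),
\end{equation*}
for every basic $f$ and every $\bar a \in A^k$. The second is immediate since $\pi$ is a homomorphism, so the real content is the additive splitting of the first. I would prove it by applying the centralization condition $C(\mathbf 1_A, \zeta; \mathbf 0_A)$ (equivalently, Theorem~\ref{theorem:absorbing} for $[\mathbf 1_A, \zeta] = \mathbf 0_A$) to the polynomial $p(\bar x) := m\bigl(f(\bar x),\, f(s(\pi(\bar a))),\, e\bigr)$, replacing the entries of $\bar a$ one coordinate at a time by the transversal values $s(\pi(a_i))$. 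Each such replacement is a $\zeta$-change in exactly one coordinate, the others being held fixed, so centrality forces the resulting increment in $L$ to depend only on the $L$-difference $\ell(a_i)$ of that coordinate and on the ambient $U$-values $\pi(\bar a)$; summing the $k$ increments inside the Abelian group $L$ yields precisely $f^{\algL}(\ell(\bar a)) + \hat f(\pi(\bar a))$, once one uses the definition of $\hat f$ at $\bar a = s(\pi(\bar a))$. Carrying out this coordinate-by-coordinate centralization argument cleanly, and extracting from it the ring action on $L$ that makes $f^{\algL}$ genuinely affine, is the technical heart of the proof; it is, however, essentially Freese and McKenzie's original calculation and adapts to the present setting—where $\zeta$ is any central congruence rather than the full center—without change.
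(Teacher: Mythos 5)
Your proposal reconstructs the Freese--McKenzie Proposition 7.1 argument (the explicit bijection $\phi$, the normalized components $f^{\algL}$ and $\hat f$, and the coordinate-by-coordinate term-condition computation) and correctly observes that only centrality of $\zeta$ — never its maximality as the full center — enters the calculation, which is precisely the remark the paper makes in deferring the proof to \cite{FM-commutator-theory}. The approach is the same as the paper's; the paper simply cites the original argument rather than reproducing it.
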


Note that $\zeta$ is the kernel of the projection $\proj{U}$. Theorem \ref{theorem:freese-mckenzie} directly implies the following:

\begin{corollary} \label{corollary:freese-mckenzie}
Let $\algA$ be a Mal'tsev algebra, let $\mathbf 0_A = \alpha_0 < \alpha_1 < \cdots < \alpha_n$ be a series of congruences such that $[\mathbf 1_A,\alpha_{i}] \leq \alpha_{i-1}$ for all $i=1,\ldots,n$, and let $\algU = \algA / \alpha_n$. Then there are affine algebras $\algL_1, \ldots, \algL_{n-1}$ in the same signature as $\algA$, and families of operations $T_1,\ldots, T_{n-1}$, such that 
\begin{align} \label{eq:wreath} \algA &\cong \algL_{1} \otimes^{T_1} (\cdots (\algL_{n-1} \otimes^{T_{n-1}} \algU)).
\end{align}
\end{corollary}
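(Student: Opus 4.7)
The plan is to prove the corollary by induction on $n$, iteratively applying Theorem \ref{theorem:freese-mckenzie} to peel off one central layer at a time. The crucial point is that the hypothesis $[\mathbf{1}_A,\alpha_i]\leq\alpha_{i-1}$ not only makes $\alpha_1$ a central congruence of $\algA$, but also---via commutator property (HC6)---makes each $\alpha_i/\alpha_{i-1}$ a central congruence of the quotient $\algA/\alpha_{i-1}$, which is exactly what is needed to invoke Theorem \ref{theorem:freese-mckenzie} again at the next level.

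For the initial step I would apply Theorem \ref{theorem:freese-mckenzie} to $\algA$ with the central congruence $\zeta=\alpha_1$ (central because $[\mathbf{1}_A,\alpha_1]\leq\alpha_0=\mathbf{0}_A$). This produces an affine algebra $\algL_1$ in the same signature as $\algA$, together with a family $T_1$ of operations, such that $\algA\cong\algL_1\otimes^{T_1}(\algA/\alpha_1)$. For the inductive step, set $\algB:=\algA/\alpha_1$ and consider the shorter series $\mathbf{0}_B=\alpha_1/\alpha_1<\alpha_2/\alpha_1<\cdots<\alpha_n/\alpha_1$. Using (HC6),
\[
[\mathbf{1}_B,\;\alpha_i/\alpha_1]\;=\;([\mathbf{1}_A,\alpha_i]\lor\alpha_1)/\alpha_1\;\leq\;\alpha_{i-1}/\alpha_1
\]
for every $i\geq 2$, since $[\mathbf{1}_A,\alpha_i]\leq\alpha_{i-1}$ and $\alpha_1\leq\alpha_{i-1}$. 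Hence $\algB$ satisfies the hypothesis of the corollary with a series one shorter, and by the standard isomorphism theorem $\algB/(\alpha_n/\alpha_1)\cong\algA/\alpha_n=\algU$. The induction hypothesis provides a nested wreath product representation of $\algB$, which, substituted into $\algA\cong\algL_1\otimes^{T_1}\algB$, yields the desired decomposition of the form \eqref{eq:wreath}.

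The only point requiring a little care---and, I expect, the only place where one could stumble---is the compatibility of the wreath product construction with passage to an isomorphic copy of the \emph{upper} factor: if $\algA\cong\algL_1\otimes^{T_1}\algB$ and $\algB\cong\algB'$ in the signature of $\algA$, then $\algA\cong\algL_1\otimes^{T_1}\algB'$ by transporting $T_1$ along the isomorphism $\algB\to\algB'$. This is immediate from the definition of the wreath product, since its basic operations depend on $\algB$ only through its universe and the interpretations of its basic operation symbols, and hence transport along any isomorphism of the signature. Beyond this bookkeeping and the (HC6) computation above, the only real content is a single invocation of Theorem \ref{theorem:freese-mckenzie} per layer, so the induction essentially reduces to a repeated application of that theorem.
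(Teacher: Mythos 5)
Your proof is correct and takes exactly the approach the paper has in mind: the paper merely states that Theorem~\ref{theorem:freese-mckenzie} ``directly implies'' the corollary, and your inductive argument---apply the theorem to the central $\alpha_1$, use (HC6) to push the centrality hypotheses down to $\algA/\alpha_1$, and recurse---is the omitted content. Your remark about transporting $T_1$ along the isomorphism of the upper factor is the right bit of bookkeeping and is indeed immediate from the definition of $\otimes^T$.
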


\begin{definition} \label{definition:wreathproduct}
For simplicity we abbreviate the representation of the Mal'tsev algebra $\algA$ in (\ref{eq:wreath}) by $\bigotimes_{i = 1}^n \algL_i \otimes \algU$, and call it a \emph{wreath product representation of $\algA$} with respect to $\mathbf 0_A = \alpha_0 < \alpha_1 < \cdots < \alpha_n$. Note that (if we identify the domains of $\algA$ and $\bigotimes_{i = 1}^n \algL_i \otimes \algU$) then, for every basic operation $f$, its projection to the component $L_j$ is of the form
\begin{align}\label{eq:piLi} \proj{L_i} (f^{\algA}(\bar x)) &= f^{\algL_i}(\proj{L_i}(\bar x)) + \hat f_i(\proj{(L_{i+1}\times \cdots \times L_n \times U)}(\bar x)),
\end{align}
for a function $\hat f_i \colon (L_{i+1} \times \cdots \times L_{n} \times U)^k \to L_i$.
\end{definition}

We remark that there is some freedom in how to pick a wreath product representation of $\algA$ for a series $\mathbf 0_A = \alpha_0 < \alpha_1 < \cdots < \alpha_n$, that stems from the choice of a constant $0^{L_i}$ in every affine component $\algL_i$ and, with it, the addition $x+y = m^{\algL_i}(x,0^{L_i},y)$. Thus, implicitly, every wreath product representation comes with a constant $0 = (0^{L_1},0^{L_2},\ldots,0^{L_n}) \in \prod_{i=1}^n L_i$. For a fixed such constant we can ensure that $\hat f_i$ and $f_i^{\algL_i}$ are uniquely determined by requiring that $f_i^{\algL_i}(0^{L_i},0^{L_i},\ldots,0^{L_i}) = 0^{L_i}$ for all $i$. For notational simplicity, we are not going to distinguish the zeros of the different affine compontents, i.e. $0 = (0,0,\ldots,0) \in \prod_{i=1}^n L_i$; this should however never cause any confusion.

If $\alpha_n = \mathbf 1_A$ we simply write $\algA = \bigotimes_{i = 1}^n \algL_i$. It is not hard to see that every Mal'tsev algebra of the form $\bigotimes_{i = 1}^n \algL_i$ is $n$-nilpotent. So by Corollary \ref{corollary:freese-mckenzie}, a Mal'tsev algebra is $n$-nilpotent if and only if, it is isomorphic to some wreath product $\bigotimes_{i = 1}^n \algL_i$.

Note that also for every polynomial operation $p \in \Pol(\algA)$, and every component $L_i$, there is a decomposition of $\proj{L_i} (p^{\algA})$ into $p^{\algL_i}$ and $\hat p_i \colon (L_{i+1} \times \cdots \times L_{n}\times U)^k \to L_i$ as in (\ref{eq:piLi}).  A well-known consequence of this, is that nilpotent Mal'tsev algebras define loop operations in the following sense: 

\begin{theorem}[Corollary 7.4. in \cite{FM-commutator-theory}] \label{theorem:loop}
Let $\algA$ be a nilpotent algebra with Mal'tsev polynomial $m \in \Pol(\algA)$, and let $0 \in A$. Then $x + y = m(x,0,y)$ is a loop multiplication with neutral element $0$. Furthermore also the left and right inverse $x / y$ and $x \backslash y$ are polynomials of $\algA$.
\end{theorem}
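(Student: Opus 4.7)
The identity properties come for free from the Maltsev laws: $0+y = m(0,0,y)=y$ and $x+0 = m(x,0,0)=x$, using $m(x,x,y)=y$ and $m(y,x,x)=y$ with $x$ instantiated to $0$. So the real content is that left and right translation $L_a\colon x\mapsto a+x$ and $R_a\colon x\mapsto x+a$ are bijections, and that their inverses can be realised by binary polynomials $x\backslash y$ and $x/y$.

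My plan is induction on the nilpotence class $n$ of $\algA$. The base case $n=1$ is immediate from Herrmann's affine characterization: $\algA$ is polynomially equivalent to a module, so $m(x,y,z) = x-y+z$, hence $x+y = m(x,0,y) = x - 0 + y$ is an abelian group operation translated by a constant, a loop in particular, and $x\backslash y = m(0,x,y)$, $x/y = m(x,y,0)$ are visibly polynomial. For the induction step, choose a nontrivial central congruence $\zeta$ (which exists since $[\mathbf 1_A,\mathbf 1_A]^{(n-1)} \neq \mathbf 0_A$ would contradict $n$-nilpotence) and apply Theorem \ref{theorem:freese-mckenzie} to represent $\algA \cong \algL \otimes^T \algU$ with $\algL$ affine and $\algU = \algA/\zeta$ of class $n-1$. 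The induction hypothesis supplies a loop operation $+_U$ on $\algU$ with polynomial inverses $\backslash_U$, $/_U$.

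Decomposing the Maltsev polynomial via the wreath product formulas yields $\pi_{_U}(m(x,y,z)) = m^{\algU}(\pi_{_U}(x),\pi_{_U}(y),\pi_{_U}(z))$ and $\pi_{_L}(m(x,y,z)) = m^{\algL}(\pi_{_L}(x),\pi_{_L}(y),\pi_{_L}(z)) + \hat m(\pi_{_U}(x),\pi_{_U}(y),\pi_{_U}(z))$; from $m(x,x,y)=m(y,x,x)=y$ I read off $\hat m(u,u,v) = \hat m(v,u,u) = 0_L$. Consequently, in coordinates $x = (\ell,u)$, $y = (\ell',u')$, one computes $x+y = (\ell + \ell' - \pi_{_L}(0) + \hat m(u,\pi_{_U}(0),u'),\; u +_U u')$. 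The $U$-coordinate of $L_a$ is just $L_{\pi_{_U}(a)}$ on $\algU$, a bijection by induction; once the $U$-coordinate of $y$ is fixed, the $L$-coordinate of $L_a(y)$ is an affine (hence bijective) function of $\pi_{_L}(y)$ in the module $\algL$. Composing these two bijections shows $L_a$ is a bijection on $A$, and the argument for $R_a$ is symmetric.

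The delicate step, which I expect to be the main obstacle, is exhibiting $\backslash$ and $/$ as actual binary \emph{polynomials} and not merely pointwise inverses. The recipe is: set $z := x\backslash y$ and solve coordinatewise. On $U$, one takes $\pi_{_U}(z) = \pi_{_U}(x) \backslash_U \pi_{_U}(y)$, which is a polynomial in $\pi_{_U}(x),\pi_{_U}(y)$ by induction. Substituting this into the $L$-equation reduces the $L$-component to solving a linear equation in the module $\algL$, with coefficients that are polynomial expressions in $x,y$ built from $m^{\algL}$ and $\hat m$ composed with $\backslash_U$; such a solution is given by module subtraction, which is a polynomial of $\algL$ and, being liftable to $\algA$ via the wreath product, a polynomial of $\algA$. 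The subtlety is that $\hat m$ is not literally a basic operation of $\algA$, so one must argue that every function that appears can be expressed using only basic operations of $\algA$ and the (polynomial) loop operations on $\algU$ pulled back through $\pi_{_U}$; this is where the closure of $\Pol(\algA)$ under the wreath product structure, together with Theorem \ref{theorem:loop} applied inductively to $\algU$, is essential.
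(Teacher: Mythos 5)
Your base case and the bijectivity argument for $L_a$, $R_a$ via the triangular structure of the wreath product are both correct, and the overall plan of inducting through a central congruence is the right one. The problem lies precisely in the step you flag as delicate, and you do not actually close it. You define $z = x\backslash y$ by prescribing its two coordinates separately: the $U$-coordinate is $\pi_{_U}(x)\backslash_U\pi_{_U}(y)$ (a polynomial of $\algU$) and the $L$-coordinate is a module expression built from $m^{\algL}$, $\hat m$, and $\backslash_U$, and you then assert this coordinatewise function is ``liftable to $\algA$ via the wreath product.'' That assertion is not justified and is in fact the entire content of the statement. The polynomials of $\algL\otimes^T\algU$ are exactly the compositions of the basic operations $f^{\algA}$ with constants; there is no a priori control over which pairs $(p^{\algL},\hat p)$ arise as their coordinate decompositions. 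In particular $\hat m$ is not an operation on $U$ at all, module subtraction on $L$ is not automatically a polynomial of $\algA$, and so ``polynomial on each coordinate'' does not imply ``polynomial of $\algA$''. This is not a formality: it is exactly the reason the corollary is nontrivial.

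The way to repair it is to never reason coordinatewise about what $\backslash$ \emph{should} look like, but to exhibit an explicit polynomial of $\algA$ and then verify it works using centrality. Concretely: by induction $\backslash_U$ is represented by some polynomial of $\algU$; pick any $\Pol(\algA)$-preimage $s(x,y)$ of it, so that $y + s(x,y)$ agrees with $x$ modulo $\zeta$. Set $\epsilon(x,y) := m\bigl(x,\; y + s(x,y),\; 0\bigr)$; since $y+s(x,y)\equiv x\ (\mathrm{mod}\ \zeta)$ this lands in $[0]_\zeta$. Then the candidate $s'(x,y) := s(x,y) + \epsilon(x,y)$ is visibly a polynomial of $\algA$, built only from $m$, $0$, and $s$. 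Projecting $y + s'(x,y)$ to $U$ gives $\pi_{_U}(x)$ unchanged because $\pi_{_U}(\epsilon)=0_U$; projecting to $L$ and using the identities $\hat m(u,u,v)=\hat m(v,u,u)=0_L$ (which you correctly derived) shows that $\pi_{_L}(y+s')=\pi_{_L}(y+s)+\pi_{_L}(\epsilon)=\pi_{_L}(y+s)+\bigl(\pi_{_L}(x)-\pi_{_L}(y+s)\bigr)=\pi_{_L}(x)$. The point is that because $\epsilon$ is central, the non-associativity of $+$ introduces no error on the $L$-coordinate. This ``error-correction through the center'' is the actual argument of Freese--McKenzie, Corollary 7.4, and it is what your write-up is missing.
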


We would like to emphasize that the above loop operation '$+$' is in general neither commutative, nor associative, but we choose the additive notation, since this works well with wreath product representations. If e.g. $\bigotimes_{i = 1}^n \algL_i$ is a representation of $\algA$ (with respect to the constant $0$), then $\proj{L_i} (x +^{\algA} y)$ is equal to the sum $\proj{L_i}  (x) +^{\algL_i}  \proj{L_i} (y)$ plus some ``distortion'' function $\hat\phi(\proj{(L_{i+1}\times \cdots \times L_n)}(x,y))$.

Also absorbing polynomials behave nicely with respect to wreath product decomposition, by the following observation:

\begin{observation} \label{observation:absorbing}
Let $\algA = \bigotimes_{i=1}^n \algL_i \otimes \algU$ be a wreath product representation of $\algA$, and $p \in \Pol^k(\algA)$ be a $0$-absorbing polynomial of arity $k \geq 2$. Then, for every $i = 1,\ldots,n$:
$$\proj{L_i}  (p^{\algA}(\bar x)) = \hat p_i (\proj{(L_{i+1}\times \cdots \times L_n \times U)}(\bar x)),
$$
for an absorbing function $\hat p_i \colon (L_{i+1} \times \cdots \times L_{n}\times U)^k \to L_i$. This follows directly from the fact that $p^{\algL_i}$ must be a $k$-ary affine absorbing operation, and therefore constant.
\end{observation}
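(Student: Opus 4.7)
The plan is to exploit the wreath product decomposition of $p^{\algA}$ together with the fact that $\algL_i$ is affine, in order to force the ``affine part'' of the $L_i$-projection to be constant; that constant can then be folded into $\hat p_i$.

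First I would invoke Definition \ref{definition:wreathproduct}, by which
\[
\pi_{_{L_i}} (p^{\algA}(\bar x)) = p^{\algL_i}(\pi_{_{L_i}}(\bar x)) + \hat p_i(\pi_{_{(L_{i+1}\times \cdots \times L_n \times U)}}(\bar x))
\]
for some $p^{\algL_i} \in \Pol^k(\algL_i)$ and some function $\hat p_i$. Since $\algL_i$ is affine, $p^{\algL_i}$ is of the form $p^{\algL_i}(y_1, \ldots, y_k) = \sum_{l=1}^k r_l y_l + c$. For brevity write $\pi$ for the $(L_{i+1} \times \cdots \times L_n \times U)$-projection, and put $0^{L_i} := \pi_{_{L_i}}(0)$.

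The key step is to show that $p^{\algL_i}$ is absorbing at $(0^{L_i}, \ldots, 0^{L_i})$. The main (minor) obstacle here is that $\pi_{_{L_i}}$ is not a homomorphism, so $0$-absorption of $p$ does not descend to $p^{\algL_i}$ directly. I would bypass this by comparing any two tuples $\bar x, \bar x' \in A^k$ satisfying $x_j = x'_j = 0$ and $\pi(\bar x) = \pi(\bar x')$: by $0$-absorption both give $p^{\algA}(\bar x) = p^{\algA}(\bar x') = 0$, and the $\hat p_i$-contributions in the decomposition coincide, so $p^{\algL_i}(\pi_{_{L_i}}(\bar x)) = p^{\algL_i}(\pi_{_{L_i}}(\bar x'))$. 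This means that fixing the $j$-th argument of $p^{\algL_i}$ at $0^{L_i}$ yields a value independent of the remaining arguments, for every $j$.

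Now the arity hypothesis $k \geq 2$ finishes matters: an affine operation $\sum_l r_l y_l + c$ that is absorbing at $(0, \ldots, 0)$ must have all $r_l = 0$, since setting $y_j = 0$ leaves $\sum_{l \neq j} r_l y_l + c$, which can be constant in the remaining variables only if $r_l = 0$ for $l \neq j$; applying this with two distinct indices $j$ covers every $l$. Hence $p^{\algL_i} \equiv c$, and redefining $\hat p_i^{\text{new}}(\bar z) := c + \hat p_i(\bar z)$ yields the claimed representation. Finally, absorption of $\hat p_i^{\text{new}}$ at $(\pi(0), \ldots, \pi(0))$ follows by a last appeal to $0$-absorption: fixing any $j$-th argument to $\pi(0)$ is realized by the choice $x_j = 0$ (with $\pi_{_{L_i}}(x_j)$ arbitrary, since $\hat p_i^{\text{new}}$ no longer depends on it), whereby $p^{\algA}(\bar x) = 0$ and hence $\hat p_i^{\text{new}}(\pi(\bar x)) = 0^{L_i}$, independent of the remaining arguments.
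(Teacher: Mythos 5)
Your proof is correct and follows the same route as the paper's one-line justification, namely deducing from the wreath-product decomposition that $p^{\algL_i}$ is a $k$-ary affine absorbing operation and therefore constant; you merely spell out the details the paper leaves implicit (why absorption descends to $p^{\algL_i}$ despite $\pi_{_{L_i}}$ not being a homomorphism, why an affine absorbing operation of arity $\geq 2$ must be constant, and why the resulting $\hat p_i$ is absorbing).
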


We define the \emph{exponent $\exp(\algL_i)$} of an affine algebra $\algL_i$ to be the exponent of the group reduct $(L_i,+,0,-)$ of $\algL_i$ (i.e. the smallest integer $n$ such that $n \cdot x = x + \cdots + x = 0$ for all $x \in L_i$). Most of the time, we are only going to consider wreath products, such that every affine $\algL_i$ has a prime exponent, so $(L_i,+,0,-) \cong \Z_p^m$, where $p$ is a prime. Let us call such wreath products \emph{elementary} (as the groups $(L_i,+,0,-)$ are elementary Abelian). It is enough to focus on elementary wreath products by the following observation:

\begin{observation} \label{observation:subdivision}
If $\algL$ is a module, and $\algM$ a submodule of $\algL$, then $\algL \cong \alg{M} \otimes (\algL / \alg{M})$. In particular, note that if $\algM = \{ p \cdot x \colon x \in L \}$ for a prime divisor $p$ of $|L|$, then $\exp(\algL / \alg{M}) = p$, and the unary polynomial $p \cdot x$ maps $L$ to $M \times \{0\}$. It is not hard to see that this is also true for affine algebra (wreath product decompositions are preserved under polynomial equivalence).

As a consequence, every wreath product representation $\bigotimes_{i=1}^n \algL_i \otimes \algU$ of a Mal'tsev algebra can be refined into an elementary wreath product, by decomposing every factor $\algL_i$ into elementary components as described above. In particular, every central series $\mathbf 0_A = \alpha_0 \prec \alpha_1 \prec \ldots \prec \alpha_n = \mathbf 1_A$, in which $\alpha_i$ covers $\alpha_{i-1}$ induces an elementary wreath product.
\end{observation}

At last, note that a wreath product also encodes other congruences (and corresponding quotient algebras) than just the series $\mathbf 0_A = \alpha_0 < \alpha_1 < \cdots < \alpha_n$ in the following sense:

\begin{observation} \label{observation:kernel}
Let $\algA = \bigotimes_{i=1}^n \algL_i \otimes \algU$ be a wreath product representation of $\algA$, and let $I \subseteq \{1,\ldots,n\}$, such that for every polynomial $p \in \Pol(\algA)$ and every $j \notin I$, $\proj{L_i} (p(\bar x))$ does not depend on $\proj{L_i}(\bar x)$ for all $i \in I$. Then, projecting $\algA$ to $\bigotimes_{j \notin I} L_j \otimes U$ is a well-defined homomorphism.
\end{observation}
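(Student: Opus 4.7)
My plan is to write $\rho \colon A \to \prod_{j \notin I} L_j \times U$ for the projection that forgets the $L_i$-coordinates with $i \in I$, and to verify that its kernel
\[
\theta := \bigl\{ (x,y) \in A \times A : \pi_{_U}(x) = \pi_{_U}(y) \text{ and } \pi_{_{L_j}}(x) = \pi_{_{L_j}}(y) \text{ for all } j \notin I \bigr\}
\]
is a congruence of $\algA$. Once $\theta \in \Con(\algA)$, the natural bijection $A/\theta \to \prod_{j \notin I} L_j \times U$ turns $\rho$ into a surjective homomorphism, and the formulas below will exhibit the target as a wreath product of the form $\bigotimes_{j \notin I} \algL_j \otimes \algU$.

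To check compatibility, I would fix an arbitrary basic operation $f$ of arity $k$ and tuples $\bar x, \bar y \in A^k$ with $(x_l, y_l) \in \theta$ for every $l$. The $U$-component is immediate from the wreath-product definition: $\pi_{_U}(f^{\algA}(\bar x)) = f^{\algU}(\pi_{_U}(\bar x)) = f^{\algU}(\pi_{_U}(\bar y)) = \pi_{_U}(f^{\algA}(\bar y))$. For each $j \notin I$, the decomposition (\ref{eq:piLi}) reads
\[
\pi_{_{L_j}}(f^{\algA}(\bar z)) = f^{\algL_j}(\pi_{_{L_j}}(\bar z)) + \hat f_j\bigl(\pi_{_{(L_{j+1} \times \cdots \times L_n \times U)}}(\bar z)\bigr).
\]
The affine summand agrees on $\bar x$ and $\bar y$ since $\pi_{_{L_j}}(\bar x) = \pi_{_{L_j}}(\bar y)$. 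For the second summand I would invoke the hypothesis on the polynomial $f$ (recalling that every basic operation is a polynomial): $\pi_{_{L_j}}(f^{\algA}(\bar z))$ does not depend on $\pi_{_{L_i}}(\bar z)$ for any $i \in I$; as $f^{\algL_j}$ only involves $L_j$ with $j \notin I$, the whole $I$-dependence of the right-hand side must be carried by, and hence cancel inside, $\hat f_j$. Consequently $\hat f_j$ effectively depends only on the coordinates $L_k$ with $k > j$ and $k \notin I$, together with $U$; these are exactly the coordinates on which $\bar x$ and $\bar y$ agree, so the two summands coincide.

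Therefore $(f^{\algA}(\bar x), f^{\algA}(\bar y)) \in \theta$, and $\theta$ is a congruence. Moreover, the identity (\ref{eq:piLi}) descends to $A/\theta$ with the same affine piece $f^{\algL_j}$ and a connecting map $\tilde f_j$ obtained from $\hat f_j$ by dropping its vacuous $I$-arguments; this is precisely the structure of the wreath product $\bigotimes_{j \notin I} \algL_j \otimes \algU$. The only delicate point, and where I would pay closest attention, is the step that transfers the global $I$-independence of $\pi_{_{L_j}}(f^{\algA}(\cdot))$ into $I$-independence of the connecting map $\hat f_j$ by separating the two summands in (\ref{eq:piLi}); beyond that bookkeeping the argument is routine.
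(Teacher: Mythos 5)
The paper states this as an observation without a written proof, treating it as self-evident. Your argument --- that the kernel $\theta$ of the projection is a congruence, verified by splitting $\pi_{_{L_j}}(f^{\algA})$ via (\ref{eq:piLi}) into the affine part $f^{\algL_j}$ and the connecting map $\hat f_j$, and using cancellation in the abelian group $(L_j,+)$ to transfer the hypothesized $I$-independence of the whole expression to $\hat f_j$ --- is precisely the routine verification the paper leaves implicit, and it is correct.
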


\section{A result about linearly closed clonoids} \label{sect:clonoids}
In this section we discuss a result about polynomial clonoids, which we define as follows:

\begin{definition}
Let $\algL$, $\algU$ be two algebras with universes $L$ and $U$ respectively. Let us call a set of operations $\mathcal C \subseteq \bigcup_{n \in \N} L^{U^n}$ a \emph{(polynomial) $(\algL,\algU)$-clonoid}, if 
\begin{itemize}
\item $f \circ (g_1,\ldots,g_n) \in \mathcal C$ for all $g_1,\ldots,g_n \in \mathcal C$ and $f \in \Pol(\algL)$
\item $g \circ (h_1,\ldots,h_m) \in \mathcal C$ for all $g \in \mathcal C$ and $h_1,\ldots,h_m \in \Pol(\algU)$
\end{itemize}
\end{definition}

So, in words, $\mathcal C$ is a $(\algL,\algU)$-clonoid, if it is closed under composition with polynomials of $\algL$ from the outside, and polynomials of $\algU$ from the inside. We remark that our notion of a $(\algL,\algU)$-clonoid is a strengthening of the original definition of an \emph{$\algL$-clonoid} in \cite{AM-clonoid}, which is a set of operations $\mathcal C \subseteq \bigcup_{n \in \N} L^{U^n}$, that is only closed under the \emph{term} operations of $\algL$. 

Studying a wreath product goes hand in hand with studying clonoids. In Observation \ref{observation:absorbing} we saw, for instance, that $0$-absorbing polynomials $p$ over a 2-nilpotent algebra $\algA = \algL \otimes \algU$ satisfy $\proj{U} p(\bar x) = 0$ and $\proj{L}(p(\bar x)) = \hat p(\proj{U}(\bar x))$ for some $\hat p \colon U^k \to L$. Thus $p$ can be identified with the map $\hat p \colon U^k \to L$; composing $p$ with polynomials $\Pol(\algA)$ from the inside, and those who preserve $L\times\{0\}$ from the outside corresponds to composing $\hat p$ with polynomials of $\algU$ from the inside, and polynomials of $\algL$ from the outside. Thus the closure under such compositions gives rise to a $(\algL,\algU)$-clonoid.

We are going to show that, for coprime elementary Abelian groups $\algL$ and $\algU$, a $(\algL,\algU)$-clonoid consists either of only constant functions, or is able to express conjunctions in the following sense:

\begin{proposition} \label{proposition:clonoid}
Let $\algL = (\Z_q^l,+,0,-)$ and $\algU = (\Z_p^k,+,0,-)$ for distinct primes $p \neq q$, and let $\mathcal C$ be a $(\algL,\algU)$-clonoid which contains a non-constant operation. Then there exist a hyperplane $H < \algU$ (i.e. a maximal subgroup), an element $l \in L \setminus \{0\}$, and $n$-ary functions $t_n \in \mathcal C$ for every $n \in \N$ such that
\begin{itemize}
\item $t_n(u_1,\ldots,u_n) = l$ if $u_i \in H$ for all $i=1,\ldots,n$
\item $t_n(u_1,\ldots,u_n) = 0$ else.
\end{itemize}
Furthermore $t_n$ can be constructed as a linear combination of expressions $t_1(\sum_{i=1}^n b_i \cdot u_i + c)$ in time $O(p^{n+1})$.
\end{proposition}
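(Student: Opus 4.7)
The plan is to proceed in three stages: (i) use a Fourier projection to reduce to a non-constant function in $\mathcal C$ that factors through a single $\Z_p$-linear form on $\algU$; (ii) in this reduced one-variable setting, construct $t_1 = l\cdot\mathbf 1_H$ using representation theory over $\Z_q$; and (iii) assemble $t_n$ from $t_1$ via an elementary counting identity. The main obstacle is stage (ii); stages (i) and (iii) are direct computations.

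\textbf{Stage (i).} Fix any non-constant $g \in \mathcal C$; by substituting constants into all but one argument we may assume $g$ is unary. Let $\omega \in \overline{\mathbb F_q}$ be a primitive $p$-th root of unity (which exists as $\gcd(p,q)=1$) and let $\hat g(a) := \sum_u g(u)\,\omega^{-\langle a, u\rangle} \in \overline{\mathbb F_q}^l$ denote the Fourier transform. Since $g$ is non-constant, choose $a_0 \in \Z_p^k \setminus \{0\}$ with $\hat g(a_0) \neq 0$; set $e(u) := \langle a_0, u\rangle$ and $H := \ker e$, a hyperplane of $\algU$. Because $p^{k-1}$ is invertible in $\Z_q$, the function
\[
f(u)\;:=\;\frac{1}{p^{k-1}}\sum_{c \in H} g(u+c)
\]
is a $\Z_q$-linear combination of translates of $g$ and therefore lies in $\mathcal C$. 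By orthogonality of characters on the subgroup $H$ one has $\sum_{c \in H}\omega^{\langle a, c\rangle} = |H|\cdot\mathbf 1_{H^\perp}(a)$, whence $\hat f = \hat g \cdot \mathbf 1_{H^\perp}$; hence $\hat f$ is supported on $H^\perp = \Z_p\cdot a_0$, and $f$ factors as $f = F \circ e$ with $F:\Z_p \to L$ non-constant (since $\hat f(a_0) = \hat g(a_0) \neq 0$).

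\textbf{Stage (ii).} For every $n \in \Z_p$ and $c \in \algU$, the composition $f(nu+c) = F(ne(u)+e(c))$ lies in $\mathcal C$; taking $\Z_q$-linear combinations and adding $L$-valued constants, we reach every function of the form $(\Phi\circ e)(u)$ with $\Phi$ in the $\Z_q[\mathrm{Aff}(\Z_p)]$-submodule $S \leq L^{\Z_p}$ generated by $F$ together with $L$. Maschke's theorem applied to $\Z_q[\Z_p]$ (semisimple since $\gcd(p,q)=1$) gives $L^{\Z_p} = L \oplus (L\otimes V)$ with $V$ the augmentation ideal, and over $\overline{\mathbb F_q}$ the module $V\otimes\overline{\mathbb F_q}$ decomposes into the non-trivial characters $\chi_b$ of $\Z_p$, grouped by Galois orbits under $b \mapsto qb$ and permuted transitively by the scaling action of $\Z_p^*$. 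Decomposing $F = F^{\mathrm{avg}} + F^0$ with $F^0 \in L\otimes V$ non-zero, a Fourier-projection step that extracts the contribution of a single Galois orbit via a suitable $\Z_q$-linear combination of translates of $F$, combined with dilations $t \mapsto nt$ to cover the remaining orbits, shows that after $\overline{\mathbb F_q}$-extension $S$ contains every Fourier configuration whose coefficient at each $b \neq 0$ lies in the $\overline{\mathbb F_q}$-span $M$ of $\{\hat F(b')\}_{b'\neq 0}$. Since $M$ is $\Gamma$-stable and non-zero, Galois descent gives $M \cap L \neq 0$; picking $l \in (M\cap L)\setminus\{0\}$, the non-constant part $\tilde t_l^0 := \tilde t_l - l/p$ of $\tilde t_l(t) := l\cdot[t=0]$ lies in $S$, and adding the constant $l/p \in \mathcal C$ yields $\tilde t_l \in S$. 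Pulling back by $e$ gives the required $t_1(u) := \tilde t_l(e(u)) = l\cdot \mathbf 1_H(u) \in \mathcal C$.

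\textbf{Stage (iii).} An elementary counting argument on $\Z_p^n$ yields the identity
\[
\prod_{i=1}^n [y_i = 0] \;=\; \frac{1}{p^n}\sum_{a \in \Z_p^n}\bigl([a\cdot y = 0] - [a\cdot y = 1]\bigr) \qquad (y \in \Z_p^n),
\]
since at $y = 0$ the right-hand side equals $p^{-n}(p^n - 0) = 1$, whereas for $y \neq 0$ each level set $\{a : a\cdot y = b\}$ has cardinality $p^{n-1}$ so the signed sum vanishes. Pick $c_0 \in \algU$ with $e(c_0) = -1$ (possible as $e$ is surjective). Substituting $y_i = e(u_i)$ and using $e(\sum_i a_i u_i) = \sum_i a_i e(u_i)$, we obtain
\[
t_n(u_1,\ldots,u_n) \;=\; \frac{1}{p^n}\sum_{a \in \Z_p^n}\Bigl(t_1\bigl(\textstyle\sum_i a_i u_i\bigr) - t_1\bigl(\textstyle\sum_i a_i u_i + c_0\bigr)\Bigr),
\]
a $\Z_q$-linear combination of $2p^n$ expressions of the required form $t_1(\sum_i b_i u_i + c)$, with coefficients $\pm p^{-n} \in \Z_q$ (valid since $\gcd(p,q)=1$), enumerable in time $\mathcal O(d^n)$ for $d = 2p$.
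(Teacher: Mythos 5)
Your proof is correct, but takes a genuinely different route from the paper. In Stage (i) you locate the hyperplane $H$ directly via a nonvanishing Fourier coefficient $\hat g(a_0)\neq 0$ and an $H$-averaging, whereas the paper (Lemma~\ref{lemma:pzero}, step (3)) establishes the existence of a suitable hyperplane by an elementary double-counting argument over all $\frac{p^k-1}{p-1}$ hyperplanes of $\Z_p^k$, using the normalization $t(0)=0$, $\sum_u t(u)\neq 0$; your route is slicker. Stage (ii) is where you diverge most: you pass to $\overline{\mathbb F_q}$, diagonalize the translation action into characters, analyze the $\Z_q[\mathrm{Aff}(\Z_p)]$-module $S$ through its $\overline{\mathbb F_q}$-isotypic decomposition, and descend via Galois. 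The paper never leaves $\Z_q$: after the normalizations (1),(2), it produces the $\{0,l\}$-valued function in a single line as $\sum_{i=0}^{p-1}t(ix)$. Your Galois-descent machinery is sound (the orbit-sum idempotents are $\Z_q$-rational, $M$ is Frobenius-stable so $M\cap L\neq 0$, and rationality of a $\Z_q$-subspace recovers $\tilde t_l\in S$ from $\tilde t_l\in S\otimes\overline{\mathbb F_q}$), but it is far heavier than necessary: once you have the non-constant $F\colon\Z_p\to L$, the elementary dilation trick $\sum_{i=0}^{p-1}F(is)$ followed by subtracting it from a constant (an affine $\algL$-polynomial) already yields $l\cdot\mathbf 1_{\{0\}}$ over $\Z_q$, with no field extension. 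Stage (iii) replaces the paper's recursion (cited from \cite{AM-zpq}) with a clean one-shot hyperplane-counting identity over $\Z_p^n$; both deliver the claimed $\mathcal O(d^n)$ representation. One minor observation: your $t_1=l\cdot\mathbf 1_H$ matches the orientation written in the Proposition, while the paper's proof sets $t_1=t$ from Lemma~\ref{lemma:pzero}(4), which is $0$ on $H$ and $l$ off $H$; there is a small orientation discrepancy in the paper between the Proposition's statement, its proof, and the later use in Claim~1 of the proof of Theorem~\ref{theorem:main}, which your construction sidesteps.
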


We remark that, for the special case $l = k = 1$, this was already shown in \cite[Lemma 5.2]{AM-zpq}. Further \cite[Lemma 3.1.]{IKK-examples}) contains a version for 1-dimensional vector spaces $\algL$, $\algU$ over arbitrary coprime fields. In order to prove Proposition \ref{proposition:clonoid}, we first prove the following lemma:

\begin{lemma} \label{lemma:pzero}
Let $\algL$, $\algU$ and $\mathcal C$ be as in Proposition \ref{proposition:clonoid}. Then there is a unary $t \in \mathcal C$, such that $t$ is not constant and
\begin{enumerate}
\item $t(0) = 0$
\item $\sum_{u \in U}t(u) \neq 0$
\item $t$ only depends on $\algU / H \cong \Z_p$, for a hyperplane $H < \algU$.
\item $t(u) = 0$ if $u \in H$ and $t(u) = l \neq 0$ else.
\end{enumerate}
\end{lemma}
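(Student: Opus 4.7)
My plan is to construct $t$ through a sequence of reductions: first to a non-constant unary function vanishing at $0$, then to one that factors through $U/H \cong \Z_p$, then to one with nonzero sum over $\Z_p$, and finally to the required two-valued form. For the first two reductions, given any non-constant $f \in \mathcal C$, I fix all but one coordinate to suitable constants (which lie in $\Pol(\algU)$) to obtain a non-constant unary $t_0 \in \mathcal C$, and set $t_1(x) := t_0(x) - t_0(0)$; this is in $\mathcal C$ because subtracting a constant of $L$ is a composition with a polynomial in $\Pol(\algL)$ from the outside. Since $t_1$ is not identically zero, I pick $w \in U$ with $t_1(w) \neq 0$, extend the assignment $cw \mapsto c$ to a surjective $\Z_p$-linear functional $\mu \colon U \to \Z_p$, and compose with $\phi(x) := \mu(x) \cdot w \in \Pol(\algU)$. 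The function $t_2 := t_1 \circ \phi \in \mathcal C$ depends only on $\mu(x)$, hence descends to a non-constant map $\tilde t_2 \colon \Z_p \to L$ with $\tilde t_2(0)=0$; the hyperplane will be $H := \ker\mu$.

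The delicate step is ensuring $\sum_{a \in \Z_p} \tilde t_2(a) \neq 0$. If it already is, keep $t_3 := t_2$. Otherwise, pick $a_0$ with $\tilde t_2(a_0) \neq 0$ (which exists since $\tilde t_2$ is non-constant and vanishes at $0$) and set $t_3(x) := t_2(x + a_0 w) - t_2(a_0 w) \in \mathcal C$. Then $\tilde t_3(y) = \tilde t_2(y + a_0) - \tilde t_2(a_0)$ is non-constant, vanishes at $0$, and a short computation gives
\[
\sum_{x \in \Z_p} \tilde t_3(x) = \sum_{y \in \Z_p} \tilde t_2(y) - p\,\tilde t_2(a_0) = -p\,\tilde t_2(a_0) \neq 0,
\]
the last inequality because $p$ is invertible in $L$. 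This is the single step that really uses the coprimality $p \neq q$.

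Finally, to force the two-valued step-function form, I symmetrise using the multiplicative group $\Z_p^*$ by setting $t(x) := \sum_{c \in \Z_p^*} t_3(c x)$, which is in $\mathcal C$ since each $x \mapsto cx$ is in $\Pol(\algU)$ and the outer $\Z_q$-linear sum lies in $\Pol(\algL)$. Because $\tilde t_3(0)=0$, one has $t(x)=0$ whenever $x \in H = \ker\mu$. For $x \notin H$ the map $c \mapsto c\,\mu(x)$ is a bijection of $\Z_p^*$, so $t(x) = \sum_{z \in \Z_p \setminus \{0\}} \tilde t_3(z) = \sum_z \tilde t_3(z) =: l \neq 0$, giving the required shape. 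I expect the middle step to be the main obstacle: it is the only stage that cannot be carried out with $\Pol(\algU)$-operations alone, and the trick of shifting by $a_0 w$ and then subtracting $t_2(a_0 w)$ genuinely needs both the ability to translate by a constant of $L$ (via $\Pol(\algL)$) and the coprimality $\gcd(p,q)=1$ (to ensure that $p$ is invertible in $L$).
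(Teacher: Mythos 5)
Your overall plan mirrors the paper's (a chain of reductions: vanish at $0$, factor through a hyperplane quotient, force a nonzero sum, then symmetrise to the two-valued form), and the translation trick for the ``nonzero sum'' step and the $\Z_p^*$-symmetrisation at the end are both correct and essentially identical to the paper's items (2) and (4). The order in which you do (2) and (3) is swapped relative to the paper, but that is harmless, as you argue.

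The gap is in the step where you force $t$ to depend only on $U/H$. You claim that $\phi(x) := \mu(x)\cdot w$ lies in $\Pol(\algU)$. It does not, except when $k=1$. Here $\algU = (\Z_p^k,+)$ is only a group, so its unary polynomial operations are exactly the maps $x \mapsto c\cdot x + a$ with $c \in \{0,1,\dots,p-1\}$ acting by iterated addition and $a \in U$. The map $\phi(x) = \mu(x)\cdot w$ is a rank-one $\Z_p$-linear endomorphism of $U$, and for $k>1$ and a generic choice of $w$ it is not a scalar multiplication: for instance with $k=2$, $\mu(x_1,x_2)=x_1$ and $w=(1,1)$, the map $(x_1,x_2)\mapsto(x_1,x_1)$ cannot be written as $c(x_1,x_2)+a$. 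Consequently $t_2 := t_1\circ\phi$ need not belong to $\mathcal C$, and the argument breaks. This is not a cosmetic issue: there is no unary $\phi \in \Pol(\algU)$ with the desired projection behaviour, so the step cannot be fixed by a cleverer choice of $\mu$ or $w$.

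The paper circumvents exactly this obstacle by averaging over a hyperplane: it forms $t_H(x) = \sum_{h\in H} t(x+h)$, which \emph{is} in $\mathcal C$ because each $x\mapsto x+h$ is a translation in $\Pol(\algU)$ and the outer sum is in $\Pol(\algL)$, and by construction $t_H$ factors through $U/H$. The price of this move is that $t_H$ could be constant, and this is why the paper needs the double-counting argument over all $\frac{p^k-1}{p-1}$ hyperplanes (using that each nonzero element lies in $\frac{p^{k-1}-1}{p-1}$ of them) to show that some $t_H$ survives non-constant. Once you have a non-constant $t_H$ factoring through a hyperplane quotient, the rest of your argument goes through unchanged.
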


\begin{proof}
By assumption $\mathcal C$ contains a non-constant function $t$. Without loss of generality we can assume that $t$ is unary, otherwise we substitute all but one of its variables by suitable constants. The additional properties can be shown as follows:

\begin{enumerate}
\item If $t$ does not have this property, we take $t'(x) = t(x)-t(0)$ instead.

\item If $\sum_{u \in U}t(u) = 0$ we take the map $t'(x) = t(x+a)-t(a)$ instead, where $a$ is an element with $t(a) \neq 0$. Then $\sum_{u \in U} t'(u) = - p^k \cdot t(a)$, which is not equal to $0$ in $\algL$, since $p \neq q$. Note that $t'$ is still not constant, and $t'(0) = 0$.

\item For every hyperplane $H < U$, let us define the map $t_H(x) = \sum_{h \in H} t(x + h)$. We claim that there is a hyperplane $H$ such that $t_H$ is not constant. For contradiction, assume that this is not the case, i.e. every $t_H$ is constant.

Let us define $c = \sum_{u \in U}t(u) = \sum_{u \in U \setminus \{0\}}t(u)$, which is not equal to $0$ by (2). For any hyperplane $H$, and any enumeration $a_1+H,\ldots,a_p+H$ of its cosets we have $t_H(a_1) + t_H(a_2) + \cdots + t_H(a_p) = \sum_{u \in U} t(u) = c$. Therefore, for every $H$ the constant function $t_H(x)$ must be equal to $p^{-1} c$ (where $p^{-1}$ is a multiplicative inverse of $p$ modulo $q$).

Next, observe that there are $\frac{p^k-1}{p-1} = \sum_{i = 0}^{k-1}p^i$ hyperplanes of $\Z_p^k$, so $\sum_{H} t_H(0) = p^{-1} c \cdot (\sum_{i = 0}^{k-1}p^i)$. On the other hand, every element $x \in \Z_p^k \setminus \{0\}$ is contained in $\frac{p^{k-1}-1}{p-1} = \sum_{i = 0}^{k-2}p^i$ distinct hyperplanes. 
Since $t(0) = 0$, this implies $\sum_{H} t_H(0) = (\sum_{x \in U \setminus \{0\}} t(x)) = c \cdot (\sum_{i = 0}^{k-2}p^i)$.
These two equations imply that $0 = p^{-1}c$, which is a contradiction to $c \neq 0$. Thus there is a hyperplane $H$, such that $t_H$ is not constant. Without loss of generality we can further assume that (1) and (2) hold for $t_H$ (otherwise we repeat the constructions in (1) and (2)).

\item Assume that $t$ satisfies items (1),(2),(3). It follows from (2) that $l:= t(0) + t(a) + t(2\cdot a) + \cdots t((p-1)\cdot a) \neq 0$, where $0, a, 2\cdot a,\ldots, (p-1)\cdot a$ are representatives of the cosets of $H$. Then the map $\sum_{i = 0}^{p-1} t(i \cdot x)$ maps $H$ to $0$ and its complement to $l$.
\end{enumerate}
\end{proof}

Proposition \ref{proposition:clonoid} now follows straightforward from the existence of this unary function:

\begin{proof}[Proof of Proposition \ref{proposition:clonoid}]

Let $t \in \mathcal C$ be the unary function constructed in Lemma \ref{lemma:pzero}. Then we set $t_1(u) = t(u)$. Note that $t_1$ only depends on $\algU/H \cong \Z_p$ and it's image is $\{0,l\}$, which generates a subgroup of $L$, isomorphic to $\Z_q$. It follows from the analysis of $(\Z_q,\Z_p)$-clonoids in \cite{AM-zpq} that we can obtain all function $t_n$ by the recursion

$$t_{n+1}(u_1,\ldots,u_n,u_{n+1}) = p^{-1} \cdot (\sum_{i = 0}^{p-1} t_{n}(u_1,\ldots,u_{n-1}, u_n - i \cdot u_{n+1}) - \sum_{i = 1}^{p-1} t_{n}(u_1,\ldots u_{n-1},u_n - i\cdot a)),$$

where $0, a, 2a,\ldots, (p-1)a$ are transversal to the cosets of $H$, and $p^{-1}$ is an integer such that $p^{-1} \cdot p = 1 \mod q$. Note that by this recursion $t_{n}$ can be written as a linear combination of the $p^{n+1}$ many expressions $t_1(\sum_{i=1}^n \alpha_i u_i + c)$.
\end{proof}

As a direct consequence of Proposition \ref{proposition:clonoid}, $\mathcal C$ contains \emph{all} operations that only depend on $\algU/H$ and map to the subgroup generated by $l$:

\begin{corollary} \label{corollary:clonoid}
Let $\algL = (\Z_q^l,+,0,-)$ and $\algU = (\Z_p^k,+,0,-)$ with $q \neq p$, and let $\mathcal C$ be a $(\algL,\algU)$-clonoid which contains a non-constant operation. Then there exist a hyperplane $H < \algU$ and a cyclic subgroup $\langle l \rangle \leq \algL$, such that for every operation $f \colon (U/H)^n \to \langle l \rangle$, there is a $f' \in \mathcal C$ with $f'(x_1,\ldots,x_n) = f(x_1H, \ldots, x_nH)$, which can be constructed as linear combination of expressions $t_1(\sum_{i=1}^n b_i \cdot u_i + c)$ in time $O(p^{2n+1})$.
\end{corollary}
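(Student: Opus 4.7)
The plan is to derive the corollary almost mechanically from Proposition \ref{proposition:clonoid}, using the fact that a polynomial $(\algL,\algU)$-clonoid is closed under composition with polynomials of $\algL$ on the outside (which, since $\algL$ is an Abelian group, includes pointwise addition and multiplication by any integer scalar) and with polynomials of $\algU$ on the inside (which, since $\algU$ is an Abelian group, include all constant shifts $x \mapsto x - v$).

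First, let $H$, $l$ and the sequence $(t_n)$ be as supplied by Proposition \ref{proposition:clonoid}. Read modulo $H$, the function $t_n$ is exactly $l \cdot \chi_{(0,\ldots,0)}$, where $\chi_v$ denotes the indicator function of the point $v \in (U/H)^n$. Then for any $\bar v = (v_1,\dots,v_n) \in (U/H)^n$, picking coset representatives $v_i \in U$, the shifted function
\[
t_n^{\bar v}(x_1,\ldots,x_n) := t_n(x_1 - v_1,\ldots,x_n - v_n)
\]
still lies in $\mathcal C$ (inner composition with the $\algU$-polynomials $x_i - v_i$) and equals $l \cdot \chi_{\bar v}(x_1 H,\ldots,x_n H)$. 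Since $\langle l \rangle \cong \Z_q$ and $\Pol(\algL)$ contains multiplication by any integer, every $c \in \langle l \rangle$ is of the form $k \cdot l$ for some $k$, so $k \cdot t_n^{\bar v} \in \mathcal C$ realizes $c \cdot \chi_{\bar v}$.

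Given an arbitrary $f \colon (U/H)^n \to \langle l \rangle$, I would expand it in the indicator basis,
\[
f = \sum_{\bar v \in (U/H)^n} f(\bar v) \cdot \chi_{\bar v},
\]
and take $f'$ to be the corresponding sum $\sum_{\bar v} k_{\bar v} \cdot t_n^{\bar v}$, which lies in $\mathcal C$ by closure under addition. This already produces the function asked for. To obtain the claimed syntactic form, expand each summand using Proposition \ref{proposition:clonoid}: each $t_n^{\bar v}$ is a linear combination of $\mathcal O(d_0^n)$ expressions of the form $t_1(\sum_i b_i(x_i - v_i) + c') = t_1(\sum_i b_i x_i + c'')$, and one sums $p^n$ of these, giving a final expression of size $\mathcal O((p\cdot d_0)^n) = \mathcal O(d^n)$ in the required basic expressions, constructible in the same time bound.

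There is no real obstacle here; the only points to double-check are that the shift $x_i \mapsto x_i - v_i$ is genuinely an $\algU$-polynomial (true because $\algU$ is an Abelian group, so negation and constants are polynomial operations), and that integer scalar multiplication and addition on $\algL$ are polynomial operations (true for the same reason on $\algL$). Everything else is bookkeeping: identifying the basis of indicator functions on $(U/H)^n$ and counting terms.
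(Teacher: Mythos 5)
Your proof is correct and follows essentially the same route as the paper: shift $t_n$ by coset representatives to obtain indicator-type functions, take the linear combination $\sum_{\bar v} f(\bar v)\cdot t_n(\bar x - \bar v')$, and expand each summand via Proposition~\ref{proposition:clonoid} to get the $\mathcal O((pd)^n)$ bound. The paper states this more tersely (without naming the indicator basis or spelling out that $\langle l\rangle \cong \Z_q$), but the argument is the same.
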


\begin{proof}
Let $H$, $l$, and $t_n$, $n \geq 1$ be as in Proposition \ref{proposition:clonoid}. Note that the function $f'$ can be obtained as a linear combinations of the form $f'(\bar x) = \sum_{\bar a \in (U/H)^n} f(\bar a) \cdot t_n(\bar x- \bar a')$, where $\bar a' = (a_1',\ldots,a_n')$ is a tuple, such that $a_i = a'_i H$ for every $i$. Expressing all of the $p^n$-many summands $t_n(\bar x- \bar a')$ as a linear combination of expressions $t_1(\sum_{i=1}^n b_i \cdot u_i + c)$ according to Proposition \ref{proposition:clonoid}, allows us to compute  such a linear combination also for $f$ in exponential time $O(p^{2n+1})$.
\end{proof}

\section{Absorbing Polynomials} \label{sect:absorbing}
In this section we discuss, based on Theorem \ref{theorem:absorbing}, how absorbing polynomials can be used to characterize the higher commutator in nilpotent Mal'tsev algebras $\algA$. In particular, we show that in the finite case any block of the congruence $[\alpha_1,\ldots,\alpha_k]$ can be written as the image of the corresponding blocks of $\alpha_1,\ldots,\alpha_k$ under some polynomial $p \in \Pol(\algA)$.

\begin{lemma} \label{lemma:absorbing}
Let $\algA$ be a nilpotent Mal'tsev algebra, $0 \in A$, $\alpha_1,\ldots,\alpha_k \in \Con(\algA)$, $\gamma = [\alpha_1,\ldots,\alpha_k]$ and $L =  [0]_{\alpha_1} \times [0]_{\alpha_2} \times \cdots \times [0]_{\alpha_k}$. Then
\begin{enumerate}
\item[(1)] $\gamma$ is generated, as a congruence, by the pairs
$$R = \{ (0,p(\bar b)) \colon \bar b \in L \text{ and } p \in \Pol^k(\algA) \text{ is $0$-absorbing} \}$$
\item[(1')] $\gamma$ is generated, as a congruence, by the pairs
$$R' = \{ (0,p(\bar b)) \colon \bar b \in L \text{ and } p \in \Pol^k(\algA) \text{ such that } p|_L \text{ is $0$-absorbing} \}$$
\item[(2)] If $|A|$ is finite, then $[0]_\gamma = p_1(L) + p_2(L) + \cdots + p_m(L)$, where all $p_i$ are $k$-ary $0$-absorbing polynomials (and the loop product with respect to $+$ is left associated).
\end{enumerate}
\end{lemma}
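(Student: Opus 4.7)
The plan is to derive (1) and (1') from Theorem \ref{theorem:absorbing} via the loop structure provided by Theorem \ref{theorem:loop}, and then to bootstrap (1) into the concrete image description (2) by exploiting that $[0]_\gamma$ is closed under $+$.

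\textbf{Proof of (1).} By Theorem \ref{theorem:absorbing}, $\gamma$ is generated as a congruence by the set $S = \{(c(\bar a), c(\bar b)) : c \in \Pol^k(\algA) \text{ absorbing at } \bar a,\ (a_i, b_i) \in \alpha_i\}$. The inclusion $R \subseteq S$ is immediate by setting $\bar a = \bar 0$. For the opposite direction, given an arbitrary $(c(\bar a), c(\bar b)) \in S$, I translate via the loop by defining
\[
p(\bar x) := m\bigl(c(a_1 + x_1, \ldots, a_k + x_k),\, c(\bar a),\, 0\bigr).
\]
Since $c$ is absorbing at $\bar a$, plugging $x_i = 0$ for any $i$ collapses $p$ to $m(c(\bar a), c(\bar a), 0) = 0$, so $p$ is $0$-absorbing. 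Setting $c_i := a_i \backslash b_i$, the unary polynomial $y \mapsto a_i \backslash y$ sends $a_i$ to $0$, so congruence invariance forces $c_i \in [0]_{\alpha_i}$, i.e., $\bar c \in L$; a direct computation then yields $p(\bar c) = c(\bar b)/c(\bar a)$. Finally, the unary translation $y \mapsto m(y, 0, c(\bar a)) = y + c(\bar a)$ sends the pair $(0, p(\bar c)) \in R$ to $(c(\bar a), c(\bar b))$ by the Maltsev and loop identities, showing that every element of $S$ lies in the congruence generated by $R$.

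\textbf{Proof of (1').} Since $R \subseteq R'$, the congruence generated by $R'$ contains $\gamma$. For the reverse, I show each pair $(0, p(\bar b)) \in R'$ lies in $\gamma$ by directly invoking the centralizer condition $C(\alpha_1, \ldots, \alpha_k; \gamma)$, which holds by the very definition of $\gamma$. Applying Definition \ref{definition:commutator} to $p$ with singleton tuples $\bar a_i := 0$ and $\bar b_i := b_i$, every test tuple $\bar y \in \prod_{i<k}\{0, b_i\}$ distinct from $(b_1, \ldots, b_{k-1})$ contains a zero coordinate and lies in $L^{k-1}$, so the hypothesis that $p|_L$ is $0$-absorbing forces $p(\bar y, 0) = p(\bar y, b_k) = 0$. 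The premise of the centralizer condition is thus trivially satisfied, and its conclusion yields $(p(b_1, \ldots, b_{k-1}, 0), p(\bar b)) = (0, p(\bar b)) \in \gamma$.

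\textbf{Proof of (2).} The class $[0]_\gamma$ is closed under the loop $+$, since $u + v = m(u, 0, v) \, \gamma \, m(0, 0, 0) = 0$ whenever $u, v \in [0]_\gamma$. Together with (1), this gives the $\supseteq$-direction: every left-associative sum of the required form lies in $[0]_\gamma$. For the converse, I combine (1) with the one-step characterization of congruence generation in Maltsev varieties to write each $x \in [0]_\gamma$ as $x = q(r_1, \ldots, r_n)$ for some polynomial $q$ with $q(\bar 0) = 0$ and $r_i = p_i(\bar b_i) \in R_0$. I then decompose $q(\bar r)$ as a left-associative loop sum $q_1(\bar r) + q_2(\bar r) + \cdots + q_n(\bar r)$, where $q_j(\bar y) := q(y_1, \ldots, y_{j-1}, 0, \ldots, 0) \backslash q(y_1, \ldots, y_j, 0, \ldots, 0)$ vanishes whenever $y_j = 0$. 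Substituting $y_i = p_i(\bar b_i)$ and freezing the $\bar b_l$ for $l \ne j$ as constants realises each summand as the value of a $k$-ary $0$-absorbing polynomial on $\bar b_j \in L$, since each $p_i$ being $0$-absorbing makes any zero coordinate of $\bar b_j$ kill $q_j$. The delicate step -- which I expect to be the main obstacle -- is extracting a universal list $p_1, \ldots, p_m$ independent of $x$; for this I would use finiteness of $[0]_\gamma$ to enumerate a subloop generating set, absorb all the needed $q_j$'s into a fixed polynomial per slot via the freedom in $\bar b_j \in L$, and pad shorter sums by $p(\bar 0) = 0$ so that each $x$ is hit by exactly $m$ terms.
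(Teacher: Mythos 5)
Your proof of (1) has a genuine gap in the final step. You define $p(\bar x) := m\bigl(c(a_1 + x_1,\ldots,a_k + x_k), c(\bar a), 0\bigr)$ and then claim that $p(\bar c) = c(\bar b)/c(\bar a)$, and that the translation $y \mapsto m(y,0,c(\bar a))$ sends $(0,p(\bar c))$ to $(c(\bar a),c(\bar b))$. Both claims rely on the identity $m\bigl(m(y,z,0),0,z\bigr) = y$, i.e.\ on $m(y,z,0)$ being the loop right-division $y/z$. This is not true in a general nilpotent Maltsev algebra: the loop $x+y = m(x,0,y)$ need not be associative or commutative, and $m$ satisfies no identities beyond the two Maltsev ones, so $m(m(y,z,0),0,z)$ can differ from $y$. (Concretely, in an elementary $2$-nilpotent wreath product $\algL\otimes\algU$ one can choose the cocycle $\hat m$ so that the identity fails on the $L$-component.) The paper avoids this by literally invoking the division polynomial $/$ guaranteed by Theorem \ref{theorem:loop}: with $p(\bar x) := c(a_1+x_1,\ldots,a_k+x_k)/c(\bar a)$ one still gets $0$-absorption, and $p(\bar c) + c(\bar a) = c(\bar b)$ holds by the defining identity $(y/z)+z=y$. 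Your argument works once you make this replacement, and is then essentially the paper's proof.

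Your (1') matches the paper's argument. For (2) you take a genuinely different route: the paper fixes a maximal sum $N = p_1(L)+\cdots+p_m(L)$ and derives a contradiction from $d\in[0]_\gamma\setminus N$ via Maltsev's chain lemma, whereas you invoke the one-step description of the congruence generated by $R$ in a Maltsev algebra and telescope $q(\bar r)$ into a left-associated loop sum $q_1(\bar r)+\cdots+q_n(\bar r)$ with each $q_j$ vanishing when $y_j=0$. That telescoping is correct, and the ``delicate step'' you flag really is just bookkeeping: concatenate the (finitely many) lists of summands over all $x \in [0]_\gamma$ into one universal list $p_1,\ldots,p_m$, and for a given $x$ set $\bar b_j = \bar 0 \in L$ in every slot not belonging to $x$'s own representation. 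Since $0$ is a two-sided neutral element for $+$, the padded left-associated sum still evaluates to $x$. So the scheme goes through, but you should state the padding explicitly rather than gesture at ``absorbing the $q_j$'s into a fixed polynomial per slot,'' which as written is not a precise construction.
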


\begin{proof}
By Theorem \ref{theorem:absorbing} the commutator $\gamma = [\alpha_1,\ldots,\alpha_k]$ is generated by the pairs 
$$S = \{(c(\bar a),c(\bar b)) \colon (a_i, b_i) \in \alpha_i \text{ for all } i, \text{ and } c \in \Pol^k(\algA) \text { is absorbing at } \bar a \}.$$
It is clear that $R \subseteq S \subseteq \gamma$. For the other inclusion, it is enough to prove that a congruence generated by $R$ contains $S$. To see this, let $(c(\bar a),c(\bar b)) \in S$ for a polynomial $c$ that is absorbing at $\bar a$. We then define the polynomial $p(x_1,\ldots,x_n) = c(x + a_1,\ldots, x + a_k) / c(a_1,\ldots,a_k)$, and the tuple $\bar u = (b_1 / a_1,\ldots, b_k/a_k) \in L$. Note that $p$ is $0$-absorbing, and thus $(0,p(\bar u)) \in R$. Since $(0,p(\bar u)) + (c(\bar a),c(\bar a)) = (c(\bar a),c(\bar b))$, every congruence that contain $(0,p(\bar u))$, also must contain $(c(\bar a),c(\bar b))$. This finishes the proof of (1).

For (1') note that $R \subseteq R'$, hence the congruence generated by $R'$ must contain $\gamma$. On the other hand, it is not hard to see that $R' \subseteq \gamma$, by applying the Definition \ref{definition:commutator} for $\gamma$ and any $p \in \Pol(\algA)$, such that $p|_L$ is $0$-absorbing, 

For (2) note that, since $\gamma$ is a congruence, $[0]_\gamma$ is closed under the loop multiplication $x+ y$. Therefore, for every list of $k$-ary $0$-absorbing polynomials $p_1,\ldots,p_m$ we have $p_1(L) + p_2(L) +\cdots + p_m(L) \subseteq [0]_\gamma$. Let $N$ be a maximal set of the form $p_1(L) + p_2(L) + \cdots + p_m(L)$ (such a maximal set exists, since $A$ is finite).

We claim that $N = [0]_\gamma$. For contradiction assume that there is an element $d \in [0]_\gamma \setminus N$. By Mal'tsev's chain lemma (see e.g. \cite[Lemma 3.1.(1)]{IK-CSAT}), we know that there is a chain of elements $0 = a_0, a_1, \ldots, a_n, a_{n+1} = d$, such that for every $i=0,\ldots,n$, either $(a_i,a_{i+1})$ or $(a_{i+1},a_i)$ is the image of a pair $(0,p(b_1,\ldots,b_k)) \in R$ under a unary polynomial $u \in \Pol(\algA)$. Without loss of generality we can further assume that $a_i \in N$ for all $i = 1,\ldots,n$ (otherwise we replace $d$ by the minimal $a_i$, which is not in $N$).

If $(a_{n},a_{n+1}) = (u(0),u(p(b_1,\ldots,b_k)))$, then $a_{n} \backslash a_{n+1} = u(0) \backslash u(p(b_1,\ldots,b_k))$ is the image of $L$ under the $0$-absorbing polynomial $q(x_1,\ldots,x_n) = u(0)\backslash u(p(x_1,\ldots,x_k))$. But then $d = a_{n} + (a_n \backslash a_{n+1}) \in N + q(L)$, which is a contradiction to the maximality of $N$. Symmetrically, if $a_{n} \backslash a_{n+1} = u(p(b_1,\ldots,b_k)) \backslash u(0)$ is in the image of $L$ under the $0$-absorbing polynomial $q(x_1,\ldots,x_k) = u(p(b_1,\ldots,b_k)) \backslash u(0)$, then $d = a_{n} + (a_n \backslash a_{n+1}) \in N + q(L)$, which is a contradiction.
\end{proof}

In the proof of Lemma \ref{lemma:verbalminimal} we only used that $\Pol(\algA)$ contains loop operations $+,/,\backslash$ as polynomial, thus the lemma also holds in this more general setting. Moreover, Lemma \ref{lemma:verbalminimal}(2) implies that the congruence blocks of all congruences $\gamma$ that can be defined using $\mathbf 1_A$ and (nested) commutator brackets (i.e. $\gamma \in \Xi(\mathbf 1_A)$ in the notation of \cite{wires-supernilpotent}), are the image of some polynomial operation.

\section{Wreath product representations} \label{sect:wreathproduct}
In this section we discuss some properties of the wreath product representation of nilpotent algebras from Definition \ref{definition:wreathproduct}. In particular we investigate the behaviour of such wreath product representation with respect to supernilpotent congruences, in order to prove Proposition \ref{proposition:main}.

\subsection{The lower central series}
By Lemma \ref{lemma:absorbing} we know that, for each element of the series $\alpha \geq [\mathbf 1_A,\alpha] \geq [\mathbf 1_A,[\mathbf 1_A,\alpha]] \geq \cdots$ for $\alpha \in \Con(\algA)$ in a finite nilpotent Mal'tsev algebra, its congruence blocks can be defined as the image of $[0]_\alpha$ and $A$ under a suitable polynomial $h(\bar x)$. Using the wreath product representation, we can prove the existence of such $h$ with even nicer properties:

\begin{lemma} \label{lemma:verbalminimal}
For a finite nilpotent Mal'tsev algebra $\algA$ and $\alpha \in \Con(\algA)$, let $\mathbf 0_A = \alpha_0 < \alpha_{1} < \ldots < \alpha_n = \alpha$ be defined by $\alpha_{i} = [\mathbf 1_A,\alpha_{i+1}]$ and let $\algA = \bigotimes_{i = 1}^n \algL_i \otimes \algA/\alpha$ the corresponding wreath product representation. Then there is a $k \in \N$ and a polynomial $h \in \Pol^k(\algA)$, such that $h(0,\ldots,0) = 0$, and for $\bar x \in ([0]_\alpha)^k$:
\begin{itemize}
\item $\proj{L_j}  (h(\bar x)) = 0$ for $j > 1$,
\item there is a surjective $\hat h \colon L_n^k \to L_1$, such that $\proj{L_1}  (h(\bar x)) = \hat h(\proj{L_n}  (\bar x))$.
\end{itemize}
\end{lemma}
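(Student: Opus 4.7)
My plan is to induct on the length $n$ of the lower central series $\mathbf{0}_A = \alpha_0 < \alpha_1 < \cdots < \alpha_n = \alpha$. The base case $n = 1$ is immediate: $\alpha$ is then central, $[0]_\alpha$ corresponds to $L_1$ under the wreath-product identification, and $h(x) = x$ of arity $k = 1$ works with $\hat h = \mathrm{id}_{L_1}$.

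For the inductive step with $n \geq 2$, I will first apply the hypothesis to $\alpha_{n-1}$, whose lower central series is the truncation $\alpha_0 < \cdots < \alpha_{n-1}$ and whose wreath product representation of $\algA$ reuses the affine factors $\algL_1, \ldots, \algL_{n-1}$. This produces a polynomial $h' \in \Pol^{k'}(\algA)$ that already does the right thing on $([0]_{\alpha_{n-1}})^{k'}$: it sends such tuples into $[0]_{\alpha_1}$, vanishes in the components $L_2, \ldots, L_{n-1}$, and has $\pi_{L_1}(h'(\bar y)) = \hat{h'}(\pi_{L_{n-1}}(\bar y))$ for a surjective $\hat{h'}\colon L_{n-1}^{k'} \to L_1$.

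The heart of the argument is then to build an intermediate polynomial $Q$ that moves information from $L_n$ into $L_{n-1}$. Because $\alpha_{n-1} = [\mathbf{1}_A, \alpha]$, Lemma \ref{lemma:absorbing}(2) supplies binary $0$-absorbing polynomials $p_1, \ldots, p_m$ with $[0]_{\alpha_{n-1}} = p_1(A \times [0]_\alpha) + \cdots + p_m(A \times [0]_\alpha)$. Enumerating $A = \{a_1, \ldots, a_r\}$, I set $Q(\bar y) = \sum_{i,j} p_i(a_j, y_{i,j})$ of arity $mr$. Each summand lies in $[0]_{\alpha_{n-1}}$ (by Lemma \ref{lemma:absorbing}(1), since $(p_i(a_j, 0), p_i(a_j, y_{i,j})) = (0, p_i(a_j, y_{i,j})) \in \alpha_{n-1}$), and inspecting the wreath-product formula for $+^\algA$ shows that the correction $\hat{+}_{n-1}$ vanishes when both arguments lie in $[0]_{\alpha_{n-1}}$, so $\pi_{L_{n-1}}$ acts as an additive homomorphism on $([0]_{\alpha_{n-1}}, +)$. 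Combined with Observation \ref{observation:absorbing}, this makes $\pi_{L_{n-1}}(Q(\bar y))$ a function $\hat Q$ of $\pi_{L_n}(\bar y)$ alone; surjectivity of $\hat Q$ onto $L_{n-1}$ follows by projecting the decomposition above to $L_{n-1}$ and using that each $\pi_{L_{n-1}}(p_i(a_j, \cdot))$ contains $0$, so that the loop sum absorbs the set-theoretic union.

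To finish, I define $h$ of arity $k = k' \cdot mr$ by composing $h'$ with $k'$ independent copies of $Q$ on disjoint blocks of variables, namely $h(\bar x) = h'(Q(\bar x^{(1)}), \ldots, Q(\bar x^{(k')}))$. Plugging in the behaviour of $Q$ and $h'$ yields $h(0) = 0$, $h(\bar x) \in [0]_{\alpha_1}$ on $([0]_\alpha)^k$ (hence $\pi_{L_j}(h(\bar x)) = 0$ for all $j > 1$), and $\pi_{L_1}(h(\bar x)) = \hat{h'}(\hat Q(\pi_{L_n}(\bar x^{(1)})), \ldots, \hat Q(\pi_{L_n}(\bar x^{(k')})))$, which is the desired surjective $\hat h\colon L_n^k \to L_1$. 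The one step I expect to require real care is the verification that $\pi_{L_{n-1}}$ is additive on $([0]_{\alpha_{n-1}}, +)$ and the attendant surjectivity argument for $\hat Q$; once that is in place, the induction closes with only routine bookkeeping.
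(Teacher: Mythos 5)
Your proof is correct and the overall strategy matches the paper's: induct on $n$, use Lemma~\ref{lemma:absorbing} to manufacture a surjective ``one-step'' polynomial between consecutive layers, then compose. The only real difference is the direction of the induction. The paper works \emph{bottom-up}: it applies the inductive hypothesis to the quotient $\algA/\alpha_1$ (whose wreath representation reuses $\algL_2,\ldots,\algL_n$), which yields $h'$ taking $[0]_\alpha$ surjectively onto $L_2$ through $[0]_{\alpha_2}$; it then builds a fresh polynomial $p$ realizing the last step $[0]_{\alpha_2}\twoheadrightarrow[0]_{\alpha_1}$ from the binary $0$-absorbing generators of $\alpha_1=[\mathbf 1_A,\alpha_2]$, and sets $h=p(h'(\bar y_1),\ldots,h'(\bar y_s))$. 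You work \emph{top-down}: you apply the hypothesis to the congruence $\alpha_{n-1}$ of the same algebra, obtaining $h'$ that already handles $[0]_{\alpha_{n-1}}\twoheadrightarrow[0]_{\alpha_1}$, and then build the first step $Q\colon[0]_\alpha\to[0]_{\alpha_{n-1}}$ from the binary $0$-absorbing generators of $\alpha_{n-1}=[\mathbf 1_A,\alpha]$, and compose the other way around. These are strictly dual and both legitimate. Your explicit verification that $\pi_{L_{n-1}}$ acts additively on $([0]_{\alpha_{n-1}},+)$ (because the correction term $\hat m_{n-1}$ vanishes on arguments with trivial projections to $L_n\times A/\alpha$) is a step the paper uses silently; making it explicit is a small plus. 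The surjectivity argument for $\hat Q$ (the loop sum absorbs the union once each fixed-first-argument slice contains $0$) is the same device the paper uses for $p$.
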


\begin{proof}
We prove the statement by induction on $n$. If $n = 1$, $\alpha = \alpha_1$ is central, and we simply set $h(x)=x$. For an induction step $(n-1) \to n$, by the induction assumption, there is already a polynomial $h'$, such that for $\bar x$ from $[0]_\alpha$:

\begin{itemize}
\item $\proj{L_2}  (h'(\bar x)) = \hat h'(\proj{L_n}  (\bar x))$, such that $\hat h' \colon L_n^k \to L_2$ is surjective, and $\hat h'(0,\ldots,0) = 0$
\item $\proj{L_j}  (h'(\bar x)) = 0$ for every $j > 2$.
\end{itemize}

By Lemma \ref{lemma:absorbing} (1), $\alpha_{1} = [\mathbf 1_A,\alpha_{2}]$ is generated by pairs $(0,c(a,x))$, such that $c \in \Pol(\algA)$ is absorbing, $a \in A$ and $x \in [0]_{\alpha_2}$. For every such $c$ and $a$ let us define $c_a(x) = c(a,x)$. With respect to the wreath product representation $\bigotimes_{i = 1}^n \algL_i \otimes \algA/\alpha$ this implies that
$$c_a((l_1,l_2,0,\ldots,0)) = (\hat c_a(l_2),0, \ldots, 0),$$

for some non-constant $0$-absorbing function $\hat c_a \colon L_2 \to L_1$ (cf. Observation \ref{observation:absorbing}). As in Lemma \ref{lemma:absorbing} (2), one can prove that every element of $[0]_{\alpha_1} = L_1 \times \{0\} \times \cdots \times \{0\}$ is in the closure of elements $(\hat c_a(l_2),0, \ldots, 0)$ under the loop product $+$. Thus there is a polynomial $p(\bar x) = c_{1,a_1}(x_1) + c_{2,a_2}(x_2) +\cdots + c_{r,a_r}(x_r)$ that surjectively maps $[0]_{\alpha_2}$ to $[0]_{\alpha_1}$, and only depends on $L_2$. The composition $h = p(h'(\bar y_1), h'(\bar y_2),\ldots, h'(\bar y_s))$ (such that all appearing variables are distinct) is then a polynomial with the desired properties.
\end{proof}

\subsection{Supernilpotent congruences}
In this section, we give a characterization of supernilpotent congruences in (elementary) wreath products.  We first prove the following lemma:

\begin{lemma} \label{lemma:2factors}
Let $\algA = \algL_1 \otimes \algL_2 \otimes \algA / \alpha_2$ be a finite nilpotent Mal'tsev algebra, such that $\exp(\algL_1) \neq \exp(\algL_2)$ are two distinct primes. Then, either
\begin{itemize}
\item $[\mathbf 1_A,\alpha_2] = \mathbf 0_A$, and for every $p \in \Pol(\algA)$,  $\proj{L_1}  p$ does not depends on $L_2$, or
\item $\alpha_2$ is not supernilpotent, and there is a $p \in \Pol(\algA)$, such that $p((l_1,l_2,0)) = (\hat p(l_2), 0, 0)$, for a non-constant $\hat p \colon L_2 \to L_1$ 
\end{itemize}
\end{lemma}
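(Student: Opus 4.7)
The proof proceeds by a dichotomy on whether some polynomial of $\algA$ has $\pi_{L_1}$ depending on $L_2$.

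\textbf{Case 1: no such polynomial.} I claim $[\mathbf 1_A, \alpha_2] = \mathbf 0_A$. By Lemma~\ref{lemma:absorbing}(1), this commutator is generated as a congruence by pairs $(0, c(a,b))$ with $c \in \Pol^2(\algA)$ being $0$-absorbing, $a \in A$, and $b \in [0]_{\alpha_2} = L_1 \times L_2 \times \{0\}$. By Observation~\ref{observation:absorbing}, each projection $\pi_{L_j}(c(\bar x))$ is given by an absorbing function of the $(L_{j+1}\times\cdots\times U)$-arguments. Since $\pi_U b = 0$, the projections $\pi_U c(a,b)$ and $\pi_{L_2} c(a,b)$ vanish by absorption. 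For the $L_1$-projection $\hat c_1((\pi_{L_2} a, \pi_U a), (\pi_{L_2} b, 0))$, the Case~1 hypothesis makes $\hat c_1$ independent of its $L_2$-arguments, so it reduces to an absorbing function of the $U$-arguments that vanishes at $(\pi_U a, 0)$. Hence every generator is trivial.

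\textbf{Case 2: some $q \in \Pol(\algA)$ has $\pi_{L_1} q$ depending on $L_2$.} WLOG $q$ is unary with $q(0) = 0$ (replace $q(x)$ by $m(q(x), q(0), 0)$), and with $\hat q_1(\cdot, 0)$ non-constant; if the non-constancy originally occurs only at some $u \neq 0$, shift $q$ by a constant $c$ with $\pi_U c = u$ before this normalization. Setting $p' := \exp(L_1)$ and $q' := \exp(L_2)$ (distinct primes), define $p(x) := q'\cdot q(p'\cdot x)$, where $\cdot$ is loop multiplication. Unwinding the wreath-product formulas for loop addition shows that on inputs $y$ with $\pi_U y = 0$, the twist terms collapse to give $\pi_{L_2}(n\cdot y) = n\pi_{L_2} y$, $\pi_U(n\cdot y) = 0$, and $\pi_{L_1}(n\cdot y) = n\pi_{L_1} y + T_n^*(\pi_{L_2} y)$ for some $T_n^*$ coming from $\hat m_1$. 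Hence $p'\cdot x \in [0]_{\alpha_2}$ with $\pi_{L_1} = T_{p'}^*(\pi_{L_2} x)$ and $\pi_{L_2} = p'\pi_{L_2} x$; then $q(p'\cdot x)\in[0]_{\alpha_2}$ depends only on $\pi_{L_2} x$; and applying $q'\cdot(\cdot)$ annihilates the $L_2$-coordinate. Thus $p(x) = (\hat p(\pi_{L_2} x), 0, 0)$, and the contribution $q'\hat q_1(p'\pi_{L_2} x, 0)$ to $\hat p$ is non-constant (as $p', q'$ are invertible modulo each other's exponent); any full cancellation with the $T^*$-derived terms would yield a wreath-product identity valid already for $q = \mathrm{id}$, contradicting non-triviality of $\hat q_1(\cdot, 0)$.

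Finally, from $p$ non-supernilpotence of $\alpha_2$ follows: $\hat p: L_2 \to L_1$ is a non-constant operation in the polynomial $(\algL_1, \algL_2)$-clonoid of functions realized by $\algA$-polynomials sending $([0]_{\alpha_2})^n$ into $L_1 \times \{0\} \times \{0\}$ (closure under $\Pol(\algL_1)$ from outside and $\Pol(\algL_2)$ from inside follows by composing with basic $\algA$-operations restricted to $[0]_{\alpha_2}$ and post-composing using loop arithmetic). By Corollary~\ref{corollary:clonoid}, applied with $\exp(L_1) \neq \exp(L_2)$, this clonoid contains non-constant $0$-absorbing operations $f_n: L_2^n \to L_1$ for every $n$. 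Realizing $f_n$ as $p_n \in \Pol^n(\algA)$, Lemma~\ref{lemma:absorbing}(1) exhibits nontrivial elements of the $n$-fold commutator $[\alpha_2, \ldots, \alpha_2]$ for every $n$, so $\alpha_2$ is not supernilpotent. The main obstacle is the construction of $p$ in Case~2, requiring careful bookkeeping of the twists $\hat m_1, \hat m_2$ to verify that the $L_2$-dependence survives the composite $q'\cdot q(p'\cdot x)$.
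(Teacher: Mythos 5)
Your overall structure is sound and genuinely different from the paper's: you dichotomize on whether some polynomial has $\pi_{L_1}$ depending on $L_2$, whereas the paper dichotomizes on whether $[\mathbf 1_A,\alpha_2]=\mathbf 0_A$ and, in the non-central case, invokes Lemma~\ref{lemma:verbalminimal} (which is built from the $0$-absorbing generators of $[\mathbf 1_A,\alpha_2]$) to produce the polynomial directly. Your Case~1 argument, via Lemma~\ref{lemma:absorbing}(1) and Observation~\ref{observation:absorbing}, is a correct and nicely elementary replacement for the paper's appeal to Theorem~\ref{theorem:freese-mckenzie} and the structure theorem for finitely generated modules.

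The gap is in Case~2, in the non-constancy of $\hat p$. After unwinding, $\hat p(l_2) = q'\bigl[q^{\algL_1}(T_{p'}^*(l_2)) + \hat q_1(p'l_2,0)\bigr] + T_{q'}^*\bigl(q^{\algL_2}(p'l_2)\bigr)$, where $q^{\algL_1},q^{\algL_2}$ are the affine parts of $q$ and $T^*$ collects the loop-twist contributions from $\hat m_1$. Your claim that a full cancellation ``would yield a wreath-product identity valid already for $q=\mathrm{id}$'' does not hold up: the extra terms depend on $q$ through $q^{\algL_1}$ and $q^{\algL_2}$, so setting $q=\mathrm{id}$ does not remove or neutralize them, and there is no visible reason why, for some particular $q$ with non-constant $\hat q_1(\cdot,0)$, the sum could not be constant in $l_2$. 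The paper sidesteps exactly this by only ever manipulating polynomials that are $0$-absorbing (or $0$-absorbing when restricted to $[0]_{\alpha_2}$), for which Observation~\ref{observation:absorbing} makes the affine parts trivial and the twist terms vanish, so the induced map on $L_2$ is non-constant by construction (its image generates $[0]_{[\mathbf 1_A,\alpha_2]}$, which is nontrivial in this case). To patch your argument you would want to pass from ``some $q$ with $\pi_{L_1}q$ depending on $L_2$'' to ``$[\mathbf 1_A,\alpha_2]>\mathbf 0_A$'' (which your Case~1 already gives by contraposition) and then use the absorbing generators as in Lemma~\ref{lemma:verbalminimal}, rather than the ad hoc $q'\cdot q(p'\cdot x)$.

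Two smaller points: in the final non-supernilpotence step you cite Lemma~\ref{lemma:absorbing}(1), but the polynomials $p_n$ you construct are $0$-absorbing only on $([0]_{\alpha_2})^n$, not on $A^n$, so the relevant item is Lemma~\ref{lemma:absorbing}(1'). Also, Corollary~\ref{corollary:clonoid} is stated for the elementary Abelian group reducts $(L_1,+),(L_2,+)$; you should make explicit (as the paper does) that you are closing $\hat p$ under the loop product and constants to land in the $\bigl((L_1,+),(L_2,+)\bigr)$-clonoid before invoking it.
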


\begin{proof}
We distinguish two cases. If  $[\mathbf 1_A,\alpha_2] = \mathbf 0_A$, then we know by Theorem \ref{theorem:freese-mckenzie} that there is an affine algebra $\algL$, such that $\algA = \algL_1 \otimes \algL_2 \otimes \algA / \alpha_2 \cong \algL \otimes \algA / \alpha_2$. Without loss of generality, we can identify the universe of $\algL$ with $L_1 \times L_2$. Note that $\algL$ is polynomially equivalent to a module, and has $\algL_2$ as a quotient. So (by the fundamental theorem of finitely generated modules) $\algL \cong \algL_1 \times \algL_2$. In particular this implies that there is no polynomial $p \in \Pol(\algA)$, such that that $\proj{L_1} p$ depends on $L_2$.

In the remaining case $\mathbf 0_A < [\mathbf 1_A,\alpha_2] \leq \alpha_1$ holds. By Lemma \ref{lemma:verbalminimal}, there is a polynomial $h$ such that, for $\bar x$ from $[0]_{\alpha_2}$
$$h(\bar x) = (\hat h(\proj{L_2}(\bar x)),0, 0),$$
and $\hat h$ is not constant.

If we close $h$ under compositions with the loop multiplication $+$ (and constants $(0,l_2,0)$ from the inside, respectively constants $(l_1,0,0)$ from the outside), we obtain polynomials $f \in \Pol(\algA)$ with
$$f(\bar x) = (\hat f( \proj{L_2} (\bar x)), 0, 0 ), \text{ for } \bar x \in L_1 \times L_2 \times \{0\},$$
 for every $\hat f \colon L_2^n \to L_1$ that is in $((L_1,+,0,-),(L_2,+,0,-))$-clonoid generated by $\hat h$. In particular, by Corollary \ref{corollary:clonoid}, there are polynomials of arbitrary arity, which are $0$-absorbing on $[0]_{\alpha_2} = L_1 \times L_2 \times \{0\}$, but not constant. By Lemma \ref{lemma:absorbing} (1'), $\alpha_2$ is not supernilpotent.
\end{proof}

We next use Lemma \ref{lemma:2factors} to characterize supernilpotent congruences by their (elementary) wreath product representations. It can be seen as a special case of Theorem \ref{theorem:mayrdecomposition}.

\begin{proposition} \label{proposition:supernilpotentwreath}
Let $\algA$ be a finite nilpotent Mal'tsev algebra, and let $\algA = \bigotimes_{i = 1}^k \algL_i \otimes \algA / \alpha$ be an elementary wreath product representation. Then $\alpha$ is supernilpotent if and only if for every polynomial $f \in \Pol(\algA)$, 
$\proj{L_i}  f$ only depends on $L_j$ if $i \leq j$ and $\exp(\algL_i) = \exp(\algL_j)$.
\end{proposition}

\begin{proof}
We start by proving the 'only if' direction. For contradiction, assume that there is a finite nilpotent Mal'tsev algebra $\algA = \bigotimes_{i = 1}^k \algL_i \otimes \algA / \alpha$, such that $\alpha$ is supernilpotent, but there are components $L_i$ and $L_j$ of different exponent, and a polynomial $f$ such that $\proj{L_i} f$ depends on $L_j$. Without loss of generality we can assume that $i=1$ is the only value of $i$, for which this is true; otherwise we take the quotient $\algA/ \alpha_{i-1}$, for the maximal such $i$.

We can further assume that $j=2$: if this is not the case, let $j$ be the minimal index, such that $L_1$ and $L_j$ have different exponent, and there is a polynomial $f$ such that $\proj{L_1}  f$ depends on $L_j$. Note that, by the minimality of $j$, for any polynomial $f \in \Pol(\algA)$, its projection $\proj{L_1}  f$ only depends on $L_m$ with $m < j$, if $m$ has the same exponent as $L_1$. Let $I = \{ 1 < m < j \colon \exp(\algL_1) \neq \exp(\algL_m) \}$; then, $\prod_{m \notin I} \algL_i \times \algA/\alpha$ is a well-defined nilpotent algebra (cf. Observation \ref{observation:kernel}), and also a counterexample to the theorem. So, without loss of generality, we can assume that all factors $\algL_m$ for $m < j$ have the same exponent as $\algL_1$. Without loss of generality we can further assume $j = 2$ (if this is not the case, we can move $\algL_j$ to the second position; this is possible since, by the uniqueness of $i=1$, no component other than $\algL_1$ depends on $\algL_j$). By Lemma \ref{lemma:2factors}, $\alpha_j = \alpha_2$ cannot be supernilpotent. Since $\alpha_2 \leq \alpha$, $\alpha$ is not supernilpotent - contradiction.

For the 'if' direction, note then that by Observation \ref{observation:kernel}, for every supernilpotent congruence $\alpha$, and every prime $p$, the kernel of the projections to $\algA/\alpha$ and all components $L_i$ with $\exp(\algL_i) \neq p$ is a congruence $\alpha_{p} \leq \alpha$, that has blocks of some $p$-power size. It is further easy to see that, for every $p$, $\alpha$ is the composition of $\alpha_p$ and $\bigwedge_{q \neq p} \alpha_q$.  Thus $\alpha$ satisfies the characterization of supernilpotent congruences in Theorem \ref{theorem:mayrdecomposition}.
\end{proof}

As a direct corollary we get a generalization for the case in which the affine components $\algL_i$ have prime power exponents:

\begin{corollary} \label{corollary:supernilpotentwreath}
Let $\algA$ be a finite nilpotent Mal'tsev algebra, and let $\algA = \bigotimes_{i = 1}^k \algL_i \otimes \algA / \alpha$ such that for every $i$, $\exp(\algL_i) = p_i^{m_i}$, for some prime $p_i$ and $m_i \geq 1$. Then $\alpha$ is supernilpotent if and only if for every polynomial $f \in \Pol(\algA)$, 
$\proj{L_i}  f$ only depends on $L_j$ if $i \leq j$ and $p_i = p_j$.
\end{corollary}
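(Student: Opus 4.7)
The strategy is to prove the equivalence by combining Theorem \ref{theorem:mayrdecomposition} (applicable here because $\algA$ is a Maltsev algebra, so $\HSP(\algA)$ omits type $\mathbf{1}$) with a reduction to the two-factor analysis of Lemma \ref{lemma:2factors}. The backward direction will construct an explicit Mayr--Szendrei decomposition of $\alpha$ from the dependency condition, while the forward direction will turn any alleged violation of the dependency condition into non-constant absorbing polynomials of arbitrary large arity, contradicting supernilpotence.

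For the backward implication, assume the dependency condition holds. For every prime $p$ occurring as $\exp(\algL_i)$ for some $i$, set $I_p := \{i : \exp(\algL_i) = p\}$ and let $\beta_p$ be the kernel of the projection $\algA \to \prod_{i \in I_p} L_i \times \algA/\alpha$. The projection is a homomorphism by Observation \ref{observation:kernel}: for every $j \notin I_p$ and $i \in I_p$, $\pi_{L_j}$ of any polynomial cannot depend on $L_i$ — this is automatic from the wreath product structure when $i < j$, and forbidden by the hypothesis (applied with the roles of $i$ and $j$ swapped, using $\exp(\algL_j) \ne \exp(\algL_i)$) when $i \geq j$. I would then verify conditions (1)--(4) of Theorem \ref{theorem:mayrdecomposition} for the family $\{\beta_p\}_p$: (1) $\alpha$ is nilpotent because $\algA$ is; (2) $\bigcap_p \beta_p = \mathbf{0}_A$ since $\bigcup_p I_p = \{1,\ldots,k\}$ exhausts the factor indices; (3) the composition equalities $(\beta_{p_1} \cap \cdots \cap \beta_{p_{s-1}}) \circ \beta_{p_s} = \alpha$ follow from the Cartesian-product structure of the universe, which allows one to assemble an intermediate tuple coordinate-wise; (4) each $\alpha/\beta_p$-block in $\algA/\beta_p$ has size $\prod_{i \in I_p}|L_i|$, a power of $p$, since each $\algL_i$ with $i \in I_p$ is elementary Abelian of exponent $p$.

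For the forward implication, I argue by contradiction: suppose $\alpha$ is supernilpotent and yet some $f \in \Pol(\algA)$ has $\pi_{L_i} f$ depending on $L_j$ for indices $i \leq j$ with $\exp(\algL_i) \ne \exp(\algL_j)$. Passing to $\algA/\alpha_{i-1}$ preserves supernilpotence of $\alpha$ modulo $\alpha_{i-1}$ by (HC6) and retains the offending dependency, so I may assume $i = 1$. Specializing the inputs of $f$ to vary only in the $L_j$-coordinate (with the other coordinates fixed at $0$) and subtracting a constant produces a non-constant function $\hat h : L_j \to L_1$. From here the argument proceeds exactly as in the proof of Lemma \ref{lemma:2factors}: closing $\hat h$ under the loop product, constants, and compositions with $\Pol(\algL_1)$ from outside and $\Pol(\algL_j)$ from inside — all realizable through polynomials of $\algA$ — yields polynomials of $\algA$ realizing every operation in the polynomial $(\algL_1,\algL_j)$-clonoid generated by $\hat h$. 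Because $\exp(\algL_1) \ne \exp(\algL_j)$, Corollary \ref{corollary:clonoid} supplies non-constant operations $t_n$ in this clonoid of every arity $n$; a suitable translation of each $t_n$ lifts to a non-constant $0$-absorbing polynomial $c_n \in \Pol^n(\algA)$ on $[0]_\alpha^n$. By Lemma \ref{lemma:absorbing}(1'), this would imply $[\alpha,\ldots,\alpha]_n \ne \mathbf{0}_A$ for all $n$, contradicting the supernilpotence of $\alpha$.

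The main obstacle lies in the forward direction: in the multi-factor wreath product the witnessing polynomial extracted from $f$ can acquire non-trivial values in intermediate coordinates $L_l$ with $l \ne 1, j$, and these stray contributions have to be cancelled before the clonoid closure produces polynomials that are $0$-absorbing on the whole of $[0]_\alpha^n$. Lemma \ref{lemma:verbalminimal} is the crucial tool for this cleanup: applied inside $\algA/\alpha_{i-1}$, it delivers a polynomial whose output is concentrated entirely in the $L_1$-coordinate, thereby reducing the multi-factor situation to the two-factor scenario already handled by Lemma \ref{lemma:2factors}.
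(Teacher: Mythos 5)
Your backward direction is sound and in fact matches the paper's intent more carefully than the paper's own (rather terse) phrasing: you take $\beta_p$ to be the kernel of the projection that \emph{keeps} the exponent-$p$ components, so that $\alpha/\beta_p$ has $p$-power blocks, which is exactly what condition (4) of Theorem~\ref{theorem:mayrdecomposition} needs. (One notational slip: for Observation~\ref{observation:kernel} applied to this projection you need ``for every $j \in I_p$ and $i \notin I_p$, $\pi_{L_j}$ of any polynomial does not depend on $L_i$,'' i.e.\ the roles of $i$ and $j$ are the reverse of what you wrote; your case analysis $i<j$ vs.\ $i\geq j$ in fact establishes the correct condition, so the substance is fine.)

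The forward direction is where there is a genuine gap, and you correctly flagged it but did not close it. The obstacle you identified is real: after specializing $f$ to extract a non-constant map $\hat h\colon L_j\to L_1$, the realizing polynomial of $\algA$ will in general have non-zero projections onto the intermediate coordinates $L_2,\dots,L_{j-1}$, and the clonoid closure only stays ``concentrated in $L_1$'' if the seed polynomial already does. But Lemma~\ref{lemma:verbalminimal} does not fix this: it produces a polynomial whose $L_1$-output depends only on the \emph{top} component $L_n$ of the lower-central-series representation of $\alpha$, not on an arbitrary intermediate $L_j$, and moreover it is tied to that specific representation rather than to the given elementary wreath-product representation. So the invocation is not justified as stated. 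The paper takes a different route here: after reducing to $i=1$, it exploits the minimality of $j$ together with the WLOG assumption (that $i=1$ is the \emph{only} offending index) to project away, via Observation~\ref{observation:kernel}, all intermediate components $L_m$ ($m<j$) whose exponent differs from $\exp(\algL_1)$; then, using that no component other than $L_1$ depends on $L_j$, it moves $L_j$ to the second position, producing a genuine two-factor situation to which Lemma~\ref{lemma:2factors} applies directly. It is this structural reduction (quotient, then project, then reorder), not a cleanup of the polynomial itself, that resolves the stray-coordinate problem, and it is missing from your argument.
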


\begin{proof}
This follows straightforwardly from refining $\bigotimes_{i = 1}^k \algL_i \otimes \algA / \alpha$ into an elementary wreath product representation, as in Observation \ref{observation:subdivision}, and applying Proposition \ref{proposition:supernilpotentwreath}.
\end{proof}

We next prove that, for supernilpotent congruences $\alpha \in \Con(\algA)$ it is possible to separate the affine components of different prime expontents by unary polynomials. For this we first need the following lemma about the loop operation in nilpotent algebras of prime power size:

\begin{lemma} \label{lemma:loops}
Let $p$ be a prime and $\algA = \bigotimes_{i = 1}^k \algL_i$ be a nilpotent Mal'tsev algebra such that $\exp(\algL_i) = p$ for every $i$. For any prime $q$, let us define the unary polynomials $u_{q}(x) = q \cdot x = x+ \cdots + x$ (where this loop product is left associated) and $u_{q^{l+1}}(x) = u_q(u_{q^l}(x))$ for every $l \geq 1$. Then
\begin{enumerate}
\item $u_{q^l}(x)$ is bijective on $A$ for all $q \neq p$ and $l \geq 1$,
\item $u_{p^l}(x) \approx 0$ for all $l \geq k$.
\end{enumerate}
\end{lemma}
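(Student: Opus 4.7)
The plan is to exploit the wreath product structure to analyse the iterated, left-associated loop sum $u_{q^l}(x)$ one coordinate at a time. From the definition of the wreath product, the Maltsev polynomial satisfies
\[
\pi_{L_j}(m(x,y,z)) = m^{\algL_j}(\pi_{L_j}(x), \pi_{L_j}(y), \pi_{L_j}(z)) + \hat m_j(\pi_{L_{j+1}\times\cdots\times L_k}(x,y,z)),
\]
and evaluating at $x=y=z=0$ together with $m(0,0,0)=0$ forces $\hat m_j(0,\ldots,0)=0$. Writing $\alpha_j$ for the kernel of the projection $\algA \to \prod_{i>j}\algL_i$, the class $[0]_{\alpha_j}$ is closed under the polynomial $+ = m(\cdot,0,\cdot)$. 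Combining these observations yields the key identity: for $y,z \in [0]_{\alpha_j}$, $y+z$ again lies in $[0]_{\alpha_j}$, and $\pi_{L_j}(y+z) = \pi_{L_j}(y) +^{\algL_j} \pi_{L_j}(z)$, because the correction stays pinned at $\hat m_j(0,\ldots,0)=0$.

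For part~(2) I would prove by induction on $l$ that $u_p^{l}(A) \subseteq [0]_{\alpha_{k-l}}$. The base $l=0$ is immediate from $[0]_{\alpha_k}=A$. For the step, set $y = u_p^{l}(x) \in [0]_{\alpha_{k-l}}$; by the key identity the iterated sum $u_p(y) = y+y+\cdots+y$ stays in $[0]_{\alpha_{k-l}}$, and its $L_{k-l}$-projection equals $p \cdot \pi_{L_{k-l}}(y) = 0$ in $\algL_{k-l}$ since $\exp(\algL_{k-l}) = p$. Hence $u_p^{l+1}(x) \in [0]_{\alpha_{k-l-1}}$. Taking $l = k$ gives $u_{p^k}(x) \in [0]_{\alpha_0} = \{0\}$, and since $u_p(0)=0$ we conclude $u_{p^l} \equiv 0$ for all $l \geq k$.

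For part~(1), since $A$ is finite it suffices to show $u_q$ is injective; then $u_{q^l} = u_q \circ \cdots \circ u_q$ is automatically bijective. Assuming $u_q(x) = 0$, I would prove $\pi_{L_j}(x) = 0$ by downward induction on $j = k, k-1, \ldots, 1$. For $j = k$, the top component is independent of the others, so $0 = \pi_{L_k}(u_q(x)) = q \cdot \pi_{L_k}(x)$ in $\algL_k$; since $q$ is a prime distinct from $p = \exp(\algL_k)$, it is invertible modulo $p$, so $\pi_{L_k}(x) = 0$. For the inductive step, the hypothesis places $x \in [0]_{\alpha_j}$, so the key identity yields $\pi_{L_j}(u_q(x)) = q \cdot \pi_{L_j}(x) = 0$, and invertibility of $q$ modulo $p$ again forces $\pi_{L_j}(x) = 0$.

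The only delicate point is that the loop $+$ is in general non-associative, so $u_q(x)$ must be read as the specific left-associated product $((\cdots(x+x)+x\cdots)+x)$ involving many intermediate sums. The argument works anyway because each intermediate sum stays inside $[0]_{\alpha_j}$ (a set closed under $+$), so the correction terms $\hat m_j$ never leave their anchor $(0,\ldots,0)$ and everything reduces to the affine computation in the single component $\algL_j$.
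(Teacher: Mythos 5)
Your proof is correct and follows essentially the same strategy as the paper: analyse $u_{q^l}$ component-wise using the wreath product decomposition and the fact that the correction terms $\hat m_j$ vanish on the zero classes. The paper packages this as an induction on the number $k$ of factors (stripping off $\algL_1$ at each step), whereas you induct directly on the component index $j$; these are equivalent formulations of the same argument, and your explicit isolation of the ``key identity'' and the remark about non-associativity make the reasoning slightly more transparent.
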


\begin{proof}
We prove both statements by induction on $k=1$. For $k=1$, $(A,+,0,-) \cong \Z_p^m$ for some $m$, so the Lemma clearly holds. 

To show (1) in general, note that by induction hypothesis we know that the projection of $u_{q^l}$ to $\prod_{i = 2}^k L_i$ only depends on $\prod_{i = 2}^k L_i$ (in a bijective manner). Furthermore $\proj{L_1}  u_{q^l}(x) = (u_{q^l}(\proj{L_1} (x)) + \hat u_{q^l}(\proj{L_2 \times \cdots \times L_k}(x))$, for some map $\hat u_{q^l}$. Since $p \neq q$, the affine part $u_{q^l}(\proj{L_1}(x)) = q^l \cdot \proj{L_1}(x)$ is a bijective polynomial of $\algL_1$. Thus $u_{q^l}$ is bijective on all of $A$.

In order to see (2), note that by the induction hypothesis $\proj{L_j} u_{p^{l-1}}(x) = 0$ for all $j > 1$ and $ \proj{L_1}(u_{p^{l-1}}(x)) = \hat u_{p^{l-1}}(\proj{L_2 \times \cdots \times L_k}(x)))$. Since $\algL_1$ has exponent $p$ it follows that $\proj{L_1}u_{p^{l}}(x) = p \cdot \hat u_{p^{l-1}}(\proj{L_2 \times \cdots \times L_k}(x)) = 0$, and thus $u_{p^{l}}(x) \approx 0$.
\end{proof}

As a direct consequence of Lemma \ref{lemma:loops} we obtain:

\begin{lemma} \label{lemma:loops2}
Let $\algA$ be a finite nilpotent Mal'tsev algebra, $\alpha \in \Con(\algA)$ be a supernilpotent congruence. Further let $\bigotimes_{i=1}^n \algL_i \otimes \algA/\alpha$ be an elementary wreath product representation. Then, for every prime $p$, there is a polynomial $r_p \in \Pol^1(\algA)$, such that for every $x \in [0]_\alpha$:
$$\proj{L_j}(r_p(x)) = \begin{cases}
 \proj{L_j}(x) &\text{ if } \exp(\algL_j) = p,\\
0 &\text{ else.}\end{cases}$$
\end{lemma}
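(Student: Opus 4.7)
The plan is to build $r_p$ as a composition of iterated loop polynomials $u_{q^L}(x) = q^L \cdot x$ for primes $q$ different from $p$. Let $P = \{\exp(\algL_i) : 1 \leq i \leq n\}$ denote the set of primes that occur as exponents, and for each $p' \in P$ set $I_{p'} = \{i : \exp(\algL_i) = p'\}$.

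First I would exploit the supernilpotence of $\alpha$: by Corollary \ref{corollary:supernilpotentwreath}, for every polynomial $f \in \Pol(\algA)$ and every $i$, the projection $\pi_{L_i}(f(x))$ depends only on those $\pi_{L_j}(x)$ with $j \geq i$ and $\exp(\algL_j) = \exp(\algL_i)$. Restricted to $[0]_\alpha$ (where the $\algA/\alpha$-coordinate is fixed at zero), this gives a prime-by-prime decoupling: for each $p' \in P$, the map $x \mapsto \pi_{I_{p'}}(f(x))$ factors through $\pi_{I_{p'}}(x)$. In particular, the loop operation $+$ of Theorem \ref{theorem:loop}, restricted to $\pi_{I_{p'}}([0]_\alpha) = \prod_{i \in I_{p'}} L_i$, is precisely the wreath-product loop on the common-exponent wreath product $\bigotimes_{i \in I_{p'}} \algL_i$. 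Lemma \ref{lemma:loops} then tells us that, for any prime $q$, the polynomial $u_{q^L}$ of $\algA$ restricted to $\pi_{I_{p'}}([0]_\alpha)$ is a bijection when $q \neq p'$ (hence has finite order, since $[0]_\alpha$ is finite) and is identically $0$ when $q = p'$ and $L \geq |I_{p'}|$.

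Next, for each $q \in P \setminus \{p\}$ I would choose $L_q \in \N$ so that $L_q \geq |I_q|$ and $L_q$ is a common multiple of the orders of $u_q$ on each $\pi_{I_{p'}}([0]_\alpha)$, $p' \neq q$. Such $L_q$ exists since all these orders are finite. With this choice, $u_{q^{L_q}}$ vanishes on $\pi_{I_q}([0]_\alpha)$ while acting as the identity on $\pi_{I_{p'}}([0]_\alpha)$ for every $p' \neq q$. Enumerating $P \setminus \{p\} = \{q_1, \ldots, q_m\}$, I then define
\[
r_p(x) := u_{q_1^{L_{q_1}}}(u_{q_2^{L_{q_2}}}(\cdots u_{q_m^{L_{q_m}}}(x) \cdots)).
\]

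Verification is then a straightforward bookkeeping argument. On $\pi_{I_p}$ every factor acts as the identity, so $\pi_{I_p}(r_p(x)) = \pi_{I_p}(x)$. For each $j$, the factor $u_{q_j^{L_{q_j}}}$ maps $\pi_{I_{q_j}}$ to $0$, and every outer factor $u_{q_i^{L_{q_i}}}$ with $i < j$ acts as the identity on $\pi_{I_{q_j}}$ (and in particular fixes $0$), so $\pi_{I_{q_j}}(r_p(x)) = 0$. The main subtlety is the first step: one has to check carefully that the correction terms $\hat m_i$ appearing in the wreath-product formula for $+$ involve only coordinates from $I_{\exp(\algL_i)}$ once we restrict to $[0]_\alpha$, which is exactly what Corollary \ref{corollary:supernilpotentwreath} delivers. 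Once this prime-decoupling is in hand, invoking Lemma \ref{lemma:loops} on each prime slice and assembling the finite-order powers into $r_p$ is routine.
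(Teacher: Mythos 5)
Your proof is correct and takes essentially the same route as the paper: both use Corollary \ref{corollary:supernilpotentwreath} to decouple $([0]_\alpha,+)$ into prime slices and Lemma \ref{lemma:loops} for the bijectivity/vanishing behavior of the iterated loop sums $u_{q^l}$. The only cosmetic difference is that you tune each exponent $L_q$ individually so that each factor $u_{q^{L_q}}$ is already the identity on the non-$q$ slices before composing, whereas the paper composes the $u_{q^k}$ first into a single $t_p$ and then raises that composition to a suitable power to get the identity on $K_p$.
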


\begin{proof}
Note that $([0]_\alpha, +,/,\backslash) = (\prod_{i=1}^n L_i \times \{0/_\alpha \}, +,/,\backslash)$ is a supernilpotent loop. By Proposition \ref{proposition:supernilpotentwreath} it is the direct product of loops of the form $K_p = \{ x \colon \proj{L_i}(x) = 0 $ if $\exp(\algL_i) \neq p \}$, where $p$ runs through all prime exponents $p$ of the affine algebras $\algL_i$. By Lemma \ref{lemma:loops} the polynomial $u_{q^n}$ is bijective on $K_p$ if $q \neq p$, and $0$ otherwise.

Let $t_p$ to be a composition (in any order) of all function $u_{q^n}$, such that $q$ divides $[0]_\alpha$ and $q\neq p$. Then $t_p$ is a bijection on $K_p$, and $0$ on all other $K_q$ for $q\neq p$.  We define $r_p$ to be a suitable power of $t_p$ that is the identity on $K_p$.
\end{proof}

\subsection{Main lemmas}
In this section we are going to construct certain circuits, that we use in the proof of Theorem \ref{theorem:main}.

\begin{lemma} \label{lemma:rho}
Let $\algA$ be a finite nilpotent Mal'tsev algebra, let $\lambda$ be its Fitting congruence, and let $\rho > \lambda$ such that $\algA / \lambda = \algM \otimes \algA / \rho$ for an elementary Abelian $\algM$. Then there is an elementary wreath product representation $\algA = \bigotimes_{i = 1}^j \algL_i \otimes \alg{M} \otimes \algA / \rho$, such that
\begin{itemize}
\item $\exp(\algL_1) \neq \exp(\algM)$
\item There is a polynomial operation $p \in \Pol^k(\algA)$, such that for $\bar x \in [0]_\rho^k$:
\begin{align*}
\proj{M} p(\bar x) = 0, \, \proj{L_j} p(\bar x) &= 0, \text{ for } j \neq 1, \\
\text{and } \proj{L_1} p(\bar x) &= \hat p(\proj{M}(\bar x))
\end{align*}
for a surjective operation $\hat p \colon M^k \to L_1$.
\end{itemize}
\end{lemma}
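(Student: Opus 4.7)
The plan is to distill a polynomial with the required structure from the non-supernilpotence of $\rho$. Since $\lambda$ is the maximum supernilpotent congruence and $\rho > \lambda$ strictly, $\rho$ itself is not supernilpotent. In any elementary wreath product representation $\algA = \bigotimes_{i=1}^{j'} \algL'_i \otimes \algM \otimes \algA/\rho$ factoring through $\lambda$, Corollary~\ref{corollary:supernilpotentwreath} applied to $\rho$ produces a polynomial $g \in \Pol(\algA)$ exhibiting a cross-exponent dependency among the wreath factors below $\algA/\rho$. The supernilpotence of $\lambda$ rules out any such dependency purely among the $\algL'_i$, and since $\algM$ is the topmost wreath factor below $\algA/\rho$, the dependency must be of the form ``$\pi_{_{L'_{i_0}}} g$ depends on $\algM$'' for some $i_0$, with $p := \exp(\algL'_{i_0}) \neq \exp(\algM) =: q$.

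Next, I rearrange the wreath decomposition: by Corollary~\ref{corollary:supernilpotentwreath}, factors of distinct exponents inside the supernilpotent $\lambda$ have no polynomial dependencies on each other and may be freely permuted, so I move the $p$-exponent factors of $\lambda$ to the bottom and refine via Observation~\ref{observation:subdivision} so that $\algL_1 \cong \mathbb F_p$. The resulting decomposition $\algA = \bigotimes_{i=1}^j \algL_i \otimes \algM \otimes \algA/\rho$ has $\exp(\algL_1) = p \neq \exp(\algM)$.

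The polynomial $p$ is then built from $g$ in four passes. First, subtract $g(\bar 0)$ so that $g(\bar x) \in [0]_\rho$ for $\bar x \in [0]_\rho^k$. Second, post-compose with iterated unary polynomials $u_{q'}^L$ from Lemma~\ref{lemma:loops} for every prime $q' \neq p$ that occurs as $\exp(\algM)$ or as $\exp(\algL_i)$: large iterates annihilate the $q'$-exponent components while acting bijectively on the $p$-exponent components, so the composition lands in the $p$-part $[0]_{\lambda_p}$ of $\lambda$ while retaining the $\algM$-dependence in the $p$-exponent output. Third, truncate via iterated commutators $[1,-]$ inside $\lambda_p$ to push the image into $[0]_{\alpha_{i_0}}$, where $\alpha_{i_0}$ is the central-series level of $\algL'_{i_0}$ inside $\lambda_p$, and then apply Lemma~\ref{lemma:verbalminimal} to $\alpha_{i_0}$ to obtain a polynomial $h$ which pushes from its top (containing $\algL'_{i_0}$) surjectively down into $\algL_1$ and annihilates all intermediate $p$-exponent factors; composing $h$ with sufficiently many copies of the previous polynomial yields a polynomial whose only non-vanishing output on $[0]_\rho^k$ lives in $\algL_1$, with its $\algL_1$-projection depending on the $\algM$-coordinates of the input. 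Fourth, because $\algL_1 \cong \mathbb F_p$ is one-dimensional, the cyclic subgroup $\langle l \rangle$ from Corollary~\ref{corollary:clonoid} coincides with $\algL_1$, so an appropriate linear combination of translates produces the required $p \in \Pol^k(\algA)$ with $\hat p \colon M^k \to L_1$ surjective.

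The main obstacle is the third pass: arranging the commutator-truncation together with the invocation of Lemma~\ref{lemma:verbalminimal} so that the $\algM$-dependence of the witness polynomial genuinely ``lands at the top'' of a suitable subcongruence of $\lambda_p$, from where it can be transported down to $\algL_1$ without being lost. Everything else is bookkeeping around the wreath-product decomposition of $\algA$ by exponents and the clonoid computation over $\mathbb F_p$.
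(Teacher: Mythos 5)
Your proposal is a genuinely different route from the paper's, and the third pass — which you yourself flag as the main obstacle — contains a real gap. The paper never extracts a witness polynomial from Corollary~\ref{corollary:supernilpotentwreath} and massages it. Instead it works with the canonical central series $\mathbf 0_A = \rho_0 < \rho_1 < \cdots < \rho_{n-1} \leq \lambda < \rho$ (with $\rho_{i-1} = [\mathbf 1_A,\rho_i]$) and its wreath decomposition $\bigotimes_{i=1}^n\algK_i\otimes\algM\otimes\algA/\rho$, refines each $\algK_i$ into prime-power components, locates the \emph{minimal} level $i$ at which a factor $\algK_{i,q}$ with $q\neq\exp(\algM)$ appears (this is where Corollary~\ref{corollary:supernilpotentwreath} is used, purely as an existence statement), applies Lemma~\ref{lemma:verbalminimal} to that level to obtain $h$ whose $K_i$-projection is already surjective and depends only on $\algM$, then cleans up with $r_q$ from Lemma~\ref{lemma:loops2}. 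Only \emph{after} this polynomial is built is $\algK_{i,q}$ moved to the bottom (licit by minimality of $i$). Crucially, Lemma~\ref{lemma:verbalminimal} is applied before any reorganization of the factors.

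Your third pass breaks for exactly this reason. Once you permute the wreath factors by exponent and refine $\algL_1 \cong \mathbb{F}_p$, the stack $\algL_1,\algL_2,\ldots$ no longer underlies the canonical central series $\alpha_j = [\mathbf 1_A,\alpha_{j+1}]$; but Lemma~\ref{lemma:verbalminimal} is stated, and proved via Lemma~\ref{lemma:absorbing}, precisely for that series, and its surjectivity conclusion ties the \emph{bottom} level of \emph{that} series to the \emph{top} level $\alpha/[\mathbf 1_A,\alpha]$. There is no reason your $\algL'_{i_0}$ sits at the top of a congruence $\alpha_{i_0}$ for which the lemma's hypotheses hold, nor that $\algL_1$ is its bottom level. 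Even granting such an $h$, your massaged $g'$ has image only in $[0]_{\lambda_p}\supsetneq[0]_{\alpha_{i_0}}$, and Lemma~\ref{lemma:verbalminimal} says nothing about $h$ off $[0]_{\alpha_{i_0}}$, so the higher levels of $\lambda_p$ that $g'$ does not annihilate can corrupt $\pi_{L_1}(h\circ g')$; and a ``commutator truncation'' polynomial pushing $[0]_{\lambda_p}$ into $[0]_{\alpha_{i_0}}$ need not restrict to a surjection onto $L'_{i_0}$, so the $\algM$-dependence can simply vanish under it. The fix is to not reorganize first: apply Lemma~\ref{lemma:verbalminimal} at the minimal bad level of the canonical series of $\rho$, as the paper does, where the required surjection from $\algM$ comes for free, and only then isolate the $q$-part with $r_q$ and permute.
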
 

\begin{proof}

Let $\mathbf 0_A = \rho_0 < \rho_1 < \cdots < \rho_{n-1} \leq \lambda < \rho_n = \rho$ be the series such that $\rho_{i-1} = [\mathbf 1_A,\rho_i]$. (Note that $\rho_{n-1} = [\mathbf 1_A,\rho] \leq \lambda$ follows from $\algA / \lambda = \algM \otimes \algA / \rho$). Let $\algA = \bigotimes_{i = 1}^{n} \algK_i \otimes \algM \otimes \algA / \rho$ be the corresponding wreath product representation.

Furthermore, let $\algK_i = \prod_{m=1}^{n_i} \algK_{i,q_m}$ a direct decomposition into factors whose exponents are powers of distinct primes $q_m$ (note that substituting every $\algK_i$ by its direct decomposition also results in a wreath product representation of $\algA$). By our assumptions $\rho$ is not supernilpotent, but $\lambda$ is. So, by Corollary \ref{corollary:supernilpotentwreath}, there must be an index $i$ with a component $\algK_{i,q}$, with $q \neq \exp(\algM)$. Let us pick the minimal such index $i$. By Lemma \ref{lemma:verbalminimal}, there is a polynomial $h \in \Pol(\algA)$, such that, restricted to $[0]_\rho$, $\proj{K_i} h$ is surjective and only depends on $M$.

Let $r_q$ be the unary polynomial with respect to the prime $q$ (and the supernilpotent $\lambda$) given by Lemma \ref{lemma:loops2}, and let $t = r_q \circ h$. Note that then $\proj{K_{i,q}} t(\bar x) = \hat t(\proj{M}(\bar x))$, for a surjective $\hat t \colon M^n \to K_{i,q}$, and $\proj{K_{j,p}} t(\bar x) = 0$ for all other components.

By the minimality of $i$, we know that $p\neq q$ for all components $\algK_{m,p}$ with $m<i$. By Corollary \ref{corollary:supernilpotentwreath} (and the supernilpotence of $\lambda$), $\proj{K_{m,p}} f$ does not depends on $\algK_{i,q}$ for any polynomial $f$. Thus, we can obtain another wreath product representation of $\algA$, by moving $\algK_{i,q}$ to the first position. If $\exp(\algK_{i,q}) = q$, we define $\algL_1 = \algK_{i,q}$, and $p(\bar x) = t(\bar x)$. If $\exp(\algK_{i,q}) = q^{s}$, then we define $p(\bar x) = q^{s-1} \cdot t(\bar x)$, and $\algL_1 = \{ q^{s-1} \cdot x : x \in \algK_{i,q} \}$. The statement of the lemma then follows after refining to an elementary wreath product representation $\algA = \bigotimes_{i = 1}^j \algL_i \otimes \alg{M} \otimes \algA / \rho$ (see Observation \ref{observation:subdivision}).
\end{proof}

Together with Corollary \ref{corollary:clonoid}, this implies the following:

\begin{proposition} \label{proposition:main}
Let $\algA$ be a finite nilpotent Mal'tsev algebra, let $\lambda$ be its Fitting congruence, and let $\rho > \lambda$ such that $\algA / \lambda = \algM \otimes \algA / \rho$ for an elementary Abelian $\algM$. Then there is an elementary wreath product representation $\algA = \bigotimes_{i = 1}^j \algL_i \otimes \alg{M} \otimes \algA / \rho$, and a hyperplane $H < (M,+,0,-)$ such that
\begin{itemize}
\item $\exp(\algL_1) \neq \exp(\algM)$
\item The module equivalent to $\algL_1$ is cyclic, i.e. it has a single generator $l$
\item For every function $\hat f \colon M^n \to L_1$ that only depends on $M/H$, there is a circuit $f$ over $\algA$ such that for $\bar x \in [0]_\rho^n$:
\begin{align*}
\proj{L_j} f(x_1,\ldots,x_n) &= 0, \text{ for } j \neq 1, \\
\proj{L_1} f(x_1,\ldots,x_n) &= \hat f(\proj{M}(\bar x)).
\end{align*}
Further, $f$ can be constructed in time $2^{O(n)}$.
\end{itemize}
\end{proposition}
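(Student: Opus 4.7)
The plan is to deduce Proposition \ref{proposition:main} by combining Lemma \ref{lemma:rho} with the clonoid machinery of Proposition \ref{proposition:clonoid} and Corollary \ref{corollary:clonoid}, and then refining the wreath product to isolate a cyclic submodule.

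First, I apply Lemma \ref{lemma:rho} to obtain an elementary wreath product representation $\algA = \bigotimes_{i=1}^{j'} \algK_i \otimes \algM \otimes \algA/\rho$ with $\exp(\algK_1) \neq \exp(\algM)$ and a seed polynomial $p_0 \in \Pol^k(\algA)$ that, on $[0]_\rho^k$, vanishes on all components except $K_1$, where it equals $\hat p_0 \circ \pi_M$ for some surjective $\hat p_0 \colon M^k \to K_1$. Following the template of Lemma \ref{lemma:2factors}, I close $p_0$ under the loop product and under substitution of inner constants supported in $M$ and outer constants supported in $K_1$. The resulting family of operations $\hat f \colon M^n \to K_1$ arising from polynomials $f \in \Pol(\algA)$ of the form $\pi_{K_1} f|_{[0]_\rho^n} = \hat f \circ \pi_M$ (with zero projections elsewhere) forms a polynomial $((K_1,+),(M,+))$-clonoid $\mathcal C$ containing the non-constant map $\hat p_0$. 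Proposition \ref{proposition:clonoid} then provides a hyperplane $H < (M,+)$, an element $l \in K_1 \setminus \{0\}$ and ``conjunction'' operations $t_n \in \mathcal C$ computable by circuits of size $\mathcal O(d^n)$.

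To produce the cyclic module $\algL_1$, let $R$ denote the scalar ring of $\algK_1$ and set $\algL_1 := R \cdot l$, the cyclic $R$-submodule generated by $l$. Since $\algL_1$ inherits the prime exponent $q$ of $\algK_1$, Observation \ref{observation:subdivision} gives $\algK_1 \cong \algL_1 \otimes (\algK_1/\algL_1)$ and hence the desired elementary wreath product $\algA = \bigotimes_{i=1}^j \algL_i \otimes \algM \otimes \algA/\rho$ with $\algL_1$ of one generator and $\exp(\algL_1) \neq \exp(\algM)$. Given an arbitrary $\hat f \colon (M/H)^n \to L_1$, fix an $\mathbb{F}_q$-basis $e_1, \ldots, e_s$ of the (finite-dimensional) $\mathbb{F}_q$-algebra $R$, so that every value $\hat f(\bar m) \in L_1$ can be written as $\sum_{i=1}^s c_i(\bar m) \cdot e_i \cdot l$ with $c_i(\bar m) \in \mathbb{F}_q$. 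Then $\hat f = \sum_{i=1}^s e_i \cdot \hat f_i$, where each $\hat f_i \colon (M/H)^n \to \langle l \rangle$ is given by $\hat f_i(\bar m) = c_i(\bar m) \cdot l$. Corollary \ref{corollary:clonoid} yields circuits for each $\hat f_i$ of size $\mathcal O(d^n)$; composing with the unary scalar multiplication $e_i$ of $\algL_1$ (constant size) and summing the $s$ terms via the loop product assembles a circuit for $\hat f$ of total size $\mathcal O(d^n)$. Since the values of $\hat f$ lie in $L_1 \subset K_1$, and in the refinement $\algK_1 \cong \algL_1 \otimes (\algK_1/\algL_1)$ the submodule $L_1$ corresponds to $L_1 \times \{0\}$, the resulting circuit satisfies $\pi_{L_1} f = \hat f \circ \pi_M$ and $\pi_{L_j} f = 0$ for $j \neq 1$ on $[0]_\rho^n$-inputs.

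The main technical obstacle is the first step: showing that closing $p_0$ under the loop product and axis constants really produces polynomials whose projections to all components other than $K_1$ vanish on $[0]_\rho^n$-inputs. As in the proof of Lemma \ref{lemma:2factors}, this reduces to checking that, with a suitable choice of the base element $0 \in A$, the correction functions satisfy $\hat m_i(0, \ldots, 0) = 0$ for the Maltsev (and hence loop) operation, so that the vanishing pattern propagates through iterated clonoid operations. Once this closure property is in hand, the module refinement via Observation \ref{observation:subdivision}, the $R$-linear extension from $\langle l \rangle$ to $L_1$, and the circuit-size bookkeeping are routine.
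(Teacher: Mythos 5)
Your proposal follows essentially the same route as the paper's proof: invoke Lemma \ref{lemma:rho} to obtain the seed polynomial and initial elementary wreath product, close under the loop operation and axis constants to generate a polynomial $((K_1,+),(M,+))$-clonoid, apply Proposition \ref{proposition:clonoid} and Corollary \ref{corollary:clonoid} to get the conjunction-type functions valued in $\langle l \rangle$, extend from $\langle l \rangle$ to the full $R$-submodule generated by $l$ via scalar multiplications, and refine via Observation \ref{observation:subdivision} when that submodule is proper. The only cosmetic difference is that you introduce $\algL_1 := R\cdot l$ up front and decompose $\hat f$ along an $\mathbb{F}_q$-spanning set of $R$, whereas the paper first works in the full component from Lemma \ref{lemma:rho} and performs the refinement as a final step; the technical gap you flag — that the vanishing pattern on lower components is preserved under loop-sums and constant substitutions, which hinges on $\hat m_j(\bar 0,\bar 0)=0$ since $0+0=m(0,0,0)=0$ — is likewise left informal in the paper, so this is not a defect of your argument relative to the original.
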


\begin{proof}
By Lemma \ref{lemma:rho}, we know that there is a wreath product representation $\algA = \bigotimes_{i = 1}^j \algL_i \otimes \alg{M} \otimes \algA / \rho$, and a polynomial $h$ such that for $\bar x$ in $[0]_\rho$, $\proj{L_j} h(\bar x)$ for $j \neq 1$, and $\proj{L_1} h(\bar x) = \hat h(\proj{M}(\bar x))$, for surjective $\hat h$. In particular $\hat h$ is not constant. By closing $h$ under the loop product $+$ and constants from $[0]_\rho$ from the inside, and $+$ and constants from $L_1$ from the outside, we obtain a polynomial $f \in \Pol(\algA)$ for every function $\hat f \colon M^n \to L_1$ that is in the $((L_1,+,0,-),(M,+,0,-))$-clonoid generated by $\hat h$, such that 
$$f(\bar x) = \proj{L_1} f(x_1,\ldots,x_n) = \hat f(\proj{M}(\bar x)).$$
By Corollary \ref{corollary:clonoid}, we obtain such an $f$ for all function $\hat f$ that only depend on $M/H$ for some hyperplane $H < M$ and map to cyclic subgroup $\langle l \rangle$. Note that the unary polynomials of $\algA$ that preserve $0$ are exactly the operations of the form $(l_1,0,\ldots,0) \mapsto (r \cdot l_1,0,\ldots,0)$, when restricted to $L_1\times \{0\} \times \cdots \times \{0\}$ (where $r \cdot l_1$ is a scalar multiplication in the module associated to $\algL_1$). Thus, if we further close by compositions with all unary polynomials $g$ of $\algA$ that preserve $0$, we can even obtain all maps $\hat f$ whose image is in the \emph{submodule} generated by $l$. If this submodule is equal to $\algL_1$, we are done. Otherwise we apply Observation \ref{observation:subdivision}, to refine the wreath product accordingly. Note that, as in Corollary \ref{corollary:clonoid} $\hat f$ has a representation as a sum of expressions $g \circ \hat t_1(\sum_{i=1}^n b_i \cdot u_i + c)$ that can be computed in exponential time $2^{O(n)}$. If we substitute all occurrences of $+$ by circuits defining the loop multiplication $+$, and unary operations $\hat t_1$ and $g$ by polynomials defining them, we are done.
\end{proof}

\section{Proof of Theorem \ref{theorem:main}} \label{sect:proof}

In this section we complete the proof of Theorem \ref{theorem:main}. Using Proposition \ref{proposition:main} we construct suitable subexponential reductions from 3-SAT, respectively $C$-COLOR, i.e. the problem of coloring graphs with $C$-colors, to $\CSAT(\algA)$ and $\CEQV(\algA)$. This, together with the following two consequences of the exponential time hypothesis (ETH) then implies Theorem \ref{theorem:main}.

\begin{theorem} \label{theorem:ETC}
Assume that the exponential time hypothesis is true. Then 
\begin{enumerate}
\item $3$-SAT cannot be solved in time $2^{o(n+m)}$, where $n$ is the number of variables, and $m$ is the number of clauses in the input formula.
\item For any $C>2$, $C$-COLOR cannot be solved in time $2^{o(|V|+|E|)}$, where $V$ is the vertex set, and $E$ the edge set of an input graph.
\end{enumerate}
\end{theorem}

We recall that the original definition of the exponential time hypothesis \cite{IP-ETH} is the statement that there is a $c>0$ such that $3$-SAT cannot be solved by any algorithm of complexity $2^{cn + o(n)}$. Statement (1) follows from ETH by the so called Sparsification Lemma in \cite{IP-ETH} (see e.g. Theorem 14.4 in \cite{CFKLLMPPS-parameterized-algorithms}). Statement (2) follows directly from (1) and the existence of linear reductions from $3$-SAT to $C$-COLOR (cf. Theorem 14.6 in \cite{CFKLLMPPS-parameterized-algorithms}).

\begin{proof}[Proof of Theorem \ref{theorem:main}]
Let $\algA$ be a nilpotent Mal'tsev algebra of Fitting length $k \geq 2$. By Lemma \ref{lemma:upperF} it has an upper Fitting series $\mathbf 0_A = \lambda_0 < \lambda_1 < \cdots < \lambda_{k-1} < \lambda_k = \mathbf 1_A$ of length $k$. Further let $0 \in A$ be an arbitrary constant. 

First, note that there is a congruence $\rho > \lambda_{k-1}$, such that $\algA / \lambda_{k-1} = \algM \otimes (\algA/ \rho)$ and there is a polynomial $h(\bar x) \in \Pol(\algA)$, which maps $A/\lambda_{k-1}$ to $[0]_{\rho/\lambda_{k-1}} = M \times \{0\}$. This follows directly from Lemma \ref{lemma:verbalminimal}. Furthermore, we can assume that $\exp(\algM) = p$ for a prime $p$ (if this is not the case, we apply Observation \ref{observation:subdivision} to $\algM$).

We are first going to prove the theorem for the case in which $p$ is an odd prime. Then, the idea is to reduce p-COLOR, i.e. the p-coloring problem for graphs in subexponential time to $\CSAT(\algA)$ (or $\CEQV(\algA)$ respectively). To construct such a reduction, we first need to prove the following claim: \\
 
\textbf{Claim 1:} There is an elementary wreath product representation $\algA = \bigotimes_{i = 1}^m \algL_i \otimes \algM \otimes (\algA/ \rho)$, a hyperplane $H < (M,+,0,-)$ and an element $l \in L_1$, such that for every $n \in \N$, there are circuits $t_n(\bar x)$ of arity $n^{k-1}$, and size $2^{O(n)}$ such that for $\bar x \in ([0]_\rho)^{n^{k-1}}$:
\begin{itemize}
\item $\proj{M} (t_n(\bar x)) = 0$, $\proj{L_j} (t_n(\bar x)) = 0$ for all $1 < j \leq m$ and 
\item $\proj{L_1} (t_n(\bar x)) = \begin{cases} 0, &\text{ if } \exists i: \proj{M}(x_i) \in H \\
l \neq 0, &\text{ else.}\end{cases}$\\
\end{itemize}

\begin{proof}[Proof of Claim 1:] We prove the statement by induction on $k$. If $k = 2$, this straightforwardly follows from Proposition \ref{proposition:main}. For an induction step $k-1 \to k$, note that by the induction hypothesis, there is already a wreath product representation of $\algA/\lambda_1 = \bigotimes_{i = 1}^r \algK_i \otimes \algM \otimes \algA/\rho$, an element $a \in K_1$ and circuits $s_n(x_1,x_2,\ldots,x_{n^{k-2}})$ of size $2^{O(n)}$, such that, when restricted to $\bar x \in [0]_{\lambda_1}^{n^{k-2}}$:
\begin{itemize}
\item $\proj{K_j} (s_n(\bar x)) = 0$ for all $1 < j \leq r$ and 
\item $\proj{K_1} (s_n(\bar x)) = \begin{cases} 0 &\text{ if } \exists i: \proj{M}(x_i) \in H \\
a &\text{ else}.\end{cases}$
\end{itemize}
Without loss of generality we can assume that the module associated with $\algK_1$ is cyclic, and generated by $a$ (otherwise we refine the wreath product accordingly, cf. Observation \ref{observation:subdivision}). Let $\rho_1 > \lambda_1$ be the congruence such that $\algA/\rho_1 = \bigotimes_{i = 2}^r \algK_i \otimes \algM \otimes \algA/\rho$. By Proposition \ref{proposition:main}, there is an elementary wreath product representation $\algA = \bigotimes_{j =1}^m \algL_j \otimes \bigotimes_{i = 1}^r \algK_i \otimes \algM \otimes \algA/\rho$, and there are circuits $f_n(x_1,x_2,\ldots,x_{n})$ of length $2^{O(n)}$, and a hyperplane $H_1 < (K_1,+,0,-)$ such that for $\bar x \in [0]_{\rho_1}^n$
\begin{itemize}
\item $\proj{L_j} (f_n(\bar x)) = 0$ for all $1 < j \leq r$ and 
\item $\proj{L_1} (f_n(\bar x)) = \begin{cases} 0 &\text{ if } \exists i: \proj{L_1}(x_i) \in H_1 \\
l &\text{ else},\end{cases}$
\end{itemize}

In the case that $a \notin H_1$, we simply set 
$$t_n = f_n (s_n (x_1,\ldots,x_{n^{k-2}}),\ldots,s_n (x_{n^{k-2}(n-1)+1}\ldots,x_{n^{k-1}})).$$ By construction $t_n$ has the desired property with respect to the wreath product representation $\bigotimes_{j =1}^m \algL_j \otimes \bigotimes_{i = 1}^r \algK_i \otimes \algM \otimes \algA/\rho$ and is of size $2^{O(n)} + n 2^{O(n)} = 2^{O(n)}$. In the case that $a \in H_1$, note that, since $\algK_1$ is generated by $a$ as a module, there must be an $e \in \Pol^{1}(\algK_1)$ with $e(a) \notin H_1$ and $e(0)=0$. We then use $e \circ s_n$ instead of $s_n$. This finishes the proof of Claim 1.\end{proof}

We are next going to use the circuits $t_n$ to reduce $p$-COLOR to $\CSAT(\algA)$ in subexponential time. For this let $G = (V,E)$ be an undirected graph. We are going to model the vertices $v\in V$ by terms $h_v := h(\bar x_v)$, where $h$ is the polynomial defined above, and $\bar x_v$ is a tuple of distinct variables (different for every $v$). By construction of $h$, the image of $h_v$ is in $[0]_\rho$, and the projection of $h_v$ to $M$ is surjective.

Without loss of generality let us assume that the graph $G$ has $|E| = n^{k-1}$ many edges, for some $n$ (otherwise we count some of the edges multiple times), and that $G$ is connected (and hence $|V|\leq |E|+1$). Then, let us consider the polynomial $t_{n} ( (h_v / h_w)_{(v,w) \in E})$ (where '$/$' denotes the right division of the loop multiplication). With respect to the wreath product representation $\algA = \bigotimes_{i = 1}^m \algL_i \otimes \algM \otimes \algA/\rho$, this polynomial evaluates to $(0,\ldots,0)$ if there is a pair $(h_v,h_w)$, such that $\proj{M}(h_v)$ and  $\proj{M}(h_w)$ are in the same coset of $H$ and $(l,0,\ldots,0)$, otherwise. So $(V,E)$ is $p$-colorable if the equation $t_{|E|} ( (h_v / h_w)_{(v,w) \in E}) = (l,0,\ldots,0)$ has a solution, and it is not $p$-colorable if the identity $t_{|E|} ( (h_v / h_w)_{(v,w) \in E}) = (0,\ldots,0)$ holds. 

The circuits constructed in this reduction have size $2^{O(n)}$, where $n = |E|^{(k-1)^{-1}}$. So $p$-coloring a graph $G$ can be reduced to $\CSAT(\algA)$ (and $\CEQV(\algA)$) in sub-exponential time $2^{c\cdot |E|^{(k-1)^{-1}}}$ for some $c>0$. If the exponential time hypothesis holds, then by Theorem \ref{theorem:ETC} (2) there is a lower bound of $2^{o(|E|+|V|)}=2^{o(|E|)}$ on the time complexity of $p$-coloring a connected graph. Thus, our instance of size $S = 2^{c\cdot |E|^{(k-1)^{-1}}}$ of $\CSAT(\algA)$ cannot be solved faster than in time $2^{o(|E|)}= 2^{o(\log(S)^{k-1})}$. The same result holds for $\CEQV(\algA)$; this finishes the proof for the case $\exp(\algM) = p > 2$.\\

In the case $\exp(\algM) = p = 2$, we similarly construct a sub-exponential reduction of 3-SAT to $\CSAT(\algA)$ (respectively $\CEQV(\algA)$). Such a reduction can be obtained, if, instead of the circuits $t_n$ in Claim 1, we use circuits $t_n'$ as in the following claim:\\
 
\textbf{Claim 2:} There is an elementary wreath product representation $\algA = \bigotimes_{i = 1}^m \algL_i \otimes \algM \otimes (\algA/ \rho)$, a hyperplane $H < (M,+,0,-)$ and an element $l \in L_1$, such that for every $n \in \N$, there are circuits $t_n'(\bar x)$ of arity $3n^{k-1}$, and size $2^{O(n)}$ such that for values of $\bar x$ in $([0]_\rho)$:
\begin{itemize}
\item $\proj{M} (t_n'(\bar x)) = 0$, $\proj{L_j} (t_n'(\bar x)) = 0$ for all $1 < j \leq m$ and 
\item $\proj{L_1} (t_n'(x_{1,1},x_{1,2}, x_{1,3},x_{2,1},\ldots, x_{n^{k-1},3})) = \begin{cases} 0 &\text{ if } \exists i: \proj{M}(x_{i,1}),\proj{M}(x_{i,2}), \proj{M}(x_{i,3}) \in H \\
l &\text{ else.}\end{cases}$\\
\end{itemize}

Except for the base case $k=2$ (which follows from Proposition \ref{proposition:main}), the inductive construction is the same as for $t_n$ in Claim 1. We leave the proof of Claim 2 to the reader (and refer also to the analogous construction in \cite[Section 3.2]{IKK-intermediate}). We can then construct a reduction from 3-SAT to $\CSAT(\algA)$ respectively $\CEQV(\algA)$ by identifying every variable $v$ of an instance with the polynomial $h_v = h(\bar x_v)$ and every negated variable $\neg v$ with $h_{\neg v} = b\cdot h(\bar x_v)$, where $b$ is an element of $[0]_{\rho} \setminus H$. Then, a 3-CNF $(L_{1,1} \lor L_{1,2} \lor L_{1,3}) \land (L_{2,1} \lor L_{2,2} \lor L_{2,3}) \land \cdots $  with literals $L_{i,j}$ is satisfiable if and only if $t_n'(h_{L_{1,1}}, h_{L_{1,2}}, h_{L_{1,3}}, h_{L_{2,1}},\ldots) = (l,0,\ldots,0)$ has a solution in $\algA$. It is unsatisfiable if and and only if the identity $t_n'(h_{L_{1,1}}, h_{L_{1,2}}, h_{L_{1,3}}, h_{L_{2,1}},\ldots) \approx (0,0,\ldots,0)$ holds in $\algA$. As in the previous case, this gives rise to a subexponential reduction of 3-SAT to $\CSAT(\algA)$, respectively $\CEQV(\algA)$. Together with Theorem \ref{theorem:ETC} (1) gives us the lower bounds of $2^{o(\log(S)^{k-1})}$ for $\CSAT(\algA)$ and $\CEQV(\algA)$ under the exponential time hypothesis.
\end{proof}

\section{Discussion} \label{sect:open}
We finish by discussing the two main directions for generalizations of Theorem \ref{theorem:main}.

As shown in \cite{IKKW-solvablegroups}, $\PolSAT(\mathbf G)$ for finite groups $\mathbf G = (G,\cdot,1,^{-1})$ of Fitting length $k$, exhibits the same quasipolynomial lower bounds as $\CSAT(\algA)$ for nilpotent algebras of Fitting length $k$ (assuming ETH).  This begs the question, if the lower bounds in Theorem \ref{theorem:main} even hold for $\PolSAT(\algA)$.  Or, more generally,  one may ask if $\PolSAT(\algA)$ has quasipolynomial lower bounds for all finite \emph{solvable} Mal'tsev algebras, which would subsume both the results of this paper and \cite{IKKW-solvablegroups}.  And, in fact,  in the conference paper \cite{IKK-Maltsev}, Idziak, Kawa\l{}ek and Krzaczkowski proved exactly this. Their proof uses tame congruence theory to construct \emph{polynomials} with properties similar to the \emph{circuits} $t_n$ in our proof of Theorem \ref{theorem:main}. 

This leads to the question, whether also in our proof of Theorem \ref{theorem:main}, it is enough to consider polynomials.  An analysis shows that the only point, in which we need to be careful in substitute circuits by polynomials is Proposition \ref{proposition:main}; in it we used circuits to efficiently express the iterated products with respect to the loop operation $+$.  Thus, if $+$ (or the Mal'tsev operation $m(x,y,z)$) is also part of the signature of the nilpotent algebra $\algA$, then the lower bounds of Theorem \ref{theorem:main} even hold for $\PolSAT(\algA)$.  Also, for general $\algA$ one can find a workaround by using $+$-terms of depth $\log(n)$ to define the ``sum'' of $n$ variables (using balanced trees), and then substitute every $+$ by the polynomial defining it (we would like to thank the anonymous reviewer of a previous version of this paper for this suggestion). However,  as \cite{IKK-Maltsev} deals with this problem anyways, we decided not to complicate our proof by adding this further addition.

Secondly,  recall that Theorem \ref{theorem:main} almost completly finishes the complexity classification of $\CSAT$ and $\CEQV$ for finite Mal'tsev algebras, assuming ETH and Conjecture \ref{conjecture:BST}. By Corollary \ref{corollary:main}, nilpotent Mal'tsev algebras of Fitting length $k \leq 2$ are the only algebras from congruence modular varieties, for which $\CSAT(\algA)$ and $\CEQV(\algA)$ could be tractable under the exponential time hypothesis.  While they are both known to be tractable in the supernilpotent case ($k=1$) \cite{kompatscher-EQSAT-supernilpotent, AM-commutator},  the classification for $k=2$ is still open. There are some tractability results for 2-nilpotent algebras \cite{IKK-examples, KKK-CEQV-2nilpotent}, i.e. algebras that are the wreath product of two affine algebras.  We further know, that not all algebras of Fitting length 2 (and not even all 2-nilpotent algebras) have tractable $\CSAT$ problems: Examples with proper quasipolynomial lower bounds (assuming ETH) were constructed in \cite[Example 7.2.]{IKK-modularcircuits}. In fact, those examples indicate that polynomial time algorithms for $\CSAT$ might only be feasible for algebras $\algA$, in which $\gamma_1 = \bigwedge_{i\geq 1} [\mathbf 1_A,\ldots, \mathbf 1_A]_i$ has blocks of prime power cardinality (and direct products of such algebras).  On the other hand, it is still consistent with our current knowledge, that all circuit equivalence problems $\CEQV(\algA)$, for nilpotent Mal'tsev algebras of Fitting length $k = 2$ are tractable. 

\bibliographystyle{alpha}
\bibliography{global_mk}
\end{document}